\newlength\FirstColLength
\newtheorem{theorem}{Theorem}[section]
\newtheorem{lemma}[theorem]{Lemma}
\newtheorem{corollary}[theorem]{Corollary}
\newtheorem{observation}[theorem]{Observation}
\newtheorem{definition}[theorem]{Definition}
\title{Exactly Optimal Deterministic Radio Broadcasting with Collision Detection}
\date{}
\author{
    Koko Nanahji \\
    University of Toronto \\
    Toronto, Canada \\
    koko.nanahji@mail.utoronto.ca \\
}
\begin{document}    
    \pagenumbering{arabic}
    \maketitle

\section*{Abstract}
    We consider the broadcast problem in synchronous radio broadcast models with collision detection.
    One node of the network is given a message that must be learned by all nodes in the network. 
    We provide a deterministic algorithm that works on the beeping model, which is a restricted version of the radio broadcast model with collision detection.
    This algorithm improves on the round complexity of previous algorithms.
    We prove an exactly matching lower bound in the radio broadcast model with collision detection. 
    This shows that the extra power provided by the radio broadcast model with collision detection does not help improve the round 
    complexity.

\section{Introduction}
        Broadcast is a fundamental problem in distributed computing that is frequently used as a building block in other problems. 
        We study the broadcast problem in synchronous radio networks, modeled as undirected connected graphs,
        where nodes do not have identifiers and, initially, have no knowledge about the network.
        In this problem, one node of the network is given a message that must be learned by all nodes in the network. 
            
        In the radio broadcast model with collision detection, a listening node receives the entire message of its neighbour if that neighbour is its only neighbour which transmits that round. If more than one of its neighbours transmit, then it receives a collision signal.
        The beeping model is a restricted version in which nodes can only transmit the collision signal.
        
        Chlebus, G{\k{a}}sieniec, Gibbons, Pelc, and Rytter
        \cite{Bogdan2002}, provide a deterministic broadcasting algorithm that works on the beeping model.
        This algorithm takes $\Theta( D \cdot \log{\mu})$
        rounds to broadcast 
        any message $\mu \in \mathbb{Z}^{+}$, 
        where $D$ is the source eccentricity of the network. 
        Czumaj and Davies \cite{CZUMAJ2019} and, independently, Hounkanli and Pelc \cite{Pelc2015}, provide an asymptotically optimal deterministic broadcasting algorithm in the beeping model that takes $D+6\cdot \lceil \log_2{\mu}\rceil+11$ rounds to broadcast any message $\mu \in \mathbb{Z}^{+}$. 
        They also prove a lower bound of $\Omega(D+\log{\mu})$ rounds.
        
        In this paper, we provide a deterministic algorithm to broadcast any value from a predefined set of values that has exactly optimal round complexity.
        In particular, in Section \ref{subsection:algorithm}, 
        we provide a deterministic algorithm to broadcast any value from set $\{1,\ldots, m\}$ that works on the beeping model and 
        takes $D+r(m)$ rounds, where $r(m) \leq 2\cdot \lceil \log_2{m}\rceil +2$. 
        in Section \ref{section:lower_bound}, we prove that 
        $D + r(m)$ is a lower bound
        in the radio broadcast model with collision detection. 
        This shows that 
        allowing nodes to send arbitrarily long messages instead of just a collision signal
        does not help improve the round complexity.

\section{Model}

         There are a number of variants of the synchronous radio broadcast model that are considered. The three most relevant to this paper are the following: 
        
        \begin{itemize}
        \item
            {\bf Radio broadcast model without collision detection.}
            In this model, the nodes communicate in synchronous rounds. 
            At each round, a node can idle, listen, or transmit the same value to all its neighbours. 
            If a node transmits or idles, it gets no feedback from the environment at that round. 
            If a node listens and none of its neighbors transmit
            or at least two of its neighbors transmit, it receives nothing.
            If a node listens and exactly one of its neighbors transmits, then it receives the value transmitted by that neighbour.
            
   \item    
            {\bf Radio broadcast model with collision detection.}
            This model is same as the previous model except that, if at least two neighbors of a node transmit, it receives a collision signal.
            Thus, a node that listens can distinguish between none of its neighbours transmitting and at least two of its neighbours transmitting. 
        
   \item
            {\bf Beeping model.}
            This is a special case of the radio broadcast model with collision detection, in which the only message a node can transmit is the collision signal (also called a beep).
            Thus, in the beeping model, when a node listens, it receives a beep if at least one of its neighbours transmits at that round.
        \end{itemize}
        \medskip
        
        We assume that, initially, the nodes are indistinguishable and have no knowledge about the network.
        Each node starts by listening (without recording any information) up to and including the first round in which at least one of its neighbours transmits. 
        We say that a node \textit{wakes up} at some round $t$, if at least one of its neighbours transmits for the first time at round $t$.
        A node that wakes up at round $t$ starts executing the given algorithm at round $t+1$.
        
        In the broadcast problem, an external source wakes up one of the nodes in the network by giving it a value from a finite set of possible values. This node is called the source node.
        We define round $1$ of an execution to be the round immediately following the round in which the source node is woken up. 
        The nodes at distance $\ell$ from the source node are said to be at level $\ell$. The maximum level of any node in the network is called the source eccentricity. 
        
\section{Related Work}
            Early approaches to solve the broadcast problem 
            used randomization.
            Bar-Yehuda, Goldreich, and Itai \cite{BARYEHUDA1992104} were first to provide a randomized algorithm for the broadcasting problem in the radio broadcast model without collision detection.
            They assumed that nodes are anonymous (i.e., they do not have any identifiers) and they are given the size of the network and the maximum number of neighbours of any node in the network.
            Their algorithm takes $\mathcal{O}(D \cdot \log{n}+ \log^2{n})$ rounds with high probability, where $n$ is the size of the network and $D$ is the source eccentricity of the network. 
            Later, Czumaj and Rytter \cite{Czumaj2003} and, independently, Kowalski and Pelc \cite{pelc2003} provided a randomized broadcasting algorithm 
            where 
            nodes have distinct identifiers and they are given the size of the network.
            Their algorithm takes expected $\mathcal{O}(D\log{\frac{n}{D}}+ \log^2{n})$ rounds.
            Kushilevitz and Mansour \cite{Mansour1993} proved an expected $\Omega(D\log{\frac{n}{D}})$ round lower bound, when nodes only know the size and the diameter of the network.
            Alon, Bar-Noy, Linial, and Peleg \cite{Alon1991} proved an $\Omega(\log^2{n})$ round lower bound, even if nodes have distinct identifiers and every node knows the entire network.
            Thus, the expected round complexity of the broadcast problem
            without collision detection is in $\Theta(D\log{\frac{n}{D}}+ \log^2{n})$ 
            when nodes have distinct identifiers
            and are given the size of the network.
            
            Ghaffari, Haeupler, Khabbazian \cite{ghaffari2015randomized} studied randomization in the radio broadcast model with collision detection.
            They provided a broadcasting algorithm that takes $\mathcal{O}(D+\log^6{n})$ rounds, with high probability, in which the nodes are given the size of the network and the source eccentricity of the network.
            They also presented a broadcasting algorithm that takes $\mathcal{O}(D+\log^2{n})$ rounds, with high probability, in which the nodes know the entire network topology. 
      
        Deterministic approaches have also been studied.
        Most of the papers about the deterministic broadcast problem are for
        the radio broadcast model without collision detection.
                Initially, researchers considered the case in which each node has a distinct identifier.
                Some papers assume that nodes are not given any information about the network, except possibly the size of the network $n$, the maximum number of neighbours of any node in the network $\Delta$, and the source eccentricity of the network $D$.
                There are many broadcasting
                algorithms for this setting
                \cite{Bogdan2002,
                Bogdan2000,
                Chrobak2002,
                CLEMENTI2003}.
                The fastest such algorithm was provided by Czumaj and Davies \cite{Czumaj2018}, which takes $\mathcal{O}( n\log{D} \log{\log \frac{D\Delta}{n}})$ rounds.

                Other papers consider algorithms designed with knowledge 
                of the entire network topology.
                They also assume that nodes have distinct identifiers.
                Chlamtac \cite{wave_expansion_approach} provided a deterministic algorithm for the broadcast problem that takes $\mathcal{O}(D \cdot \log^2{n})$ rounds.
                Kowalski and Pelc \cite{Optimal_Deterministic_Broadcasting_in_Known_Topology_Radio_Networks} provided an optimal  deterministic algorithm, which takes $\mathcal{O}(D + \log^2{n})$ rounds.
            
                Another approach was  
                to carefully assign short labels to the nodes of the network, instead of assuming they have distinct identifiers.
                Ellen, Gorain, Miller, and Pelc \cite{Ellen2019}
                showed that
                broadcast can be done in any radio broadcast network after assigning $2$-bit labels to each node.
                Their algorithm takes $\Theta(n)$ rounds.
                They also showed that $1$-bit labels were sufficient for a class of networks.
                Gewu, Potop-Butucaru, Rabie \cite{Bu2021} proved that $1$-bit labels are sufficient to solve the broadcast problem for a larger class of networks.
                Ellen and Gilbert \cite{Ellen2020} provide an algorithm that uses $3$-bit labels and takes $\mathcal{O}(D \log^2 n)$ rounds and an algorithm that uses $4$-bit labels and takes $\mathcal{O}(\sqrt{Dn})$ rounds.
                Ellen, Gorain, Miller, and Pelc \cite{Ellen2019} observed that, without identifiers and without collision detection, it is impossible to solve the broadcast problem deterministically in some radio networks because it is impossible to break symmetry.
                
                \medskip
                
                The known broadcasting algorithms for the beeping model
                \cite{Bogdan2002,CZUMAJ2019,Pelc2015} 
                do not assume that nodes have identifiers.
                Some of the techniques used in these algorithms are 
                closely related to techniques used in our paper, so we describe them in more detail.
                
                In the algorithm by Chlebus, G{\k{a}}sieniec, Gibbons, Pelc, and Rytter \cite{Bogdan2002},
                all bits of the message are transmitted
                to nodes at the same level before
                any bits of this message are transmitted
                to nodes that are at higher levels. 
                The algorithm to broadcast message $\mu$ is executed in phases of $\mathcal{O}(\log_2{\mu})$ rounds. 
                At each phase $1 \leq i \leq D$, the nodes at level $i-1$  transmit the message (bit by bit) to the nodes at level $i$. 
                Thus, this algorithm takes $\mathcal(D \cdot \log_2{\mu})$ rounds.
                
                The algorithms presented by Czumaj and Davies \cite{CZUMAJ2019} and Hounkanli and Pelc \cite{Pelc2015} are very similar. 
                Both rely on the \textit{Beep Waves} method, introduced by Ghaffari and Haeupler \cite{ghaffari2014near}, to relay a message using beeps. 
                The idea is to broadcast the bits of the message level by level in a pipelined fashion.
                At every third round, starting with round $1$,
                the source node announces a bit of the message. 
                The source node beeps at that round if and only if the bit is $1$.
                When any other node learns a bit of the message, it conveys the value of that bit to its neighbours at the next level during the following round. 
                Then it idles for one round,
                ignoring the information those neighbours send.
                In particular, nodes at level $\ell \geq 1$ learn the $i^{th}$ bit of the message at round $\ell+3(i-1)$.
                During the execution of Beep Waves, a beep of the source node propagates like a wave throughout the network, spreading to the next level at each round. 
                
                In all three algorithms, a non-source node learns the message  bit by bit over several rounds and, hence, it needs a mechanism to detect the end of the message. If the set of messages is $\{1,\ldots,m\}$, it suffices for the source node to send
                the $\lceil \log_2 m \rceil$ bits of a binary encoding of the message.
                However, if the message can be any positive integer, the encoding of the message must be self-delimiting. 
                For example, Czumaj and Davies \cite{CZUMAJ2019} start the encoding with $10$, then
                duplicate each bit in the binary representation of the message, and end the encoding with $10$.
                Hence, binary representation of $5$ becomes $1011001110$ in their encoding.

\section{Algorithm}   \label{subsection:algorithm}
    First, we describe an algorithm in the beeping model in which the source node can broadcast any value from a prefix-free set of binary strings. 
    The algorithm uses a variant of the \textit{Beep Waves} method.
    However, the pipelining operates slightly differently, where transmitting a $0$ takes fewer rounds.  
    Later, we design a new encoding scheme 
    which takes this into account. The combination of the algorithm and the encoding scheme is shown to be optimal in the next section.

    
    Given a binary string $x$, let $|x|$ denote the length of $x$, let $x_i$ denote the $i^{th}$ character of $x$ for $1 \leq i \leq |x|$, and let $\textit{pre}_j(x)$ be the prefix of length $j$ of $x$, where $0 \leq j \leq |x|$. 
    
    Let $S$ be a prefix-free set of binary strings (i.e. no element of $S$ is a prefix of another element of $S$).
    First, we provide a detailed description of the algorithm to broadcast any string $s \in S$.
    
    \begin{center}
    \SetAlgoNlRelativeSize{0}
    \begin{algorithm}[H]
        \SetAlgoLined
        \DontPrintSemicolon 
        \SetAlgoNoLine
        Beep \;
        Listen \;
        \lIf{received nothing \hspace{0.13cm}}{Terminate}
        Idle \;
        \For{$i$ from $1$ to $|s|-1$}{
            \If(Beep){$s_i = 1$ \\}{
                \hspace{0.33cm} Idle \;
                \hspace{0.33cm} Idle
            }
            \lElse{\hspace{0.2cm}Idle}
        }
        \lIf{$s_{|s|} = 1$ \hspace{0.13cm}}{Beep}
        Terminate
        \caption{Algorithm for the Source Node To Broadcast $s$ After Being Woken Up}
        \label{algorithm:broadcast_K_values_source}
    \end{algorithm}
    \end{center}
    
    \begin{center}
    \SetAlgoNlRelativeSize{0}
    \begin{algorithm}[H]
        \SetAlgoLined
        \DontPrintSemicolon 
        \SetAlgoNoLine
        $\textit{acc}_v \longleftarrow \epsilon$\;
        Beep \;
        Listen \;
        \leIf{received nothing \hspace{0.13cm}}{$\textit{terminal}_v \longleftarrow \textit{True}$ \hspace{0.01cm}} {$\textit{terminal}_v \longleftarrow \textit{False}$}
        \While{$\textit{acc}_v \notin S$} {
            Listen \;
            \If($\textit{acc}_v \longleftarrow \textit{acc}_v  \cdot 1$){received signal \\ }
               {
                \hspace{0.35cm}
                \If(Beep){$\textit{acc}_v \notin S$ \\ \hspace{0.45cm} }{
                    \hspace{0.77cm} Idle 
                }
                \hspace{0.35cm}
                \ElseIf(Beep){$\textit{terminal}_v = \textit{False}$ \hspace{0.13cm}}{}
            }
            \lElse{$\textit{acc}_v \longleftarrow \textit{acc}_v  \cdot 0$}
        }
        Terminate
        \caption{Algorithm for a Non-Source Node $v$ After Being Woken Up}
        \label{algorithm:broadcast_K_values_non_source}
    \end{algorithm}
    \end{center}

    The pseudocode for the source node is presented in Algorithm \ref{algorithm:broadcast_K_values_source}. 
    In the first round after it is woken up, the source node beeps. 
    In the second round, the source node listens. 
    If it receives nothing, then it terminates. Otherwise, it idles for one round and then, 
    for $1 \leq i \leq |s|-1$, it does one of the following. 
    If $s_i=0$, then the source node idles. 
    If $s_i=1$, then the source node beeps and idles for two rounds. 
    Finally, if the last character of $s$ is $1$, then the source node beeps. 
    
    The pseudocode for an arbitrary non-source node, $v$, is presented in Algorithm  \ref{algorithm:broadcast_K_values_non_source}. 
    Its local variable $\textit{acc}_v$
    is used to accumulate the characters of $s$ and is initially set to the empty string.
    Once $v$ is woken up, it beeps in the next round and listens in the following round. 
    It receives a beep in this round if and only if 
    it is connected to a node at a higher level.
    This information is recorded in its local variable $\textit{terminal}_v$.
    
    While node $v$ has not yet received the entire message,
    it repeatedly executes phases.
    Each phase starts by a listen round, in which it learns the next bit of the message. Specifically, receiving nothing means this bit is 0 and receiving a beep 
    means this bit is 1. In the latter case, it beeps at the next round to relay this information to the nodes at the next level and then idles for one round.
        In the last phase, immediately after it has learned that the last bit of the message is 1,
    a node that is not connected to a node at a higher level can immediately terminate. If it is
    connected to a node at a higher level, then it can terminate
    after beeping.
    \medskip
    
    Now, we prove the correctness and bound the round complexity of this algorithm when broadcasting message $s$.
    Since each node beeps after waking up, 
    the nodes at level $\ell$ wake up at round $\ell$.
    Moreover, if there are nodes at level 1, then the source node receives a signal at round 2 and, hence, continues with the rest of its algorithm. Likewise, a non-source node $v$ sets $\textit{terminal}_v$ to $\textit{True}$ if and only if it is connected to a node at a higher level.
    
    Let $C(x)$ be the number of $0$'s in the binary string $x$ plus three times the number of $1$'s in $x$. 
    In particular, $C(x)=0$, if $x$ is the empty string. 
    
    Observe that, the source node executes $3$ rounds before executing the first iteration of the loop. 
    If $s_i = 0$, then the $i^{th}$ iteration of the loop takes $C(s_i)=1$ round. 
    If $s_i = 1$, then the $i^{th}$ iteration of the loop takes $C(s_i)=3$ rounds. 
    Therefore, $3+C(pre_{i-1}(s))+1$ is the first round of the $i^{th}$ iteration, for $1 \leq i \leq |s|-1$.
    \begin{observation} \label{observation:k_values:algorithm:source_node_beep_rounds}
        Suppose the source node does not terminate at the end of round $2$.
        Then, for all $1 \leq i \leq |s|$, then the source node beeps at round $3+C(pre_{i-1}(s))+1$ if and only if $s_i=1$.
        Furthermore, the source node terminates at round $3+C(pre_{|s|-1}(s))+s_{|s|}$.
    \end{observation}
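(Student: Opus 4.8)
The plan is to track the source node's round counter as it passes through the three structural phases of Algorithm \ref{algorithm:broadcast_K_values_source}: the three fixed opening rounds (beep, listen, idle), the main loop over $i=1,\dots,|s|-1$, and the final conditional beep followed by termination. The discussion preceding the statement already supplies the two facts I will lean on, namely that the loop body for index $i$ consumes exactly $C(s_i)$ rounds ($1$ round if $s_i=0$ and $3$ rounds if $s_i=1$) and that, consequently, the $i$th iteration begins at round $3+C(pre_{i-1}(s))+1$ for every $1\le i\le |s|-1$. Since $C$ is additive over concatenation, so that $C(pre_i(s))=C(pre_{i-1}(s))+C(s_i)$, these facts also tell me that the $i$th iteration \emph{ends} at round $3+C(pre_i(s))$. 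Because round counts add across disjoint phases, this pins down the exact round of every action.

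For the beep characterization I would split on whether $i<|s|$ or $i=|s|$. For $1\le i\le |s|-1$, the $i$th iteration occupies rounds $3+C(pre_{i-1}(s))+1$ through $3+C(pre_i(s))$, and within this block the source beeps if and only if $s_i=1$, in which case the beep is exactly its first round; otherwise it merely idles. As the iteration blocks are disjoint and exhaust all rounds strictly between the opening phase and the final step, the source beeps at round $3+C(pre_{i-1}(s))+1$ if and only if $s_i=1$. For $i=|s|$ I locate the round immediately after the loop: since iteration $|s|-1$ ends at round $3+C(pre_{|s|-1}(s))$, the final conditional runs at round $3+C(pre_{|s|-1}(s))+1$ and produces a beep there exactly when $s_{|s|}=1$. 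This also covers the degenerate case $|s|=1$, where the loop is empty and the same formula evaluates to round $4$, the successor of the opening idle.

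For the termination claim I would split once more on $s_{|s|}$. If $s_{|s|}=1$, the final conditional beep fires at round $3+C(pre_{|s|-1}(s))+1$ and the node terminates immediately afterward, so its last active round is $3+C(pre_{|s|-1}(s))+1$. If $s_{|s|}=0$, the conditional beep is skipped and the node terminates out of the loop with no extra action, so its last active round is the final loop round $3+C(pre_{|s|-1}(s))$. Treating $s_{|s|}\in\{0,1\}$ as an additive offset, both cases combine into the single expression $3+C(pre_{|s|-1}(s))+s_{|s|}$.

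The argument is in essence careful bookkeeping, so the one subtlety I anticipate is fixing the convention for ``terminates at round $t$'' and counting the loop/final-step boundary consistently — in particular, recognizing that a trailing $0$ costs the source no further round, which is precisely what makes the offset $s_{|s|}$ (rather than a constant) appear in the termination round. The empty-loop edge case should be checked directly against the formulas, but it introduces no new idea.
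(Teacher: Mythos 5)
Your proof is correct and takes essentially the same route as the paper, which states this as an Observation justified only by the immediately preceding bookkeeping: three opening rounds, iteration $i$ of the loop occupying $C(s_i)$ rounds and starting at round $3+C(\textit{pre}_{i-1}(s))+1$, plus the final conditional beep. Your explicit treatment of the $i=|s|$ case, the disjointness of the iteration blocks, the trailing-$0$ offset in the termination round, and the empty-loop case $|s|=1$ merely makes precise what the paper leaves implicit.
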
  
    
    We will show that when $D >0$, all nodes terminate by the end of round $D+C(pre_{|s|-1}(s))+3$.
    
    In the next two results, we identify the rounds in which each node learns each bit of the message. 
    First, we identify the round in which each node learns the first bit of $s$.
    Then, we generalize it to all bits of $s$.
    \begin{lemma} \label{lemma:k_values:algorithm:learn_first_character}
        For all $1 \leq \ell \leq D$, each node $v$ at level $\ell$ appends a character to $\textit{acc}_v$ for the first time at round $\ell+3$ and the appended character is $s_1$.
    \end{lemma}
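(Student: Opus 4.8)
The plan is to prove the statement by induction on the level $\ell$. The key structural observation is that every non-source node runs the same fixed prelude before it can learn any message bit: a node $v$ at level $\ell$ wakes at round $\ell$, beeps at round $\ell+1$, does its single terminal-detection \textbf{Listen} at round $\ell+2$, and only then enters the main loop, whose \emph{first} \textbf{Listen} occurs at round $\ell+3$. Since $\textit{acc}_v=\epsilon\notin S$ until the loop body executes, this first loop listen is exactly the round at which $v$ appends a character for the first time. So the entire lemma reduces to determining what $v$ hears at round $\ell+3$; I will show $v$ receives a beep there if and only if $s_1=1$, which forces the appended bit to equal $s_1$.

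For the base case $\ell=1$, a level-1 node performs its first loop listen at round $4$, and (since $\ell\le D$ forces $D\ge 1$, the source does not terminate at round $2$ and Observation~\ref{observation:k_values:algorithm:source_node_beep_rounds} applies) the source beeps at round $3+C(pre_0(s))+1=4$ precisely when $s_1=1$. Checking the prelude schedule of the node's other neighbours, every level-1 neighbour is at its own first loop listen at round $4$ and stays silent, and every level-2 neighbour is at its terminal-detection round $4$ and stays silent; hence the level-1 node hears a beep at round $4$ iff $s_1=1$ and appends $s_1$.

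For the inductive step I assume each level-$(\ell-1)$ node appends $s_1$ at round $(\ell-1)+3=\ell+2$. For $v$ at level $\ell$, whose neighbours lie at levels $\ell-1,\ell,\ell+1$, I read off from the prelude that at round $\ell+3$ no level-$\ell$ neighbour beeps (each is at its own first loop listen) and no level-$(\ell+1)$ neighbour beeps (each is at its terminal-detection round $\ell+3$). A level-$(\ell-1)$ neighbour $u$, having appended $s_1$ at round $\ell+2$, beeps at round $\ell+3$ iff $s_1=1$: if $s_1=0$ it appends $0$ and remains silent, and if $s_1=1$ it appends $1$ and relays with a beep. The one delicate point, which I expect to be the main obstacle, is the length-one message $s$ with $s_1=1$: there $\textit{acc}_u=s\in S$ after the append, so $u$ only beeps when $\textit{terminal}_u=\textit{False}$. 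But $u$ has $v$ as a higher-level neighbour, so $u$ heard $v$'s waking beep (round $\ell+1$) during its own terminal-detection round and thus has $\textit{terminal}_u=\textit{False}$, so $u$ beeps regardless. Since $v$ has at least one level-$(\ell-1)$ neighbour, $v$ hears a beep at round $\ell+3$ exactly when $s_1=1$, completing the induction.

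The remaining work is pure bookkeeping: verifying each neighbour's beep times up to round $\ell+3$ directly from the fixed prelude and the loop structure, and noting that in the beeping model several simultaneous relays from distinct level-$(\ell-1)$ neighbours still register as a single beep, so the ``if and only if'' is insensitive to how many neighbours transmit.
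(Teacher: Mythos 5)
Your proof is correct and takes essentially the same route as the paper's: induction on the level $\ell$, with the base case resting on Observation~\ref{observation:k_values:algorithm:source_node_beep_rounds} (the source beeps at round $4$ iff $s_1=1$) and the inductive step combining the round-$(\ell+2)$ append at level $\ell-1$ with the silence of level-$\ell$ and level-$(\ell+1)$ neighbours at round $\ell+3$. Your explicit handling of the $|s|=1$ edge case --- where the relaying neighbour $u$ beeps only because $\textit{terminal}_u=\textit{False}$, which holds since $u$ is adjacent to $v$ at a higher level --- is a worthwhile detail that the paper leaves implicit in its clause ``if $\textit{terminal}_w=\textit{False}$,'' but it is the same argument.
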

    \begin{proof}
        The source node beeps at round $4$ if and only if $s_1=1$.
        Each node $u$ at level $1$ wakes up at round $1$ and appends a character to $\textit{acc}_u$ for the first time at round $4$.
        The nodes at level $2$ wake up at round $2$ and listen at round $4$.
        Therefore, nodes at level $1$ receive a signal at round $4$ if and only if $s_1=1$.
        Since $u$ appends $1$ to $\textit{acc}_u$ if it receives a signal and appends $0$ if it receives nothing,
        $u$ appends $s_1$ to $\textit{acc}_u$ at round $4$.
        
        Let $2 \leq \ell \leq D$ and assume the claim is true for $\ell-1$. 
        In particular, each node $w$ at level $\ell-1$ appends $s_1$ to $\textit{acc}_w$ at round $\ell+2$. 
        From the pseudocode, if $\textit{terminal}_w=\textit{False}$, then $w$ beeps at round $\ell+3$ if and only if $s_1=1$. 
        Since nodes at level $\ell$ wake up at round $\ell$, 
        each node $v$ at level $\ell$ listens and appends $0$ or $1$ to $\textit{acc}_v$ for the first time at round $\ell+3$.
        If $\ell < D$, the nodes at level $\ell+1$ wake up at round $\ell+1$ and listen at round $\ell+3$.
        Therefore, $v$ receives a signal at round $\ell+3$ if and only if $s_1=1$. 
        Since $v$ appends $1$ to $\textit{acc}_v$ if it receives a signal and appends $0$ if it receives nothing,
        $v$ appends $s_1$ to $\textit{acc}_v$ at round $\ell+3$.
    \end{proof}
 
    \begin{lemma} \label{lemma:k_values:algorithm:main_lemma}
        For all $1 \leq j \leq |s|$, each node $v$ at level $1 \leq \ell \leq D$ appends a character to $\textit{acc}_v$ for the $j^{th}$ time at round $\ell+C(pre_{j-1}(s))+3$ and the appended character is $s_j$.
    \end{lemma}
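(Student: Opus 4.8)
The plan is to prove the statement by induction on the bit index $j$, with the base case $j=1$ being exactly Lemma \ref{lemma:k_values:algorithm:learn_first_character}. For the inductive step I fix $2 \le j \le |s|$, assume the claim holds for $j-1$ at every level, and prove it for $j$ by a nested induction on the level $\ell$, mirroring the structure of the proof of Lemma \ref{lemma:k_values:algorithm:learn_first_character}. Throughout I abbreviate $T = \ell + C(\textit{pre}_{j-1}(s)) + 3$, the round at which a node at level $\ell$ is claimed to make its $j$-th append, and I work with the fact (established before this lemma) that every neighbour of a level-$\ell$ node sits at level $\ell-1$, $\ell$, or $\ell+1$.

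First I would settle the timing, i.e.\ that a node $v$ at level $\ell$ reaches its $j$-th append to $\textit{acc}_v$ precisely at round $T$. By the outer induction hypothesis, $v$ makes its $(j-1)$-th append at round $\ell + C(\textit{pre}_{j-2}(s)) + 3$, appending $s_{j-1}$. Since $j-1 \le |s|-1$, the string $\textit{pre}_{j-1}(s)$ is a proper prefix of the element $s \in S$, so by prefix-freeness $\textit{acc}_v = \textit{pre}_{j-1}(s) \notin S$ and the while loop continues. The $(j-1)$-th iteration costs exactly $C(s_{j-1})$ rounds (one listen round if $s_{j-1}=0$; listen, beep, idle if $s_{j-1}=1$, which is the branch taken precisely because $\textit{pre}_{j-1}(s) \notin S$), so the next listen round occurs at $\ell + C(\textit{pre}_{j-2}(s)) + 3 + C(s_{j-1}) = T$, as claimed.

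It then remains to show that $v$ receives a signal at round $T$ if and only if $s_j=1$, for then $v$ appends $s_j$. The relaying neighbours are those at level $\ell-1$ (for the base case $\ell=1$ this role is played by the source): by the inner induction hypothesis, or by Observation \ref{observation:k_values:algorithm:source_node_beep_rounds} when $\ell=1$, they learn $s_j$ at round $T-1$; if $s_j=0$ they do not beep at round $T$, whereas if $s_j=1$ they beep at round $T$. The core of the argument, which I expect to be the main obstacle, is verifying that no other neighbour beeps at round $T$, so that $v$'s reception faithfully reflects $s_j$. For same-level neighbours this is the timing fact of the previous step, which applies to all nodes at level $\ell$ by the outer hypothesis: every level-$\ell$ node is in a listen round at round $T$. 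For a level-$(\ell+1)$ neighbour (if any) I would run a short case analysis on $s_{j-1}$ via the outer hypothesis: such a node learns $s_j$ only at round $T+1$, and at round $T$ it is either listening for bit $j-1$ (when $s_{j-1}=0$) or idling one round after relaying bit $j-1$ (when $s_{j-1}=1$); in neither case does it beep. Pinning down this pipelining bookkeeping, keeping the three levels' schedules aligned, is the delicate part.

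Finally I would treat the boundary of the relay at $j=|s|$, where after appending $s_{|s|}$ a level-$(\ell-1)$ node has $\textit{acc}=s\in S$ and exits its loop, so it relays the final bit only if it is non-terminal. This causes no difficulty because $v$ at level $\ell$ is itself a higher-level neighbour of its shortest-path predecessor $w$ at level $\ell-1$, whence $\textit{terminal}_w = \textit{False}$ and $w$ still beeps at round $T$ exactly when $s_{|s|}=1$. Combining the timing with the signal analysis shows that $v$ makes its $j$-th append at round $T$ with value $s_j$, which closes the inner induction on $\ell$ and then the outer induction on $j$.
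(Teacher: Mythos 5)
Your proposal is correct and follows essentially the same route as the paper's proof: the same lexicographic double induction (outer on $j$ with base case Lemma \ref{lemma:k_values:algorithm:learn_first_character}, inner on $\ell$), the same timing computation via prefix-freeness and the cost function $C$, the same case analysis showing nodes at level $\ell+1$ listen or idle at round $T$, and the same use of Observation \ref{observation:k_values:algorithm:source_node_beep_rounds} for $\ell=1$ and of $\textit{terminal}_w=\textit{False}$ for the final bit. You are in fact slightly more explicit than the paper on two points it leaves implicit -- that same-level neighbours cannot beep at round $T$ because the timing claim puts every level-$\ell$ node in a listen round then, and that every level-$(\ell-1)$ neighbour of $v$ is non-terminal -- but these are refinements of, not departures from, the paper's argument.
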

    \begin{proof}
        We will prove the claim by induction on $1 \leq j \leq |s|$ and $1 \leq \ell \leq D$.
        Note that, if $j=1$, then $C(pre_{j-1}(s))=0$. 
        Thus, by Lemma \ref{lemma:k_values:algorithm:learn_first_character}, the claim holds for all $1 \leq \ell \leq D$ when $j=1$.
        
        Let $2 \leq j' \leq |s|$ and $1 \leq \ell' \leq D$. 
        Assume the claim holds for $1 \leq \ell \leq D$ when $j=j'-1$ and for $1 \leq \ell \leq \ell'-1$ when $j=j'$. 
        
        Let $v$ be a node at level $\ell'$.
        It follows from the induction hypothesis that
        $v$ appends $s_{j'-1}$ to $\textit{acc}_v$ at round $\ell'+C(pre_{j'-2}(s))+3$ and the value of $\textit{acc}_v$ becomes $pre_{j'-1}(s)$.
        Since $S$ is prefix-free, $pre_{j'-1}(s) \notin S$ at the end of round $\ell'+C(pre_{j'-2}(s))+3$.
        From the pseudocode, 
        if $s_{j'-1}=0$, then $C(s_{j'-1})=1$ round later,
        $v$ listens and appends a character to $\textit{acc}_v$.
        If $s_{j'-1}=1$, then  $C(s_{j'-1})=3$ rounds later, $v$ next listens and appends a  character to $\textit{acc}_v$.
        Thus, $v$ appends the next character to $\textit{acc}_v$ at round $\ell'+C(pre_{j'-2}(s))+3+C(s_{j'-1})=\ell'+C(pre_{j'-1}(s))+3$.
        
        Let $u$ be a node at level $\ell'+1$.
        By the induction hypothesis, $u$ appends $s_{j'-1}$ to $\textit{acc}_u$ at round $\ell'+1+C(pre_{j'-2}(s))+3$.
        Thus,
        if $s_{j'-1}=0$, then $u$ listens at round $\ell'+1+C(pre_{j'-2}(s))+3=\ell'+C(pre_{j'-1}(s))+3$.
        If $s_{j'-1}=1$, 
        then, $u$
        idles at round 
        $2+\ell'+1+C(pre_{j'-2}(s))+3=\ell'+C(pre_{j'-1}(s))+3$.
        Hence, nodes at level $\ell'+1$ do not beep at round $\ell'+C(pre_{j'-1}(s))+3$.
        
        If $\ell' = 1$, then, 
        by Observation \ref{observation:k_values:algorithm:source_node_beep_rounds}, 
        the source node beeps at round $3+C(pre_{j'-1}(s))+1=\ell'+C(pre_{j'-1}(s))+3$ if and only if $s_{j'}=1$.
        So, suppose $\ell' \geq 2$. 
        By the induction hypothesis, each node $w$ at level $\ell'-1$ appends $s_{j'}$ to $\textit{acc}_w$ at round $\ell'-1+C(pre_{j'-1}(s))+3$. 
        From the pseudocode, if $\textit{terminal}_w=\textit{False}$,
        $w$ beeps at round $\ell'+C(pre_{j'-1}(s))+3$ if and only if $s_{j'}=1$.
        Therefore, $v$ receives a signal at round $\ell'+C(pre_{j'-1}(s))+3$ if and only if $s_{j'}=1$. 
        Since $v$ appends $1$ to $\textit{acc}_u$ if it receives a signal and appends $0$ if it receives nothing,
        $v$ appends $s_{j'}$ to $\textit{acc}_v$ at round $\ell'+C(pre_{j'-1}(s))+3$. 
    \end{proof}

    Now, we show that all nodes learn $s$ and terminate by the end of round $D+C(pre_{|s|-1}(s))+3$.
    \begin{lemma} \label{lemma:k_values:algorithm:termination}
        For all $1 \leq \ell \leq D$, nodes at level $\ell$ learn the value of $s$ at round $\ell+C(pre_{j-1}(s))+3$.
        Each node $v$ at level $\ell$ terminates at round $\ell+C(pre_{|s|-1]}(s))+3+s_{|s|}$ if $\textit{terminal}_w=\textit{False}$ and terminates at round $\ell+C(pre_{|s|-1}(s))+3$ if $\textit{terminal}_w=\textit{True}$.
    \end{lemma}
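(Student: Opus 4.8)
The plan is to derive this lemma directly from Lemma~\ref{lemma:k_values:algorithm:main_lemma}, which already pins down the exact round at which each node appends each character; here the index $j$ in the first claim should be read as $j=|s|$. For the learning claim, I would instantiate that lemma with $j=|s|$: a node $v$ at level $\ell$ appends its $|s|$-th character, namely $s_{|s|}$, to $\textit{acc}_v$ at round $\ell+C(pre_{|s|-1}(s))+3$, at which point $\textit{acc}_v$ equals $s$. To argue that this is precisely the round in which $v$ learns $s$, and not an earlier one, I would invoke that $S$ is prefix-free: for every $1\le j<|s|$ the string $pre_j(s)$ is a proper prefix of $s\in S$ and hence $pre_j(s)\notin S$, so the while-loop guard $\textit{acc}_v\notin S$ holds at every earlier iteration and fails for the first time exactly when $s_{|s|}$ is appended.

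For the termination claim, I would set $t=\ell+C(pre_{|s|-1}(s))+3$ and trace the non-source pseudocode through the final loop iteration, which begins with the listen round $t$. By Lemma~\ref{lemma:k_values:algorithm:main_lemma}, $v$ appends $s_{|s|}$ in this iteration, so two cases arise. If $s_{|s|}=0$, then $v$ receives nothing, appends $0$, the updated $\textit{acc}_v=s$ falsifies the loop guard, and $v$ terminates at round $t$. If $s_{|s|}=1$, then $v$ receives a signal and appends $1$, again making $\textit{acc}_v=s\in S$; now the branch ``$\textit{acc}_v\notin S$'' is skipped, so $v$ beeps once more, at round $t+1$, if and only if $\textit{terminal}_v=\textit{False}$, after which it terminates.

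Collecting the cases yields the two stated formulas: when $\textit{terminal}_v=\textit{True}$ there is never a trailing beep, so $v$ terminates at round $t=\ell+C(pre_{|s|-1}(s))+3$ regardless of the value of $s_{|s|}$; when $\textit{terminal}_v=\textit{False}$ the trailing beep occurs exactly when $s_{|s|}=1$, so $v$ terminates at round $t+s_{|s|}=\ell+C(pre_{|s|-1}(s))+3+s_{|s|}$. I expect the only delicate point to be the bookkeeping of this last iteration: confirming that appending $s_{|s|}$ makes $\textit{acc}_v$ equal to $s$ (which again needs prefix-freeness, so that the loop has not stopped one character early) and that the conditional emitting the final beep is governed by $\textit{terminal}_v$ exactly as the code dictates. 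Everything else is a direct reading-off of rounds from the preceding lemma.
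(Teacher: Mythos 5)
Your proposal is correct and follows essentially the same route as the paper: instantiate Lemma~\ref{lemma:k_values:algorithm:main_lemma} with $j=|s|$ to get $\textit{acc}_v=s$ at the end of round $\ell+C(\textit{pre}_{|s|-1}(s))+3$, then read the final-iteration behaviour (trailing beep iff $s_{|s|}=1$ and $\textit{terminal}_v=\textit{False}$) off the pseudocode. Your extra appeal to prefix-freeness to rule out early loop exit is harmless but redundant, since it is already built into the proof of Lemma~\ref{lemma:k_values:algorithm:main_lemma}; otherwise you have simply made explicit the case analysis the paper's terser proof leaves implicit.
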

    \begin{proof}
        Let $1 \leq \ell \leq D$ and let $v$ be a node at level $\ell$.
        By Lemma \ref{lemma:k_values:algorithm:main_lemma},
        $\textit{acc}_v=s$ at the end of round $\ell+C(pre_{j-1}(s))+3$.
        If $s_{|s|}=0$ or $\textit{terminal}_v=\textit{True}$, then $v$ terminates at the end of this round. 
        Otherwise, $v$ terminates one round later. 
    \end{proof}

    \medskip
    
    Let $m\geq 2$.
    We now explain how to use the algorithm in Figures \ref{algorithm:broadcast_K_values_source} and \ref{algorithm:broadcast_K_values_non_source} to broadcast any value from $\{1,\ldots, m\}$. 
    Since each value of an arbitrary set of size $m$ can be mapped to a value in $\{1,\ldots, m\}$, this method can be used to broadcast a value from an arbitrary set of size $m$.
    
    First, we recursively construct a prefix-free set of binary strings $W_i$ such that, for all $i\geq 3$ and all $w \in W_i$, $C(pre_{|w|-1}(w)) \leq i-3$. 
    Let $W_{0} = \{0\}$, $W_{1} = W_{2} = \{1\}$, $W_{3} = \{0,1\}$, $W_{4} = \{00,01,1\}$ and $W_{5} = \{000,001,01,1\}$. 
    For all $i \geq 6$, $W_{i}$ consists of the strings in $W_{i-1}$ prepended by $0$ (denoted by $0 \cdot W_{i-1}$) and the strings in $W_{i-3}$ prepended by $1$ (denoted by $1 \cdot W_{i-3}$). 
    For all $i\geq0$, let $w_i=|W_i|$.
    Then,
    $w_{0}=w_{1}=w_{2}=1$ and, for all $i\geq 3$, $w_{i}=w_{i-1}+ w_{i-3}$.
    Note that, $w_{0}, w_{1}, w_{2}, \ldots$ is Narayana's cows sequence \cite{narayanaBook}.
    We show that $W_i$ is prefix-free.
    \begin{lemma}
        For all $i\geq 0$, there is no string in $W_i$ that is a prefix of any other string in $w_{i}$.
    \end{lemma}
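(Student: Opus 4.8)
The plan is to prove the statement by strong induction on $i$, exploiting the recursive structure of the $W_i$. The key structural facts I will isolate first are these two elementary observations about binary strings: for any character $c \in \{0,1\}$, the string $c \cdot x$ is a prefix of $c \cdot y$ if and only if $x$ is a prefix of $y$; and if two strings begin with different characters, then neither is a prefix of the other. Together these let the prefix-free property propagate cleanly through the recursion $W_i = (0 \cdot W_{i-1}) \cup (1 \cdot W_{i-3})$.

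First I would dispatch the base cases $W_0, W_1, \ldots, W_5$ by direct inspection. Each is an explicitly listed finite set of nonempty strings, so for every pair one simply checks that they differ at some position within the length of the shorter one; these verifications are routine and finite. Note that these cannot be absorbed into the recursion, since the recursion only governs $i \geq 6$ (for instance $0 \cdot W_2 \cup 1 \cdot W_0 = \{01, 10\} \neq W_3$), so they are genuine base cases.

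For the inductive step, fix $i \geq 6$ and assume the claim holds for every smaller index; in particular $W_{i-1}$ and $W_{i-3}$ are prefix-free, since both indices lie in $\{0, \ldots, i-1\}$. Take two distinct strings $x, y \in W_i$ and split into three cases according to their leading characters. If both begin with $0$, write $x = 0 \cdot x'$ and $y = 0 \cdot y'$ with $x', y' \in W_{i-1}$; prepending $0$ is injective, so $x' \neq y'$, and if one of $x, y$ were a prefix of the other, then (stripping the leading $0$) one of $x', y'$ would be a prefix of the other, contradicting that $W_{i-1}$ is prefix-free. The case where both begin with $1$ is identical, using the prefix-freeness of $W_{i-3}$. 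Finally, if $x$ and $y$ begin with different characters, then by the second observation neither is a prefix of the other.

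The argument is essentially mechanical, and I do not expect a substantive obstacle. The only points requiring care are keeping the two prefix observations straight and confirming that the recursion indices $i-1$ and $i-3$ always fall within the already-established range $\{0, \ldots, i-1\}$, so that the strong induction hypothesis genuinely applies; once that bookkeeping is in place the three cases close immediately.
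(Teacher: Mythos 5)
Your proof is correct and follows essentially the same route as the paper: induction on $i$ with base cases $0 \leq i \leq 5$ checked by inspection, then using the decomposition $W_i = 0 \cdot W_{i-1} \cup 1 \cdot W_{i-3}$ together with the facts that prepending a common character preserves prefix relations and that strings with different first characters cannot be prefixes of one another. Your write-up merely makes explicit (via the three-case split on leading characters) what the paper's final sentence leaves implicit.
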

    \begin{proof}
        By inspection, the claim holds for $0\leq i \leq 5$.
        Let $i\geq 6$ and assume the claim is true for $W_{i-1}$ and $W_{i-3}$.
        Hence, no string in $0 \cdot W_{i-1}$ is a prefix of any other string in $0\cdot W_{i-1}$. 
        Similarly, no string in $1 \cdot W_{i-3}$ is a prefix of any other string in $1 \cdot W_{i-3}$.
        Since $W_i = 0 \cdot W_{i-1} \cup 1 \cdot W_{i-3}$, no string in $W_{i}$ is a prefix of any other string in $W_{i}$.
    \end{proof}
    
    Next, we describe the relationship between $w \in W_i$ and the function $C$ for all $i \geq 3$.
    \begin{lemma} \label{lemma:k_values:algorithm:encoding_cost_upper_bound}
        For all $i\geq 3$, all $w \in W_i$, $C(pre_{|w|-1}(w)) \leq i-3$.
    \end{lemma}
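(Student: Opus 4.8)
The plan is to prove the bound by induction on $i$, exploiting the fact that $C$ is additive over concatenation: for any single character $b$ and any binary string $x$, $C(b \cdot x) = C(b) + C(x)$, where $C(0) = 1$ and $C(1) = 3$. This additivity is immediate from the definition of $C$ as a count of $0$'s plus three times the count of $1$'s.

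For the base cases $i \in \{3,4,5\}$, I would verify the inequality directly by inspecting each string in $W_3$, $W_4$, and $W_5$. For instance, the largest prefix cost in $W_5$ comes from $000$, whose length-$2$ prefix $00$ satisfies $C(00) = 2 = 5-3$, and the other strings give strictly smaller costs.

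For the inductive step, fix $i \geq 6$ and assume the claim holds for $W_{i-1}$ and $W_{i-3}$; note both indices are at least $3$, so the hypothesis applies. Every $w \in W_i$ is either $0 \cdot w'$ with $w' \in W_{i-1}$ or $1 \cdot w''$ with $w'' \in W_{i-3}$. Since each $W_j$ contains only nonempty strings, the leading character survives after deleting the last character, so $\textit{pre}_{|w|-1}(w)$ decomposes as its leading bit followed by $\textit{pre}$ of the shorter string. Concretely, if $w = 0 \cdot w'$ then $\textit{pre}_{|w|-1}(w) = 0 \cdot \textit{pre}_{|w'|-1}(w')$, so additivity and the induction hypothesis give $C(\textit{pre}_{|w|-1}(w)) = 1 + C(\textit{pre}_{|w'|-1}(w')) \leq 1 + (i-4) = i-3$; and if $w = 1 \cdot w''$ then $C(\textit{pre}_{|w|-1}(w)) = 3 + C(\textit{pre}_{|w''|-1}(w'')) \leq 3 + (i-6) = i-3$.

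The one place I would be careful is the prefix decomposition: I must confirm that deleting the final character of $w$ leaves the prepended bit intact, which is exactly why the nonemptiness of $w'$ and $w''$ matters. Once that is settled, the step closes cleanly because the cost contributions $C(0)=1$ and $C(1)=3$ match precisely the index drops of $1$ and $3$ in the recurrence, so the bound $i-3$ is preserved with no slack.
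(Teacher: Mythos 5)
Your proof is correct and takes essentially the same approach as the paper: induction on $i$ with base cases $3 \leq i \leq 5$ verified by inspection, followed by the case split $w = 0 \cdot w'$ with $w' \in W_{i-1}$ versus $w = 1 \cdot w''$ with $w'' \in W_{i-3}$, where the index drops of $1$ and $3$ match $C(0)=1$ and $C(1)=3$ exactly. Your explicit attention to the additivity of $C$ and the nonemptiness of the shorter strings merely spells out what the paper leaves implicit (one trivial slip in an aside: in $W_5$ the string $001$ also attains the bound with equality, since $C(\textit{pre}_2(001)) = C(00) = 2$, rather than being strictly smaller).
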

    \begin{proof}
        By inspection, the claim holds for $3\leq i \leq 5$.
        Let $i\geq 6$ and assume the claim is true for $W_{i-1}$ and $W_{i-3}$.
        Let $w \in W_{i}$ be arbitrary. 
        If $w=0\cdot w'$,
        then $w' \in W_{i-1}$. 
        By the induction hypothesis, $C(pre_{|w'|-1}(w')) \leq i-4$. Hence, $C(pre_{|w|-1}(w)) \leq i-3$.
        If $w=1\cdot w'$, 
        then $w' \in W_{i-3}$. 
        By the induction hypothesis, $C(pre_{|w'|-1}(w')) \leq i-6$.
        Hence, $C(pre_{|w|-1}(w)) \leq i-3$.
    \end{proof}

    Next, we describe how to use the algorithm in Figures \ref{algorithm:broadcast_K_values_source} and \ref{algorithm:broadcast_K_values_non_source} to broadcast a value from $\{1,\ldots, m\}$.
    Let $r$ be the smallest value such that $w_r\geq m$.
    Since $m \geq 2$, it follows that $r \geq 3$.
    Let $S= W_r=\{s_1,s_2,\ldots,s_{w_r}\}$.
    To broadcast 
    $\mu \in \{1,\ldots, m\}$, the source node 
    sets $s = \mathcal{M}(\mu)$ and performs the algorithm
    in Figure \ref{algorithm:broadcast_K_values_source}.
   Each non-source node performs the algorithm in Figure \ref{algorithm:broadcast_K_values_non_source}, but when it terminates
   it decodes the message
   as $\mathcal{M}^{-1}(\textit{acc}_{v})$.
    By observation \ref{observation:k_values:algorithm:source_node_beep_rounds} and Lemma \ref{lemma:k_values:algorithm:termination},
    all nodes terminate within $D+C(pre_{|\mathcal{M}(\mu)|-1]}(\mathcal{M}(\mu)))+3$ rounds. 
    Since $r \geq 3$, 
    Lemma \ref{lemma:k_values:algorithm:encoding_cost_upper_bound} implies that 
    $C(pre_{|\mathcal{M}(\mu)|-1]}(\mathcal{M}(\mu))) \leq r-3$. 
    Thus, all nodes learn $\mu$ and terminate within $D+r$ rounds. 
    
    It is known that
    $w_{i}=\left\lfloor d c^{i}+\frac{1}{2}\right\rfloor$, where $c\approx 1.4656$ is the real root of $x^3-x^2-1$ and $d \approx 0.6115$ is the real root of $31\cdot x^3-31\cdot x^2+9 \cdot x-1$~\cite{oeis}. 
    Thus $\left\lfloor d c^{r-1}\right\rfloor = w_{r-1} < m$.
    Since $m$ is an integer, $d c^{r-1}< m$,
    so $r < \log_c{m}-\log_c{d}+1 < 2\log_{2}{m}+3$. 
    Since $r$ is an integer, $r \leq 2 \lceil \log_2{m}\rceil+2$.

    \begin{theorem} \label{theorem:main_upper_bound}
        Let $m\geq2$ and 
        let $r\geq 3$ be the smallest value such that $w_r \geq m$. 
        Consider the algorithm in Figures \ref{algorithm:broadcast_K_values_source} and \ref{algorithm:broadcast_K_values_non_source} that enables the source node to broadcast any message from $\{1,\ldots, m\}$.
        For all $D\geq1$ and all $\mu \in \{1,\ldots,m\}$,
        all nodes learn $\mu$ and terminate within $D+r \leq D+2 \lceil \log_2{m}\rceil+2$ rounds
        during the execution of the algorithm with message $\mu$ on every graph of source eccentricity $D$.
    \end{theorem}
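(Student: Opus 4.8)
The plan is to assemble the facts already established for a generic prefix-free set $S$ and specialize them to $S=W_r$. Since $|W_r|=w_r\ge m$ and $W_r$ is prefix-free, I would first fix an injection $\mathcal{M}\colon\{1,\ldots,m\}\to W_r$ with inverse $\mathcal{M}^{-1}$ defined on its image; the source broadcasts $s=\mathcal{M}(\mu)$ using Algorithm~\ref{algorithm:broadcast_K_values_source}, and every non-source node runs Algorithm~\ref{algorithm:broadcast_K_values_non_source} and outputs $\mathcal{M}^{-1}(\textit{acc}_v)$ upon terminating.

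For correctness, I would invoke Lemma~\ref{lemma:k_values:algorithm:termination}, which guarantees that for every level $1\le\ell\le D$ each node $v$ at level $\ell$ has $\textit{acc}_v=s$ when it terminates. Since $\mathcal{M}$ is injective, applying $\mathcal{M}^{-1}$ to $\textit{acc}_v=\mathcal{M}(\mu)$ recovers exactly $\mu$, so all nodes learn the broadcast value. The prefix-freeness of $W_r$ is what makes this decoding unambiguous and is already what drives the loop in Algorithm~\ref{algorithm:broadcast_K_values_non_source} to halt precisely when $\textit{acc}_v=s$ rather than at a proper prefix.

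For the round bound, I would combine Observation~\ref{observation:k_values:algorithm:source_node_beep_rounds} (for the source) with the two cases of Lemma~\ref{lemma:k_values:algorithm:termination} (for the non-source nodes). A non-source node $v$ with $\textit{terminal}_v=\textit{True}$ terminates at round $\ell+C(\mathit{pre}_{|s|-1}(s))+3$, whereas one with $\textit{terminal}_v=\textit{False}$ terminates at that round plus $s_{|s|}$. The key observation is that a node with $\textit{terminal}_v=\textit{False}$ has a neighbour at a higher level and hence sits at level $\ell\le D-1$, so even with the extra final beep it terminates by round $(D-1)+C(\mathit{pre}_{|s|-1}(s))+3+1=D+C(\mathit{pre}_{|s|-1}(s))+3$; a terminal node at level up to $D$, and the source itself, meet the same bound. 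Applying Lemma~\ref{lemma:k_values:algorithm:encoding_cost_upper_bound}, which gives $C(\mathit{pre}_{|s|-1}(s))\le r-3$ because $s\in W_r$ and $r\ge3$, collapses this to $D+r$ rounds.

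It then remains to translate $D+r$ into the stated closed form, which is the only genuinely computational step. Using $w_i=\lfloor dc^{i}+\tfrac12\rfloor$ for Narayana's cows sequence, minimality of $r$ gives $w_{r-1}<m$, hence $dc^{r-1}<m$; taking logarithms base $c$ and bounding the constants (as in the paragraph preceding the statement) yields $r\le 2\lceil\log_2 m\rceil+2$. I expect no real obstacle, since every substantive claim has already been proved in Lemmas~\ref{lemma:k_values:algorithm:main_lemma}--\ref{lemma:k_values:algorithm:encoding_cost_upper_bound}; the only points needing any care are the case split showing that the one-round final-beep delay of a non-terminal node never pushes termination past the deepest terminal node's deadline, and the elementary logarithmic estimate for $r$.
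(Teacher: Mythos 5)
Your proposal is correct and follows essentially the same route as the paper: it assembles Observation~\ref{observation:k_values:algorithm:source_node_beep_rounds} and Lemmas~\ref{lemma:k_values:algorithm:main_lemma}--\ref{lemma:k_values:algorithm:termination} for the $D+C(\mathit{pre}_{|s|-1}(s))+3$ bound, applies Lemma~\ref{lemma:k_values:algorithm:encoding_cost_upper_bound} with $S=W_r$ to get $D+r$, and closes with the same Narayana-sequence estimate $r\le 2\lceil\log_2 m\rceil+2$. Your explicit case split---that a node with $\textit{terminal}_v=\textit{False}$ lies at level $\ell\le D-1$, so its extra final-beep round never exceeds the $D+C(\mathit{pre}_{|s|-1}(s))+3$ deadline---is a detail the paper leaves implicit, and it is handled correctly.
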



\section{Lower Bound}    \label{section:lower_bound}
    It is convenient for the proof of the lower bound to strengthen the model 
    by assuming
    that each node knows its level and whether that level is the final level of the graph.
    We also assume that the source node is fixed. 
    Since a node that listens can simulate idling by throwing away any information it receives, we assume that nodes either listen or transmit at each round.
    
    We define a family of graphs used in the proof.
    For all $D \geq 2$, let $E_D$ be the graph with a source node and two nodes at each level $1$ through $D$, 
    such that, for all $0 \leq \ell \leq D-1$, each node at level $\ell$ is connected to both nodes at level $\ell+1$.
    Figure \ref{fig:E_3} shows graph $E_{3}$. 
     
    \begin{figure}[H]
        \centering
        \includegraphics[width=5cm]{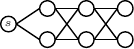}
        \caption{The graph $E_{3}$ with source node $s$.}
        \label{fig:E_3} 
    \end{figure}

    The main result in this section is the following theorem, where $w_0,w_1,w_2,\ldots$ is Narayana's cows sequence.
    It proves that the algorithm in Figures \ref{algorithm:broadcast_K_values_source} and \ref{algorithm:broadcast_K_values_non_source} is optimal in the radio broadcast model with collision detection.
    \begin{theorem} \label{theorem:main_lower_bound}
        Let $m \geq 2$ and let $r\geq 3$ be the smallest value such that $w_r \geq m$. 
        Consider any algorithm $\mathcal{B}$ that enables the source node to broadcast any message from $\{1,\ldots, m\}$.
        Then there exist $D \geq 12$, a message $\mu \in \{1,\ldots, m\}$,
        and a non-source node that uses at least $D+r$ rounds
        during the execution of $\mathcal{B}$ with message $\mu$ on $E_D$.
    \end{theorem}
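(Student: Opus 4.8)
The plan is to argue the contrapositive. I fix an arbitrary correct algorithm $\mathcal{B}$ and a sufficiently deep graph $E_D$ (any $D\ge 12$ works), assume for contradiction that in the execution of $\mathcal{B}$ on $E_D$ every non-source node terminates by the end of round $D+r-1$ for every $\mu\in\{1,\dots,m\}$, and conclude that $\mathcal{B}$ could then distinguish at most $w_{r-1}<m$ messages. The first ingredient is a symmetry lemma, proved by induction on the rounds: the two nodes at any level $\ell$ of $E_D$ are always in identical states and make identical listen/transmit decisions. This is because they wake up simultaneously, have identical neighbourhoods (both are joined to all of level $\ell-1$ and all of level $\ell+1$ and to each other through neither), possess the same level and the same final-status knowledge, and hence, inductively, always receive the same feedback. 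The crucial consequence is that collision detection is useless on $E_D$: whenever a node listens while some adjacent level transmits, that level supplies two simultaneous transmitters, so the listener perceives only a collision, never a value. The sole exception is the lone source, which can deliver a genuine message to level $1$; I would argue this buys nothing, since level $1$ must still relay to level $2$ through collisions, so from level $1$ onward the execution is indistinguishable from one in the beeping model.

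The heart of the proof is to extract, from the execution for each message $\mu$, a binary pattern $\sigma(\mu)$ recording the beeps a final-level node perceives, and to establish two facts about the family $\{\sigma(\mu)\}$. First, the patterns are prefix-free: if $\sigma(\mu)$ were a prefix of $\sigma(\mu')$, then a level-$D$ node, whose entire view is determined by the perceived beeps, would pass through the same states and terminate identically under $\mu$ and $\mu'$, so it could not decode both correctly. Second, a level-$D$ node cannot terminate before round $D+C(pre_{|\sigma(\mu)|-1}(\sigma(\mu)))+3$. I would prove the second fact by a level-by-level induction mirroring Lemma~\ref{lemma:k_values:algorithm:main_lemma} but in the reverse direction: a silent bit advances the information wavefront by only one round per level, whereas a perceived beep forces the relaying level to transmit and then wait out the echo beeped back by the next level, costing three rounds per level; the additive $3$ reflects the unavoidable initial handshake, and a margin of levels is needed to run the induction cleanly, which is where $D\ge 12$ enters. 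I expect this timing bound to be the main obstacle: extracting a well-defined pattern from an arbitrary $\mathcal{B}$, charging three rounds to every perceived beep, and pinning the additive constant against boundary effects (including the source's head start) is the delicate part.

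It remains to close the count, which reduces to showing that $w_i$ is the largest possible size of such a prefix-free family. Let $g(c)$ be the number of binary strings $x$ with $C(x)=c$; splitting on the first character gives $g(c)=g(c-1)+g(c-3)$ with $g(0)=g(1)=g(2)=1$, so $g(c)=w_c$. In the tree whose $0$-edges have cost $1$ and $1$-edges have cost $3$, any prefix-free set in which every codeword's parent has cost at most $i-3$ has at most as many elements as the full expansion of every cost-$\le i-3$ node into two children; since that expansion is a finite full binary tree, its number of leaves is $1+\sum_{c=0}^{i-3} g(c)=w_i$ (using $w_i=w_{i-1}+w_{i-3}$), so any such prefix-free set has at most $w_i$ elements.

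Combining these, the contradiction hypothesis together with the timing bound forces $C(pre_{|\sigma(\mu)|-1}(\sigma(\mu)))\le r-4$ for every $\mu$, so by prefix-freeness the $m$ patterns form a prefix-free set all of whose codewords have parent-cost at most $(r-1)-3$, giving $m\le w_{r-1}$. But $r$ is the smallest index with $w_r\ge m$, so $w_{r-1}<m$, a contradiction. Hence some message $\mu$ forces a non-source node of $E_D$ to remain active through round $D+r$, as required.
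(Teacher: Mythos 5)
Your outer shell is sound: the twin-node symmetry argument matches the paper's Section~\ref{subsection:properties_of_general_algorithm} (both nodes at a level stay in identical states, so levels $\geq 2$ only ever perceive collision or silence), your counting is correct (the number of strings with $C(x)=c$ satisfies $g(c)=g(c-1)+g(c-3)$, so the maximal prefix-free family with parent-cost at most $i-3$ has size $w_i$), and the final arithmetic ($C \leq r-4$ forcing $m \leq w_{r-1}$) would close the proof. The genuine gap is that everything has been funneled into the unproven ``timing bound'' --- that a level-$D$ node cannot terminate before round $D+C(pre_{|\sigma(\mu)|-1}(\sigma(\mu)))+3$ --- and that bound \emph{is} the theorem; the sketched ``reverse induction mirroring Lemma~\ref{lemma:k_values:algorithm:main_lemma}'' does not go through for an arbitrary $\mathcal{B}$. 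Two specific failure points. First, nothing forces level-synchronization or a well-defined bit pattern: wake-up rounds can depend on $\mu$ (the source knows $\mu$ before round $1$ and can delay or modulate its first transmissions, and any level whose nodes listen upon waking retards the wavefront), so information can travel in a timing channel rather than in a per-round binary channel, and a level-$D$ node's view is a timed sequence of signals, not a string. The paper spends all of Section~\ref{subsection:construction_of_B'} neutralizing exactly this: it shows at most $r-1$ levels listen immediately after waking, then builds $\mathcal{B}'$ by absorbing the first $\ell_0$ levels into the source's state so that level $\ell$ provably wakes at round $\ell$ for every message, before any accounting is attempted. Second, nothing forces your ``echo'' coupling, i.e.\ that a perceived collision costs $3$ rounds locally (transmit, then sit out the next level's echo) while silence costs $1$: which of the two perceptions triggers a transmission, and when the next informative round is scheduled, is entirely up to the algorithm --- indeed, in the paper's own endgame the state reached after a collision turns out to be the \emph{listening} one (Lemma~\ref{lemma:reduction_2}). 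The costs $1$ and $3$ are not local per-level facts one can charge bit by bit; they emerge only in aggregate through the recursion $w_k=w_{k-1}+w_{k-3}$.

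The paper's actual route is structurally different and avoids ever extracting a pattern $\sigma(\mu)$: it runs a minimal-counterexample induction on a property $\textit{P}(k)$ (levels wake at round $\ell$ and distinguish $w_{k-1}+1$ messages within $k-1$ rounds of waking), using Lemma~\ref{lemma:construction_of_X''} to compress $t_0$ indistinguishable rounds at a low level $\ell_0$ into a new algorithm with property $\textit{P}(k-t_0+2)$ via simulation on $E_{\ell_0+D+6}$, and then a delicate case analysis on the forced states $\mathbf{c}_{\mathcal{X}}(\ell,0),\ldots,\mathbf{c}_{\mathcal{X}}(\ell,3)$ that splits the message set into $U$ and $U'$ with $|U|,|U'|\leq w_{k-3}$, contradicting $|U|+|U'|=w_{k-1}+1 > 2w_{k-3}$. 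Your proposal correctly identifies the encoding-theoretic \emph{reason} the bound is $D+r$, but to be a proof it would need to establish the schedule-and-cost structure for arbitrary algorithms --- handling the timing channel, message-dependent wake-ups, and adversarial echo scheduling --- and any rigorous execution of that program would, in effect, reproduce the paper's $\textit{P}(k)$ machinery.
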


    Let $m \geq 2$ and let $r \geq 3$ be the smallest value such that $w_r \geq m$. 
    Since $w_{r-1} < m \leq w_r$, an algorithm that enables the source node to broadcast any message from $\{1,\ldots, m\}$ also enables the source node to broadcast any message from $\{1,\ldots,w_{r-1}+1\}$. 
    Therefore, it suffices to assume that $m=w_{r-1}+1$.

    To obtain a contradiction,
    assume there exists an algorithm $\mathcal{B}$ that enables the source node to broadcast any message from $\{1,\ldots, w_{r-1}+1\}$, such that, for all $D \geq 12$ and all $\mu \in \{1,\ldots, w_{r-1}+1\}$,
    all non-source nodes terminate within $D+r-1$ rounds
    during the execution of $\mathcal{B}$ with message $\mu$ on $E_D$.
    
    The high-level road-map for the lower bound proof is as follows.
    First, in section \ref{subsection:properties_of_general_algorithm},
    we consider an arbitrary algorithm $\mathcal{A}$ such that, for all $D \geq 12$ and all messages $\mu \in M$, 
    all nodes 
    eventually wake up during the execution of $\mathcal{A}$ with message $\mu$ on $E_{D}$.
    We prove that, all nodes at the same level wake up at the same round during all such executions.
    Then we show that, for each level $\ell \geq 2$,
    when the nodes at level $\ell$ wake up,
    they have no information about the message or the graph provided the graph is sufficiently large.
    We also prove a number of other useful properties of $\mathcal{A}$ on $E_D$.
    
    In section \ref{subsection:construction_of_B'}, we define property $\textit{P}(k)$, which says that, 
    nodes at level $\ell$ wake up at round $\ell$ and distinguish between the messages from $\{1,\ldots, w_{k-1}+1\}$ within $k-1$ rounds after waking up. From algorithm $\mathcal{B}$, we construct an algorithm $\mathcal{B}'$ that has property  $\textit{P}(r)$,
    using the results in section \ref{subsection:properties_of_general_algorithm}.

    
    
    
    In section \ref{subsection:Construction_of_Algorithm}, we show how to construct an algorithm that has property $\textit{P}(k')$ from an algorithm that has property $\textit{P}(k)$ (and some additional properties), for some $k' < k$.
    Finally, in section \ref{subsection:P(K)_is_impossible}, we inductively prove that, for all $k \geq 3$, no algorithm has property $\textit{P}(k)$.
    Since algorithm $\mathcal{B}'$ has property $\textit{P}(r)$, this contradicts the existence of algorithm $\mathcal{B}'$ and, hence, algorithm $\mathcal{B}$ does not exist.
    


\subsection{Behaviour of Algorithms on $E_{D}$}  \label{subsection:properties_of_general_algorithm}
    Consider an arbitrary algorithm $\mathcal{A}$ in which the source node is given a message $\mu \in M$.
    Let $\alpha_{\mu,D}$ denote the execution of $\mathcal{A}$ with message $\mu$ on $E_D$.  
    Assume that, for all $D \geq 12$ and all $\mu \in M$, all nodes 
    eventually wake up during $\alpha_{\mu,D}$.

    By the model, both nodes at each level $\ell \geq 1$ are initially in the same state.
    Since they
    have the same set of neighbours, they always remain in the same state as one another.
    Let $\mathbf{s}_{\mathcal{A}}(D,\mu,\ell,t)$ be the state of nodes at level $\ell$ at the end of round $t$ during $\alpha_{\mu,D}$.
    Thus, when a node at level $\ell$ transmits, the other node at level $\ell$ also transmits at that round and, hence, their neighbours receive a collision signal.
    Therefore, during $\alpha_{\mu,D}$,
    each node at level $\ell \geq 2$ receives either a collision signal or nothing when it is listening.
    
   During $\alpha_{\mu,D}$, the nodes at level $\ell \geq 1$ wake up at the same round, $t_{\mathcal{A}}(D,\mu,\ell)$. Then $\mathbf{s}_{\mathcal{A}}(D,\mu,\ell,t_{\mathcal{A}}(D,\mu,\ell))$ denotes the state of these nodes at the end of the round in which they wake up.
    

    Next,
    we prove properties about the round in which nodes wake up.
    The source node wakes up before any other node and each node listens up to and including the round in which it wakes up.
    Thus, we have the following observation.
    \begin{observation} \label{observation:must_receive_first_message_from_previous_level} 
        Let $1 \leq \ell \leq D$.
        If nodes at level $\ell$ do not wake up in the first $t$ rounds of $\alpha_{\mu,D}$, then nodes at levels greater than $\ell$ do not wake up in the first $t+1$ rounds of $\alpha_{\mu,D}$.
    \end{observation}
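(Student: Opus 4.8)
The plan is to argue by minimality on the round at which the conclusion could first fail. Let $t'$ be the earliest round of $\alpha_{\mu,D}$ in which some node at a level greater than $\ell$ wakes up; if no node above level $\ell$ ever wakes up (for instance when $\ell=D$), the statement holds vacuously, so assume $t'$ exists. Fix a node $v$ at level $\ell' > \ell$ that wakes up at round $t'$, and let $u$ be a neighbour of $v$ whose transmission at round $t'$ triggers this wake-up. In $E_D$ the only neighbours of a level-$\ell'$ node lie at levels $\ell'-1$ and $\ell'+1$, so $u$ is at one of those two levels.

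The key step is to rule out every candidate for $u$ except a node at level $\ell$. Two facts drive this. First, a node transmits only strictly after it wakes up: a node woken at round $w$ begins executing its algorithm at round $w+1$ and merely listens beforehand, so any round in which it transmits is at least $w+1$. In particular, since $u$ transmits at round $t'$, node $u$ woke up at some round at most $t'-1$. Second, $t'$ is the \emph{earliest} round at which any node above level $\ell$ wakes up. Now if $u$ were at level $\ell'+1$, or at level $\ell'-1$ with $\ell'-1 > \ell$, then $u$ would itself be a node above level $\ell$ that woke up at a round at most $t'-1 < t'$, contradicting the minimality of $t'$. Hence $u$ lies at level $\ell'-1=\ell$, which forces $\ell'=\ell+1$.

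It then remains to combine the two delays. By hypothesis no node at level $\ell$ wakes up within the first $t$ rounds, so $u$ wakes up at some round $w \geq t+1$. Applying the transmit-after-wake fact to $u$, its transmission round satisfies $t' \geq w+1 \geq t+2$. Thus the earliest any node above level $\ell$ can wake up is round $t+2$, which is exactly the assertion that no such node wakes up within the first $t+1$ rounds.

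The argument is short, and the only delicate point is the ordering step: one must take $v$ at the \emph{first} round at which any higher level wakes up, rather than fixing a particular higher level, so that minimality forbids a higher-level neighbour from being the trigger $u$. This is the single obstacle; once $u$ is pinned to level $\ell$, the extra round in the conclusion comes directly from the one-round gap between a node waking up and its first opportunity to transmit.
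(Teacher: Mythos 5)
Your proof is correct and matches the paper's reasoning: the paper states this as an observation justified in one sentence by the facts that each node listens up to and including the round in which it wakes up (so a transmission lags a wake-up by at least one round) and that wake-ups propagate outward from the source, and your earliest-round minimality argument is precisely the formalization of that sentence, pinning the triggering transmitter to level $\ell$ and adding the one-round delay. No gaps; the vacuous case and the exclusion of higher-level triggers are handled properly.
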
 
    
    The next result follows directly from Observation \ref{observation:must_receive_first_message_from_previous_level}.
    \begin{observation} \label{observation:wake_up_by_nodes_at_previous_level}
        Let $2 \leq \ell \leq D$.
        If nodes at level $\ell$ wake up at round $t$ of $\alpha_{\mu,D}$, then nodes at level $\ell-1$ transmit at round $t$.
    \end{observation}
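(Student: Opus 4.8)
The plan is to argue about the source of the first transmission that wakes up level $\ell$. In $E_D$ the neighbours of a node at level $\ell$ lie only at levels $\ell-1$ and $\ell+1$, so the hypothesis that nodes at level $\ell$ wake up at round $t$ means exactly that no node at level $\ell-1$ or $\ell+1$ transmits before round $t$, while at round $t$ at least one such node transmits. I would identify this transmitting neighbour and rule out the possibility that it lies at level $\ell+1$, leaving level $\ell-1$ as the only source.

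First I would record the one model fact I need: a node transmits only after it has woken up. By the model a node merely listens up to and including the round it wakes up and begins executing its algorithm only in the following round, so any node that transmits at round $t$ must have woken up by the end of round $t-1$.

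Next I would invoke Observation \ref{observation:must_receive_first_message_from_previous_level}. Since nodes at level $\ell$ wake up at round $t$, they do not wake up in the first $t-1$ rounds of $\alpha_{\mu,D}$; applying the observation with this level and with $t-1$ in place of $t$ shows that nodes at every level greater than $\ell$, and in particular level $\ell+1$, do not wake up in the first $t$ rounds. Combined with the previous paragraph, a node at level $\ell+1$ therefore cannot transmit at round $t$, since to do so it would have needed to wake up by round $t-1$.

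Finally I would conclude that the first transmission received by level $\ell$ at round $t$ must come from a neighbour at level $\ell-1$, and since the two nodes at each level always occupy the same state, both nodes at level $\ell-1$ transmit at round $t$. The step I expect to require the most care is the index bookkeeping: correctly translating ``wakes up at round $t$'' into ``does not wake up in the first $t-1$ rounds'' and ``transmits at round $t$'' into ``woke up by round $t-1$'', so that Observation \ref{observation:must_receive_first_message_from_previous_level} is applied with precisely the right offset and the level-$\ell+1$ case is genuinely excluded.
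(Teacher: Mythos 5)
Your proposal is correct and takes essentially the same route as the paper, which gives no explicit proof beyond stating that the result ``follows directly from Observation \ref{observation:must_receive_first_message_from_previous_level}'': your application of that observation with $t-1$ in place of $t$ to rule out a round-$t$ transmission from level $\ell+1$ (a node that transmits at round $t$ must have woken up by round $t-1$), followed by locating the transmitter at level $\ell-1$ and using that both nodes at a level are always in the same state, is precisely the intended elaboration, with the index offsets handled correctly.
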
 
        
    Now, we use Observation \ref{observation:must_receive_first_message_from_previous_level} inductively to improve the result for nodes at levels greater than $\ell+1$.
    \begin{lemma} \label{lemma:lower_bound_t+k}
        Let $1 \leq \ell \leq D$. 
        If nodes at level $\ell$ do not wake up in the first $t$ rounds of $\alpha_{\mu,D}$, then, 
        for all $1 \leq \ell' \leq D-\ell$, the nodes at level $\ell+\ell'$ do not wake up in the first $t+\ell'$ rounds of $\alpha_{\mu,D}$.
    \end{lemma}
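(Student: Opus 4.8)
The plan is to prove the statement by induction on $\ell'$, using Observation~\ref{observation:must_receive_first_message_from_previous_level} as the single engine that converts a one-level increase into a one-round increase in the wake-up delay. Throughout I would keep $\ell$ and $t$ fixed as in the hypothesis, so that nodes at level $\ell$ do not wake up in the first $t$ rounds of $\alpha_{\mu,D}$.

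For the base case $\ell' = 1$, I would apply Observation~\ref{observation:must_receive_first_message_from_previous_level} directly to level $\ell$. Since level $\ell$ does not wake up within the first $t$ rounds, the observation guarantees that every level greater than $\ell$, and in particular level $\ell + 1$, does not wake up within the first $t + 1$ rounds. This is exactly the assertion for $\ell' = 1$.

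For the inductive step, suppose $1 \le \ell' \le D - \ell - 1$ and that level $\ell + \ell'$ does not wake up within the first $t + \ell'$ rounds. I would re-invoke Observation~\ref{observation:must_receive_first_message_from_previous_level}, now with level $\ell + \ell'$ playing the role of the level hypothesised in the observation and with $t + \ell'$ playing the role of the number of rounds. The observation then yields that every level greater than $\ell + \ell'$ fails to wake up within the first $(t + \ell') + 1 = t + \ell' + 1$ rounds; specialising to the next level gives that level $\ell + (\ell' + 1)$ does not wake up within the first $t + (\ell' + 1)$ rounds, completing the induction.

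I do not anticipate a genuine obstacle here: the whole content is supplied by Observation~\ref{observation:must_receive_first_message_from_previous_level}, and the lemma merely telescopes its per-round delay across $\ell'$ successive levels. The only point demanding care is the bookkeeping on indices, namely checking that $\ell + \ell'$ remains in the range $[1, D]$ so that the observation is applicable; this holds because $\ell + \ell' \le D - 1$ throughout the inductive step, which leaves at least one strictly higher level for the conclusion to address.
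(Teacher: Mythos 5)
Your proof is correct and follows essentially the same route as the paper: the paper likewise proves the lemma by induction on $\ell'$, invoking Observation~\ref{observation:must_receive_first_message_from_previous_level} once per level to convert the hypothesis at level $\ell+\ell'-1$ into the conclusion at level $\ell+\ell'$. Your additional care about the index range $\ell+\ell' \leq D-1$ is a harmless (and slightly more explicit) refinement of the same argument.
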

    \begin{proof}
        Suppose nodes at level $\ell$ do not wake up in the first $t$ rounds of $\alpha_{\mu,D}$.
        Let $\ell' \geq 1$ and assume that the nodes at level $\ell+\ell'-1$ do not wake up in the first $t+\ell'-1$ rounds of $\alpha_{\mu,D}$.
        By Observation \ref{observation:must_receive_first_message_from_previous_level},
        nodes at level $\ell+\ell'$ do not wake up in the first $t+\ell'$ rounds of $\alpha_{\mu,D}$.
    \end{proof}

    Since the source node is the only node that can transmit at round $1$, nodes at level $2$ do not wake up at round $1$.
    Thus, by Lemma \ref{lemma:lower_bound_t+k}, we have the following result.
    \begin{corollary} \label{corollary:lower_bound_at_least_i_for_level_i}
        For all $1 \leq \ell \leq D$, nodes at level $\ell$ do not wake up in the first $\ell-1$ rounds of $\alpha_{\mu,D}$.
    \end{corollary}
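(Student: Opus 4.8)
The plan is to reduce everything to a single application of Lemma \ref{lemma:lower_bound_t+k}, once the correct base case at level $2$ is in place, and to dispose of the level-$1$ case separately as a vacuous statement.

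First I would handle the two smallest levels. For $\ell = 1$ the claim is vacuous: ``the first $\ell-1 = 0$ rounds'' describes an empty set of rounds, so there is nothing to wake up in. For $\ell = 2$ I would establish the base case directly. At round $1$, every node at level $1$ or higher is still in its initial listening state, because the model specifies that a node woken at round $t$ does not begin executing (and hence cannot transmit) until round $t+1$; therefore the source node is the only node able to transmit at round $1$. By Observation \ref{observation:wake_up_by_nodes_at_previous_level}, a node at level $2$ can wake up at round $t$ only if a level-$1$ node transmits at round $t$. Since no level-$1$ node transmits at round $1$, nodes at level $2$ do not wake up at round $1$, i.e.\ they do not wake up in the first $\ell-1 = 1$ round.

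Next I would bootstrap to all higher levels with one invocation of Lemma \ref{lemma:lower_bound_t+k}, whose hypothesis is exactly the level-$2$ base case just proved. Taking $\ell = 2$ and $t = 1$ in that lemma gives: for all $1 \leq \ell' \leq D-2$, the nodes at level $2+\ell'$ do not wake up in the first $1+\ell'$ rounds. Re-indexing via $\ell = 2+\ell'$ (so $\ell'$ ranges over $1,\dots,D-2$ precisely when $\ell$ ranges over $3,\dots,D$), the bound $1+\ell' = \ell-1$ yields the claim for every $\ell$ with $3 \leq \ell \leq D$. Together with the two base cases, this covers all $1 \leq \ell \leq D$.

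I do not expect a genuine obstacle, since Lemma \ref{lemma:lower_bound_t+k} already packages the inductive content of the argument. The only point requiring care is the level-$2$ base case: I must invoke the model's convention that a node woken at round $t$ stays silent through round $t$ and first acts at round $t+1$, so that it is truly justified that no node other than the source can transmit at round $1$. After that, the proof is just matching the lemma's quantifiers and the arithmetic $t+\ell' = \ell-1$ to the statement.
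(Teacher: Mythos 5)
Your proposal is correct and is essentially the paper's own argument: the paper likewise notes that only the source can transmit at round $1$, so nodes at level $2$ do not wake up at round $1$, and then applies Lemma \ref{lemma:lower_bound_t+k} (with $\ell=2$, $t=1$) to cover all higher levels. Your extra care with the vacuous $\ell=1$ case and the re-indexing arithmetic just makes explicit what the paper leaves implicit.
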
 

    For all $\ell \geq 2$, nodes at level $\ell$ wake up by receiving a collision signal.
    Furthermore, nodes at level $\ell$ are in the same initial state in $\alpha_{\mu,D}$, for every $\mu \in M$ and every $D \geq 12$, provided $\ell$ is not the last level of $E_D$ and, hence,
    they transition to the same state when they wake up.
    We call this state $\mathbf{c}_{\mathcal{A}}(\ell,0)$.
    
    Similarly, the nodes at level $D$ are in the same initial state during $\alpha_{\mu,D}$, for all $\mu \in M$.
    Thus, the state to which the nodes at the last level transition after waking up does not depend on the value of the message.
    \begin{lemma} \label{lemma:same_state_upon_waking_up_on_same_graph}
        For all $\mu_0, \mu_1 \in M$, $\mathbf{s}_{\mathcal{A}}(D,\mu_0,\ell,t_{\mathcal{A}}(D,\mu_0,\ell))=\mathbf{s}_{\mathcal{A}}(D,\mu_1,\ell,t_{\mathcal{A}}(D,\mu_1,\ell))$.
    \end{lemma}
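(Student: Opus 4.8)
The plan is to show that the state each level-$\ell$ node occupies at the end of the round in which it wakes up is determined entirely by data independent of the message, so it must coincide for $\mu_0$ and $\mu_1$. Fix $D \ge 12$ and a level $\ell$ with $1 \le \ell \le D$ (the source, at level $0$, is excluded since it is handed the message directly).

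First I would invoke the model's stipulation that every node listens without recording any information up to and including the round in which it wakes up. Every node at level $\ell \ge 1$ is a non-source node, so the content of the signal that wakes it up has no effect on its memory; consequently $\mathbf{s}_{\mathcal{A}}(D,\mu,\ell,t_{\mathcal{A}}(D,\mu,\ell))$ is exactly the initial state the nodes hold just before waking. In the strengthened model the only a priori information available to a node is its own level and whether that level is the final level of the graph, and neither of these depends on $\mu$.

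I would then split on whether $\ell$ is the last level, to align with the notation fixed just before the lemma. If $\ell < D$, the level-$\ell$ nodes know they are not final, so, as observed above the lemma, for every $\mu \in M$ they share the common initial state $\mathbf{c}_{\mathcal{A}}(\ell,0)$. The waking signal---the collision produced when the two level-$(\ell-1)$ nodes transmit simultaneously by Observation~\ref{observation:wake_up_by_nodes_at_previous_level} when $\ell \ge 2$, or the source's first transmission when $\ell = 1$---is discarded, so their state at the end of the wake-up round is $\mathbf{c}_{\mathcal{A}}(\ell,0)$, irrespective of $\mu$. If $\ell = D$, the nodes instead know they are final; their initial state may depend on $D$, but by the same discarding argument it does not depend on $\mu$. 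In both cases the state at the end of the wake-up round is the same for $\mu = \mu_0$ and $\mu = \mu_1$, which is exactly the claim.

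Because the substantive facts were already established in the paragraphs preceding the lemma, I expect no genuine obstacle; the one point that needs care is the final-level case, which is not covered by the definition of $\mathbf{c}_{\mathcal{A}}(\ell,0)$ and so must be handled separately, using that a final node's message-independent initial state is pinned down by $D$ together with its knowledge of being last.
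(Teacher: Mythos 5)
Your proof is correct and takes essentially the same route as the paper, whose ``proof'' is the discussion immediately preceding the lemma: nodes at a given level of $E_D$ share a message-independent initial state (they know only their level and whether it is final), and the wake-up round contributes no message-dependent information, so the state at the end of the wake-up round cannot depend on $\mu$. Your appeal to the model's ``listening without recording'' clause to handle level $1$ (where the waking signal is the source's transmission and could carry $\mu$) is a small but welcome refinement, since the paper's explicit collision-signal argument only covers levels $\ell \geq 2$.
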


    Now, suppose all non-source nodes 
    from some level on
    transmit immediately after waking up.
    This allows us to identify
    the rounds in which nodes at greater levels 
    wake up.
    \begin{lemma} \label{lemma:local_round_i_for_ell_is_ell+i}
        Suppose there exists a level $2 \leq \ell \leq D-1$ such that, for all $\ell' \geq \ell$, nodes in state
        $\mathbf{c}_{\mathcal{A}}(\ell',0)$ transmit.
        If nodes at level $\ell$ wake up at round $t$ of $\alpha_{\mu,D}$,
        then, for all $1 \leq \ell' \leq D-\ell$, the nodes at level $\ell+\ell'$ wake up at round $t+\ell'$ of $\alpha_{\mu,D}$.
    \end{lemma}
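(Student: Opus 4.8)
The plan is to induct on $\ell'$ from $1$ to $D-\ell$, proving at each step that the nodes at level $\ell+\ell'$ wake up at exactly round $t+\ell'$. The key idea is to combine the lower bound already supplied by Lemma \ref{lemma:lower_bound_t+k} with the fact that the standing transmission assumption forces each level to broadcast to the next one exactly one round after waking up.

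First I would dispose of the lower bound uniformly. Since the nodes at level $\ell$ wake up at round $t$, they do not wake up in the first $t-1$ rounds, so Lemma \ref{lemma:lower_bound_t+k} (with its parameter instantiated as $t-1$) gives that, for every $1 \leq \ell' \leq D-\ell$, the nodes at level $\ell+\ell'$ do not wake up in the first $(t-1)+\ell' = t+\ell'-1$ rounds. By the definition of waking up, this means no neighbour of a level-$(\ell+\ell')$ node transmits during rounds $1,\dots,t+\ell'-1$, so such a node is still listening at round $t+\ell'$.

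Then I would establish the matching upper bound by induction. The inductive hypothesis for index $\ell'$ is that the nodes at level $\ell+\ell'-1$ wake up at round $t+\ell'-1$; for $\ell'=1$ this is just the given hypothesis that level $\ell$ wakes up at round $t$. Since $\ell \leq \ell+\ell'-1 \leq D-1$, level $\ell+\ell'-1$ is not the last level, so its nodes transition to state $\mathbf{c}_{\mathcal{A}}(\ell+\ell'-1,0)$ upon waking; by the standing assumption this state transmits, so those nodes transmit at round $(t+\ell'-1)+1 = t+\ell'$. Combining this with the lower bound, at round $t+\ell'$ the nodes at level $\ell+\ell'$ are still listening while their level-$(\ell+\ell'-1)$ neighbours transmit for the first time among all their neighbours; hence they wake up at exactly round $t+\ell'$, which advances the induction.

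The only point requiring care is the bookkeeping that ties the two halves together: I must check that the round in which level $\ell+\ell'-1$ first transmits, namely $t+\ell'$, coincides with the earliest round at which a level-$(\ell+\ell')$ node can possibly wake up, so that this transmission is genuinely the first time any of its neighbours transmits. This is exactly what the lower bound delivers, so no separate argument about the level-$(\ell+\ell'+1)$ neighbours is needed: ``not awake before round $t+\ell'$'' already means no neighbour has transmitted earlier. The remaining facts (that a sleeping node listens through its wake-up round, and that a node cannot transmit before it wakes up) are immediate from the model, so I expect this lemma to be routine once the timing is lined up correctly.
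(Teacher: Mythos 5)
Your proof is correct and follows essentially the same route as the paper's: induction on $\ell'$, using the standing assumption that nodes in state $\mathbf{c}_{\mathcal{A}}(\ell+\ell'-1,0)$ transmit one round after waking, so that level $\ell+\ell'$ is woken at exactly round $t+\ell'$. The only cosmetic difference is that you obtain the lower bound up front via Lemma \ref{lemma:lower_bound_t+k}, whereas the paper invokes Observation \ref{observation:must_receive_first_message_from_previous_level} inside the induction step --- but since Lemma \ref{lemma:lower_bound_t+k} is itself proved from that observation, the two arguments coincide.
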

    \begin{proof}
        Suppose nodes at level $\ell$ wake up at round $t$ of $\alpha_{\mu,D}$.
        Let $\ell' \geq 1$ and assume that the nodes at level $\ell+\ell'-1$ wake up at round $t+\ell'-1$ of $\alpha_{\mu,D}$.
        Then, by Observation \ref{observation:must_receive_first_message_from_previous_level}, the nodes at level $\ell+\ell'$ do not wake up in the first $t+\ell'-1$ rounds of $\alpha_{\mu,D}$.
        The nodes at level $\ell+\ell'-1$ are in state $\mathbf{c}_{\mathcal{A}}(\ell+\ell'-1,0)$ at the end of round $t+\ell'-1$ of $\alpha_{\mu,D}$.
        Since nodes in state $\mathbf{c}_{\mathcal{A}}(\ell+\ell'-1,0)$ transmit, the nodes at level $\ell+\ell'-1$ transmit at round $t+\ell'$ of $\alpha_{\mu,D}$.
        Thus, the nodes at level $\ell+\ell'$ wake up at round $t+\ell'$ of $\alpha_{\mu,D}$.
    \end{proof}

    We highlight the special case in which every node (except possibly the nodes in the last level) transmits in the round after it wakes up.
    \begin{corollary} \label{corollary:local_round_i_for_ell_is_ell+i}
        During $\alpha_{\mu,D}$, suppose
        the source node transmits at round $1$, 
        nodes at level $1$ transmit at round $2$, and,
        for all $2 \leq \ell \leq D-1$, nodes in state $\mathbf{c}_{\mathcal{A}}(\ell,0)$ transmit.
        Then, for all $1 \leq \ell \leq D$, the nodes at level $\ell$ wake up at round $\ell$.
    \end{corollary}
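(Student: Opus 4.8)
The plan is to derive the two base cases---levels $1$ and $2$---directly from the wake-up mechanics on $E_D$, and then obtain every higher level in a single application of Lemma~\ref{lemma:local_round_i_for_ell_is_ell+i}. The whole argument is short because the hard work has already been packaged into that lemma; the corollary just supplies the initial conditions that trigger it.

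First I would argue that the nodes at level $1$ wake up at round $1$. Each level-$1$ node is a neighbour of the source, and at round $1$ the source is the only node that can transmit (every other node is still listening). Since the source transmits at round $1$ by hypothesis, both level-$1$ nodes receive a signal and wake up at round $1$. Next I would handle level $2$. By hypothesis the level-$1$ nodes transmit at round $2$; since both of them are in the same state and therefore both transmit, the level-$2$ nodes receive a collision signal at round $2$. By Corollary~\ref{corollary:lower_bound_at_least_i_for_level_i} the level-$2$ nodes do not wake up in the first round, so round $2$ is indeed their first wake-up round, and they wake up at round $2$.

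Having established that the level-$2$ nodes wake up at round $t = 2$, I would invoke Lemma~\ref{lemma:local_round_i_for_ell_is_ell+i} with $\ell = 2$: its hypothesis---that for all $\ell' \geq 2$ the nodes in state $\mathbf{c}_{\mathcal{A}}(\ell',0)$ transmit---is exactly the standing assumption of the corollary for $2 \leq \ell \leq D-1$. The lemma then yields that for every $1 \leq \ell' \leq D-2$ the nodes at level $2 + \ell'$ wake up at round $2 + \ell'$, i.e.\ every level from $3$ through $D$ wakes up at its index. Combining this with the two base cases gives the claim for all $1 \leq \ell \leq D$.

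The only place that needs care is the verification that a transmission at each level actually produces a wake-up at the next level, and that no level wakes up earlier than claimed. The former relies on the structural fact (already noted for $E_D$) that the two nodes at a level always act in lockstep, so their simultaneous transmission reaches the next level as a collision signal rather than being lost; the latter is guaranteed by Corollary~\ref{corollary:lower_bound_at_least_i_for_level_i}. Once these two points are in place, the corollary follows immediately from Lemma~\ref{lemma:local_round_i_for_ell_is_ell+i}, so I do not expect any substantive obstacle beyond stating the base cases cleanly.
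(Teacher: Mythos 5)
Your proposal is correct and follows essentially the same route the paper intends: the paper states the corollary without proof as an immediate consequence of Lemma~\ref{lemma:local_round_i_for_ell_is_ell+i}, and your argument supplies exactly the expected base cases (level $1$ wakes at round $1$ since only the source can transmit then, level $2$ wakes at round $2$ by the hypothesis on level-$1$ nodes together with Corollary~\ref{corollary:lower_bound_at_least_i_for_level_i}) before invoking the lemma with $\ell=2$, $t=2$. Your side remark about collisions is not even needed, since waking up only requires that at least one neighbour transmit, but it does no harm.
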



    Suppose the nodes at level $\ell$ wake up at round $\ell$.
    We show that, if the nodes at some level cannot distinguish between two messages 
    $t$ rounds after waking up, 
    then the nodes at greater levels cannot distinguish between those two messages $t$ rounds after waking up.
    \begin{lemma} \label{lemma:distinguishability} 
        Let $\mu_0,\mu_1 \in M$.
        Assume that, 
        for all $1 \leq \ell \leq D$, 
        the nodes at level $\ell$ wake up at round $\ell$ 
        of $\alpha_{\mu_0,D}$ and $\alpha_{\mu_1,D}$.
        Furthermore, suppose there exist
        a level $2 \leq \ell \leq D-1$ and $t\geq 0$
        such that,
        for all $0 \leq t' \leq t$,
        the nodes at level $\ell$ are in the same states
        at the end of round $\ell+t'$ of $\alpha_{\mu_0,D}$ and  $\alpha_{\mu_1,D}$.
        Then, the same is true for nodes at levels greater than $\ell$.
    \end{lemma}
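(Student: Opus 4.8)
The plan is to prove the slightly more explicit statement that for every level $\ell < L \le D$ and every $0 \le t' \le t$ we have $\mathbf{s}_{\mathcal{A}}(D,\mu_0,L,L+t') = \mathbf{s}_{\mathcal{A}}(D,\mu_1,L,L+t')$; this is exactly the assertion ``the same is true for nodes at levels greater than $\ell$,'' since by hypothesis level $L$ wakes up at round $L$ in both executions. The whole argument rests on one determinism observation: because $\mathcal{A}$ is deterministic and a transmitting node receives no feedback, the state of level $L$ at the end of round $\tau$ is a fixed function of its state at the end of round $\tau-1$ together with the signal it receives during round $\tau$; and, by the intra-level symmetry already noted, a \emph{listening} level-$L$ node receives a collision in round $\tau$ if and only if level $L-1$ or level $L+1$ transmits in round $\tau$ (and receives nothing otherwise). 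So to push ``same state'' from one round to the next it suffices to show that both the previous state and this received signal coincide in the two executions.

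First I would set up a double induction, the outer one on $t'$ from $0$ to $t$ and the inner one on $L$ from $\ell$ upward. For the outer base case $t'=0$, Lemma~\ref{lemma:same_state_upon_waking_up_on_same_graph} gives that the state into which the nodes at level $L$ transition upon waking up is independent of the message, and since level $L$ wakes up at round $L$ this yields $\mathbf{s}_{\mathcal{A}}(D,\mu_0,L,L)=\mathbf{s}_{\mathcal{A}}(D,\mu_1,L,L)$ for every $L$. The inner base case $L=\ell$ is exactly the hypothesis of the lemma.

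For the inductive step I would fix $1 \le t' \le t$ and a level $L$ with $\ell < L \le D$, and assume the claim for all earlier $t''<t'$ (every level) and for level $L-1$ at the current $t'$. The action of level $L$ in round $L+t'$ is determined by its state at the end of round $L+(t'-1)$, which agrees across the two executions by the outer hypothesis, so level $L$ either transmits in both or listens in both. If it transmits, the new state is a function of the (equal) previous state alone. If it listens, I would check the two possible transmitters: the action of level $L-1$ in round $L+t'=(L-1)+(t'+1)$ is governed by its state at the end of round $(L-1)+t'$, which agrees by the inner hypothesis (or, when $L-1=\ell$, by the hypothesis of the lemma); and the action of level $L+1$ in round $L+t'=(L+1)+(t'-1)$ is governed by its state at the end of round $(L+1)+(t'-2)$, which agrees by the outer hypothesis since $t'-2<t'$. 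When $L+1>D$ there is no such level, and when $t'=1$ level $L+1$ is still in its wake-up round $L+1$ and only listens, so in either exceptional case its contribution is message-independent. Hence the received signal is identical in the two executions, and so is the resulting state.

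The step I expect to be the main obstacle is precisely this dependence of level $L$ on the \emph{higher} level $L+1$: read naively it looks like a forward, circular dependency that the induction cannot absorb. What makes the argument close is the two-round lag, namely that level $L+1$'s transmission in round $L+t'$ is dictated by its state at local time $t'-2$, which the outer induction on $t'$ has already fixed, while the dependence on the lower level $L-1$ is at the same local time $t'$ and is handled by the inner induction in increasing order of $L$. The only care needed is to isolate the small values $t'\in\{0,1\}$ and the boundary $L=D$, all of which are settled by the wake-up behaviour as above.
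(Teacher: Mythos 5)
Your proof is correct and follows essentially the same route as the paper's: a double induction on the local time $t'$ and the level, whose engine is exactly the two-round lag you identify (the action of level $L+1$ in round $L+t'$ is pinned down by its state at local time $t'-2$, already fixed by the outer induction, while the dependence on level $L-1$ at local time $t'$ is handled by the inner induction). The only cosmetic difference is that the paper treats $t'=1$ as a separate explicit base case for all levels, whereas you absorb it into the general step by noting that level $L+1$ is still listening in its wake-up round $L+1$ and hence cannot transmit.
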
 
    \begin{proof}
We prove by induction on $0\leq t' \leq t$ and $\ell < \ell' \leq D$ that the nodes at level $\ell'$ are in the same state 
at the end of round $\ell'+t'$ 
of $\alpha_{\mu_0,D}$ and $\alpha_{\mu_1,D}$. 

For all $\ell+1 \leq \ell' \leq D-1$, the nodes at level $\ell'$ wake up at round $\ell'$ and, hence,
the nodes at level $\ell'$ are in state $\mathbf{c}_{\mathcal{A}}(\ell',0)$ at the end of round $\ell'$
of $\alpha_{\mu_0,D}$ and $\alpha_{\mu_1,D}$.
Furthermore, by Lemma \ref{lemma:same_state_upon_waking_up_on_same_graph}, the nodes at level $D$ are in state 
$\mathbf{s}_{\mathcal{A}}(D,\mu_0,D,D)$ at the end of round $D$
of $\alpha_{\mu_0,D}$ and $\alpha_{\mu_1,D}$.

For all $\ell+2 \leq \ell' \leq D$, since nodes at level $\ell'$ wake up at round $\ell'$, it follows from
Observation \ref{observation:wake_up_by_nodes_at_previous_level} that the nodes at level $\ell'-1$ transmit at round $\ell'$ of $\alpha_{\mu_0,D}$. 
Hence, for all $\ell+1 \leq \ell' \leq D-1$, nodes in state $\mathbf{c}_{\mathcal{A}}(\ell',0)$ transmit.
Note that, there can be only one state to which nodes in state $\mathbf{c}_{\mathcal{A}}(\ell',0)$ transition after transmitting.
Therefore,  for all $\ell+1 \leq \ell' \leq D-1$, nodes at level $\ell'$ are in the same state at the end of
round $\ell'+1$ 
of $\alpha_{\mu_0,D}$ 
and 
$\alpha_{\mu_1,D}$. 

Recall that, at the end of round $D$ of $\alpha_{\mu_1,D}$, the nodes at levels $D-1$ and $D$ are in the same states
as the nodes at levels $D-1$ and $D$ at the end of round $D$ of $\alpha_{\mu_0,D}$.
Thus, at the end of round $D+1$ of $\alpha_{\mu_1,D}$, the nodes at level $D$ transition to the same state as the nodes at level $D$ at the end of round $D+1$ of $\alpha_{\mu_0,D}$.
Therefore, for all $\ell+1 \leq \ell' \leq D$, the nodes at level $\ell'$ are in the same states at the end of rounds $\ell'$ and $\ell'+1$ 
of $\alpha_{\mu_0,D}$ and $\alpha_{\mu_1,D}$. 
Thus, the claim holds for $t'=0$ and $t'=1$.

    Let $t' \geq 2$ and assume the claim holds for $t'-1$ and $t'-2$, and nodes at all levels from $\ell$ to $D$.
    Also, let $\ell < \ell' \leq D$ and assume the claim holds for $t'$ and nodes at level $\ell'-1$.
    
        By the induction hypothesis, 
        the nodes at level $\ell'-1$ are in
        the same state
        at the end of round $\ell'+t'-1$ 
        of $\alpha_{\mu_0,D}$ and 
        $\alpha_{\mu_1,D}$. 
        Similarly, the nodes at level $\ell'$ are in
        the same state 
        at the end of round $\ell'+t'-1$ 
        of $\alpha_{\mu_0,D}$ and 
        $\alpha_{\mu_1,D}$. 
    
        First, suppose that $\ell' = D$.
        Since each node at levels $D-1$ and $D$ is in the same state
        at the end of round $D+t'-1$ 
        of $\alpha_{\mu_0,D}$ and 
        $\alpha_{\mu_1,D}$,
        the nodes at level $D$ are in
        the same state
        at the end of round $D+t'$ 
        of $\alpha_{\mu_0,D}$ and 
        $\alpha_{\mu_1,D}$. 
        Thus, the claim holds for the nodes at level $D$ at the end of round $t'$.
        
        Otherwise, $\ell' \leq D-1$. 
        By the induction hypothesis, the nodes at level $\ell'+1$ are in
        the same state
        at the end of round $(\ell'+1)+t'-2=\ell'+t'-1$ 
        of $\alpha_{\mu_0,D}$ and 
        $\alpha_{\mu_1,D}$. 
        Since each node at levels $\ell'-1$, $\ell'$ and $\ell'+1$ is in the same state
        at the end of round $\ell'+t'-1$ 
        of $\alpha_{\mu_0,D}$ and 
        $\alpha_{\mu_1,D}$,
        the nodes at level $\ell'$ are in
        the same state
        at the end of round $\ell'+t'$ 
        of $\alpha_{\mu_0,D}$ and 
        $\alpha_{\mu_1,D}$. 
        Thus, the claim holds for the nodes at level $\ell'$ at round $t'$.
    \end{proof}

    
    Let $D \geq 12$ and
    assume that the nodes at the last $7$ levels of $E_D$ are in the same sequence of states at the end of some round of $\alpha_{\mu_0,D}$ as they are at the end of some (possibly different) round of $\alpha_{\mu_1,D}$.
    In the next two results, we show that the same is true 
    at the end of the next round of $\alpha_{\mu_0,D}$ and $\alpha_{\mu_1,D}$.
    First, we consider the case in which nodes at level $D-6$ transmit.
    \begin{observation} \label{observation:D-6_to_D_same_seq_of_states_and_D-6_transmit} 
        Suppose that the nodes at levels $D-6$ through $D$ are in the same states at the end of round $t$ of $\alpha_{\mu_0,D}$ as they are at the end of round $t'$ of $\alpha_{\mu_1,D}$, for some $t, t' \geq 1$.
        If the nodes at level $D-6$ transmit at round $t+1$ of $\alpha_{\mu_0,D}$, then
        the nodes at levels $D-6$ through $D$ are in the same states at the end of round $t+1$ of $\alpha_{\mu_0,D}$ as they are at the end of round $t'+1$ of $\alpha_{\mu_1,D}$.
    \end{observation}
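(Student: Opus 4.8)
The plan is to exploit the locality of state evolution on $E_D$. Recall that a node's behaviour in a round---whether it transmits or listens---and its resulting state are both functions of its current state, and that (as established earlier) a listening node at level $\ell \ge 2$ receives a collision signal if and only if the nodes of at least one of levels $\ell-1$ and $\ell+1$ transmit in that round, and receives nothing otherwise. Consequently, the state of level $\ell$ at the end of round $r+1$ is determined by the state of level $\ell$ at the end of round $r$ together with (in case level $\ell$ listens) the transmit/listen decisions of levels $\ell-1$ and $\ell+1$ in round $r+1$, which are themselves determined by the states of those levels at the end of round $r$. Since every level in $\{D-6,\ldots,D\}$ is at least $D-6 \ge 6 \ge 2$, the collision-or-nothing dichotomy applies throughout, and I would phrase the whole argument in terms of these local transition functions.

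First I would treat level $D-6$ separately. By hypothesis its nodes transmit at round $t+1$ of $\alpha_{\mu_0,D}$, and since its state at the end of round $t$ of $\alpha_{\mu_0,D}$ equals its state at the end of round $t'$ of $\alpha_{\mu_1,D}$, its nodes also transmit at round $t'+1$ of $\alpha_{\mu_1,D}$. A transmitting node's next state depends only on its current state, so level $D-6$ ends round $t+1$ of $\alpha_{\mu_0,D}$ in the same state as it ends round $t'+1$ of $\alpha_{\mu_1,D}$. Next I would treat each level $\ell$ with $D-5 \le \ell \le D$. For such $\ell$, both neighbouring levels $\ell-1$ and $\ell+1$ (or just $\ell-1$, when $\ell = D$) lie in $\{D-6,\ldots,D\}$, so by hypothesis they are in matching states at the end of round $t$ of $\alpha_{\mu_0,D}$ and round $t'$ of $\alpha_{\mu_1,D}$; hence their transmit/listen decisions in rounds $t+1$ and $t'+1$ coincide, and therefore the signal received by level $\ell$ (collision or nothing) is the same in the two executions. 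Combined with the matching state of level $\ell$ itself at the end of rounds $t$ and $t'$, this yields matching states of level $\ell$ at the end of rounds $t+1$ and $t'+1$, regardless of whether $\ell$ listens or transmits.

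The only delicate point---and precisely the reason for the transmit hypothesis---is the lower boundary of the window. The neighbour $\ell-1 = D-7$ of level $D-6$ lies outside $\{D-6,\ldots,D\}$, so if level $D-6$ were listening, its received signal could differ between the two executions and the argument would collapse at this level. Requiring that level $D-6$ transmits sidesteps the issue entirely, since a transmitting node ignores whatever its neighbours do. No analogous difficulty arises at the upper boundary, because level $D$ has no neighbour above it and level $D-1$'s upper neighbour is $D$, which is inside the window. Assembling the two cases covers all of levels $D-6$ through $D$ and gives the claim.
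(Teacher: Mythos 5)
Your proof is correct, and it is precisely the locality argument the paper implicitly relies on: the paper states this as an observation without a written proof, since on $E_D$ each level's next state is determined by its own state plus the transmit/listen decisions of the adjacent levels (with the collision-or-nothing dichotomy holding because both nodes at a level act identically). Your handling of the boundary is also exactly right --- the transmit hypothesis at level $D-6$ is needed solely to shield the window from the out-of-window neighbour $D-7$, which is the whole point of the assumption.
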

    
    Now, consider the case in which nodes at level $D-6$ listen.
    \begin{observation} \label{observation:D-6_to_D_same_seq_of_states_and_D-7_same_op} 
        Suppose that the nodes at levels $D-6$ through $D$ are in the same states at the end of round $t$ of $\alpha_{\mu_0,D}$ as they are at the end of round $t'$ of $\alpha_{\mu_1,D}$, for some $t, t' \geq 1$.
        If the nodes at level $D-6$ listen at round $t+1$ of $\alpha_{\mu_0,D}$ and nodes at level $D-7$ perform the same operation at round $t+1$ of $\alpha_{\mu_0,D}$ as they do at round $t'+1$ of $\alpha_{\mu_1,D}$,
        then the nodes at levels $D-6$ through $D$ are in the same states at the end of round $t+1$ of $\alpha_{\mu_0,D}$ as they are at the end of round $t'+1$ of $\alpha_{\mu_1,D}$.
    \end{observation}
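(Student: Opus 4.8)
The plan is to argue level by level that each of the seven levels $D-6,\ldots,D$ transitions to matching states, exploiting the fact that on $E_D$ a node's behaviour is completely determined by its current state together with the signal it hears, and that (as already established) a listening node at any level $\ell\geq 2$ hears only a collision signal or nothing, with a collision occurring precisely when at least one of levels $\ell-1$ and $\ell+1$ transmits. I would first isolate the general transition principle: if a level has the same state at the end of round $t$ of $\alpha_{\mu_0,D}$ as at the end of round $t'$ of $\alpha_{\mu_1,D}$, then its action (transmit or listen) at the next round agrees across the two executions; if that action is to transmit, the resulting states agree immediately, and if it is to listen, the resulting states agree as soon as the signal it hears agrees. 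Thus the whole task reduces to checking, for each listening level in the window, that the transmit/listen decisions of its neighbouring levels agree across the two executions.

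Next I would dispatch the levels according to which neighbours determine their heard signal. For the interior levels $\ell$ with $D-5\leq \ell\leq D-1$, both neighbouring levels $\ell-1$ and $\ell+1$ lie in $\{D-6,\ldots,D\}$, so by hypothesis their end-of-round-$t$ (respectively $t'$) states agree, hence their next-round actions agree; therefore the collision-or-nothing signal heard by level $\ell$ agrees and its new state agrees. Level $D$ is handled the same way but is even simpler, since it has no level $D+1$ and its only relevant neighbour, level $D-1$, lies in the window. This leaves only the boundary level $D-6$, whose neighbours are level $D-5$ (inside the window, handled as above) and level $D-7$ (outside the window).

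The boundary level $D-6$ is the one place the extra hypothesis is needed, and I expect it to be the crux of the argument. Here I would use that level $D-6$ listens at round $t+1$ of $\alpha_{\mu_0,D}$ and, because its states agree, also at round $t'+1$ of $\alpha_{\mu_1,D}$; consequently the only thing that could make its heard signal differ between the executions is the behaviour of level $D-7$. The hypothesis supplies exactly this: level $D-7$ performs the same operation at round $t+1$ of $\alpha_{\mu_0,D}$ as at round $t'+1$ of $\alpha_{\mu_1,D}$, so whether level $D-7$ transmits agrees across the executions. Combining this with the agreement of level $D-5$'s action, the collision-or-nothing signal heard by level $D-6$ agrees, so its new state agrees. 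Having verified all seven levels, I would conclude that levels $D-6$ through $D$ are in the same states at the end of round $t+1$ of $\alpha_{\mu_0,D}$ as at the end of round $t'+1$ of $\alpha_{\mu_1,D}$, as required.
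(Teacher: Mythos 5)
Your proof is correct and matches the reasoning the paper implicitly relies on: the paper states this as an unproved Observation, treating it as immediate from determinism on $E_D$ (same state implies same action; a listening node at level $\geq 2$ hears only collision-or-nothing, determined by whether its neighbouring levels transmit). Your level-by-level check---interior levels and level $D$ resolved by the window hypothesis, the boundary level $D-6$ resolved by the extra hypotheses on its listening and on level $D-7$'s operation---is exactly the intended argument, just written out in full.
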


    Assume that the nodes at the first $\ell+1$ levels are in the same sequence of states at the end of some round of $\alpha_{\mu,D}$ as they are at the end of some (possibly different) round of $\alpha_{\mu,D'}$, for some $D' > D \geq 12$.
    In the next two results, we show that the same is true 
    at the end of the next round of $\alpha_{\mu,D}$ and $\alpha_{\mu,D'}$.
    First, we consider the case in which the nodes at level $\ell$ transmit.
    \begin{observation} \label{observation:same_seq_of_states_and_transmit} 
        Suppose that the nodes at levels $0$ through $\ell$ are in the same states at the end of round $t$ of $\alpha_{\mu,D}$ as they are at the end of round $t'$ of $\alpha_{\mu,D'}$, for some $t, t' \geq 1$.
        If nodes at level $\ell$ transmit at round $t+1$ of $\alpha_{\mu,D}$,
        then the nodes at levels $0$ through $\ell$ are in the same states at the end of round $t+1$ of $\alpha_{\mu,D}$ as they are at the end of round $t'+1$ of $\alpha_{\mu,D'}$.
    \end{observation}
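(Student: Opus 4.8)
The plan is to argue level by level that, under the stated hypotheses, each level in the range $0$ through $\ell$ is forced to make the identical transition in the two executions. The underlying principle is that, in the deterministic model, the operation a node performs (listen or transmit) is determined by its current state, a transmitting node's next state is a function of its current state alone, and a listening node's next state is determined by its current state together with whether its neighbours transmit. Since both nodes at any level are always in the same state, the signal that a listening node at level $j$ receives is determined solely by the operations performed at levels $j-1$ and $j+1$.

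First I would record the structural point that makes the argument work. For every level $j$ with $0 \leq j \leq \ell-1$, both neighbouring levels $j-1$ and $j+1$ lie within the range $0$ through $\ell$ that the hypothesis controls, so the step from round $t$ (resp.\ $t'$) to round $t+1$ (resp.\ $t'+1$) at such a level never references any state outside $\{0,\ldots,\ell\}$, and those states match by assumption. The only level whose neighbour set reaches outside this range is level $\ell$, whose neighbour at level $\ell+1$ is untracked and may genuinely differ between $\alpha_{\mu,D}$ and $\alpha_{\mu,D'}$.

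The main obstacle is precisely this dependence of level $\ell$ on the untracked level $\ell+1$, and the hypothesis that the nodes at level $\ell$ \emph{transmit} at round $t+1$ is exactly what circumvents it. Because the nodes at level $\ell$ are in the same state at the end of round $t$ of $\alpha_{\mu,D}$ and at the end of round $t'$ of $\alpha_{\mu,D'}$, and their operation is determined by their state, they also transmit at round $t'+1$ of $\alpha_{\mu,D'}$; and since a transmitting node's next state is a function of its current state alone, their states after transmitting coincide no matter what level $\ell+1$ does. For each remaining level $0 \leq j \leq \ell-1$, equality of states at the end of round $t$/$t'$ forces the same operation at round $t+1$/$t'+1$: if that operation is a transmission the new states agree as above, and if it is a listen then by the structural point both neighbouring levels perform matching operations — in particular level $\ell-1$ sees level $\ell$ transmit in both executions, using the behaviour of level $\ell$ just established — so the listening level receives the same signal and transitions to the same state. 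Combining these cases over all $j$ from $0$ to $\ell$ yields the claim.
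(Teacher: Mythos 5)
Your proof is correct and is exactly the intended argument: the paper states this as an observation without proof, and the implicit justification is precisely your case split — level $\ell$'s transmission makes its transition independent of the untracked level $\ell+1$, while every level $j\leq \ell-1$ has both neighbours inside the tracked range $\{0,\ldots,\ell\}$, so determinism forces matching operations, matching received signals, and matching transitions. One pedantic note: when exactly one neighbour transmits (e.g., level $1$ hearing only the source), the received \emph{value} depends on the transmitter's state, not just its operation, but since the hypothesis gives state equality at all tracked levels this is covered by your argument as written.
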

    
    Now, we consider the case in which nodes at level $\ell$ listen.
    \begin{observation} \label{observation:same_seq_of_states_and_same_op} 
        Suppose that the nodes at levels $0$ through $\ell$ are in the same states at the end of round $t$ of $\alpha_{\mu,D}$ as they are at the end of round $t'$ of $\alpha_{\mu,D'}$, for some $t, t' \geq 1$.
        If nodes at level $\ell$ listen at round $t+1$ of $\alpha_{\mu,D}$ and nodes at level $\ell+1$ perform the same operation at round $t+1$ of $\alpha_{\mu,D}$ as they do at round $t'+1$ of $\alpha_{\mu,D'}$,
        then the nodes at levels $0$ through $\ell$ are in the same states at the end of round $t+1$ of $\alpha_{\mu,D}$ as they are at the end of round $t'+1$ of $\alpha_{\mu,D'}$.
    \end{observation}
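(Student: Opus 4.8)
The plan is to argue that each node at levels $0$ through $\ell$ undergoes an identical state transition between the end of round $t$ (resp.\ $t'$) and the end of round $t+1$ (resp.\ $t'+1$). Because $\mathcal{A}$ is deterministic, the operation a node performs, and the value it sends if it transmits, are determined by its current state, and its next state is determined by its current state together with the feedback it receives while listening. Since by hypothesis the states of all nodes at levels $0$ through $\ell$ at the end of round $t$ of $\alpha_{\mu,D}$ agree with their states at the end of round $t'$ of $\alpha_{\mu,D'}$, every such node performs the same operation at round $t+1$ of $\alpha_{\mu,D}$ as at round $t'+1$ of $\alpha_{\mu,D'}$. A transmitting node receives no feedback, so its next state matches immediately. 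Thus it remains only to show that every \emph{listening} node at levels $0$ through $\ell$ receives identical feedback in the two executions.

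For this I would use the fact, established at the start of this section, that the two nodes at any level always share a state, so a whole level acts as a single unit: whenever the nodes at a level transmit, a listening neighbour at an adjacent level sees at least two transmitters and hence receives a collision signal, independent of what is transmitted. A listening node at level $k$ with $0 \leq k \leq \ell-1$ has all its neighbours at levels $k-1$ and $k+1$, both inside the window $\{0,\ldots,\ell\}$; their matching states force matching operations, so the feedback agrees across the two executions (the only node that can receive a genuine, non-collision message is the level-$1$ node listening to the single source node, but the source lies in the window, so its transmitted value is itself determined by its matching state and therefore agrees). The one boundary case is a node at level $\ell$, whose neighbours lie at levels $\ell-1$ and $\ell+1$: level $\ell-1$ is inside the window and behaves identically, and for level $\ell+1$ we invoke the extra hypothesis that its nodes perform the same operation in both executions. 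Since the two twin nodes at level $\ell+1$ transmit together, their contribution to level $\ell$'s feedback is a collision in both executions or nothing in both, independent of any transmitted value. Hence the feedback at level $\ell$ also agrees, and every node at levels $0$ through $\ell$ transitions to the same state.

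The crux is precisely this boundary node at level $\ell$: it is the only tracked node with a neighbour outside the window $\{0,\ldots,\ell\}$, which is exactly why the statement needs the additional assumption about level $\ell+1$, an assumption absent from the companion Observation~\ref{observation:same_seq_of_states_and_transmit}, where level $\ell$ transmits and so receives no feedback at all. The twin-node structure of $E_D$ does the essential work, guaranteeing that content sent by level $\ell+1$ never reaches level $\ell$ as a distinguishable message, so that knowing merely the \emph{operation} (listen versus transmit) of level $\ell+1$ suffices to pin down the feedback. This mirrors the reasoning of Observation~\ref{observation:D-6_to_D_same_seq_of_states_and_D-7_same_op} for the top window, with the roles of the two executions and the location of the moving boundary interchanged.
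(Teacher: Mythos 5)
Your proof is correct, and it supplies exactly the justification the paper relies on implicitly: the paper states this as an Observation with no proof, the intended reasoning being determinism (a node's operation, transmitted value, and state transition are fixed by its current state and received feedback) together with the twin-node structure of $E_D$ established at the start of Section \ref{subsection:properties_of_general_algorithm}, which guarantees that level $\ell+1$'s contribution to level $\ell$'s feedback is content-independent (collision or nothing), so that knowing only its \emph{operation} suffices at the one boundary node whose neighbourhood leaves the window. You also correctly isolate why the extra hypothesis is needed here but not in Observation \ref{observation:same_seq_of_states_and_transmit}, and handle the one genuine-message case (the level-$1$ node hearing the lone source) via the source's matching state, so there is nothing to add.
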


    Suppose the nodes at some level do not wake up in the first $t$ rounds of $\alpha_{\mu,D}$.
    The next result shows that the nodes at levels $0$ through $D-1$ cannot distinguish between the executions of $\mathcal{A}$ with message $\mu$ on graphs $E_{D}$ and $E_{D'}$ at the end of round $t$.
    \begin{lemma} \label{lemma:one_can_show_by_induction}
        Suppose that nodes at level $2 \leq \ell \leq D$ do not wake up in the first $t$ rounds of $\alpha_{\mu,D}$. 
        Then, at the end of round $t$ of $\alpha_{\mu,D}$, 
        the nodes at levels $0$ through $\ell-1$ are in the same states as they are at the end of round $t$ of $\alpha_{\mu,D'}$, 
        and nodes at level $\ell$ do not wake up in the first $t$ rounds of $\alpha_{\mu,D'}$.
    \end{lemma}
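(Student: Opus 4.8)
The plan is to prove the statement by induction on $t$, establishing both conclusions simultaneously: (a) the nodes at levels $0$ through $\ell-1$ are in the same states at the end of round $t$ of $\alpha_{\mu,D}$ and of $\alpha_{\mu,D'}$, and (b) the nodes at level $\ell$ do not wake up in the first $t$ rounds of $\alpha_{\mu,D'}$. Fix $\ell$ with $2 \leq \ell \leq D$ and note that the hypothesis that level $\ell$ does not wake up in the first $t$ rounds of $\alpha_{\mu,D}$ is monotone in $t$, so whenever it holds for $t$ it also holds for $t-1$; this lets the induction hypothesis apply to round $t-1$. For the base cases $t=0$ and $t=1$ I would argue directly: before round $1$ all nodes are in their initial states, and since every level from $0$ to $\ell-1 \leq D-1$ is a non-final level in both $E_D$ and $E_{D'}$ these initial states agree (the source holds the same message $\mu$), while level $\ell$ has not woken up as no round has occurred; at round $1$ only the source transmits, so the reception of every node at levels $0$ through $\ell-1$ depends only on those levels, giving (a), and no neighbour of level $\ell$ transmits, giving (b).

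For the inductive step ($t \geq 2$), I would assume the claim for $t-1$, so levels $0$ through $\ell-1$ agree at the end of round $t-1$ and level $\ell$ is asleep through round $t-1$ of $\alpha_{\mu,D'}$. The crucial point is that level $\ell-1$ must \emph{listen} at round $t$ of $\alpha_{\mu,D}$: were it to transmit, level $\ell$ would wake up at round $t$, contradicting the hypothesis. Because the executions agree on levels $0$ through $\ell-1$ at the end of round $t-1$, level $\ell-1$ performs the same operation in both, hence it listens at round $t$ of $\alpha_{\mu,D'}$ as well. Moreover level $\ell$ is still in its pre-wakeup listening mode at the start of round $t$ in both executions, so it performs the same operation (listen) at round $t$. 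Applying Observation~\ref{observation:same_seq_of_states_and_same_op} with the role of its ``$\ell$'' played by $\ell-1$ (so that its ``$\ell+1$'' is our level $\ell$) then yields conclusion (a).

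For conclusion (b), I would show that neither neighbour of level $\ell$ transmits at round $t$ of $\alpha_{\mu,D'}$. Level $\ell-1$ listens, and hence does not transmit, at round $t$ of $\alpha_{\mu,D'}$, as just established. For level $\ell+1$, which exists in $E_{D'}$ since $\ell+1 \leq D+1 \leq D'$, the induction hypothesis gives that level $\ell$ is asleep through round $t-1$ of $\alpha_{\mu,D'}$, so Observation~\ref{observation:must_receive_first_message_from_previous_level} shows level $\ell+1$ is asleep through round $t$ and therefore silent at round $t$. Thus no neighbour of level $\ell$ transmits at round $t$, and since level $\ell$ was already asleep through round $t-1$, it remains asleep through round $t$, establishing (b).

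The main obstacle I anticipate is the apparent circularity between (a) and (b): the evolution of levels $0$ through $\ell-1$ depends on level $\ell$ through the signals level $\ell-1$ receives, while level $\ell$ remaining asleep depends on level $\ell-1$ staying silent. This is resolved by threading both conjuncts through a single induction and using the hypothesis on $\alpha_{\mu,D}$ to certify that level $\ell-1$ listens at each round before transferring that behaviour to $\alpha_{\mu,D'}$. A secondary point requiring care is the boundary case $\ell = D$, where level $\ell+1$ does not exist in $E_D$ and level $D$ may even have a different initial state in $E_D$ than in $E_{D'}$; however, since an un-woken node merely listens regardless of its initial state, this affects neither the silence of level $\ell$ nor the agreement of levels $0$ through $\ell-1$.
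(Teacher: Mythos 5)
Your proof is correct and follows essentially the same route as the paper's: induction on $t$, with both conclusions threaded through one induction, using Observation~\ref{observation:same_seq_of_states_and_same_op} (with its ``$\ell$'' playing the role of $\ell-1$) for the state agreement and Observation~\ref{observation:must_receive_first_message_from_previous_level} plus the silence of level $\ell-1$ for keeping level $\ell$ asleep in $\alpha_{\mu,D'}$. If anything, you are slightly more explicit than the paper, which verifies that level $\ell-1$ does not transmit at round $t$ of $\alpha_{\mu,D}$ only later, when proving the second conclusion, even though this fact is already a hypothesis of Observation~\ref{observation:same_seq_of_states_and_same_op} as applied for the first conclusion; your handling of the $\ell=D$ initial-state subtlety (un-woken nodes listen regardless of state) is likewise a correct explicit treatment of a point the paper leaves implicit.
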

    \begin{proof}
        We prove the claim by induction on the round number $t \geq 1$.
        
        By the model, every node (except possibly the nodes at the last level of $E_D$) is in the same initial state in $\alpha_{\mu,D}$ and $\alpha_{\mu,D'}$.
        In $\alpha_{\mu,D}$ and $\alpha_{\mu,D'}$, the source node transitions to the same state after it is given message $\mu$ and, hence, the source node performs the same operation at round $1$ of $\alpha_{\mu,D}$ and $\alpha_{\mu,D'}$.
        Thus, the nodes at levels $0$ and $1$ transition to the same state at the end of round $1$ of $\alpha_{\mu,D}$ as the nodes at levels $0$ and $1$ at the end of round $1$ of $\alpha_{\mu,D'}$.
        By Corollary \ref{corollary:lower_bound_at_least_i_for_level_i}, the nodes at levels $2$ and greater receive nothing at round $1$ and, hence, they  are in the same states at the end of round $1$ of $\alpha_{\mu,D}$ and $\alpha_{\mu,D'}$.
        Therefore, 
        at the end of round $1$ of $\alpha_{\mu,D}$, 
        the nodes at levels $0$ through $\ell-1$ are in the same states as they are at the end of round $1$ of $\alpha_{\mu,D'}$, 
        and nodes at level $\ell$ do not wake up in the first round of $\alpha_{\mu,D'}$.
        Thus, the claim holds for $t=1$.
        
        Let $t\geq 2$ and assume the claim is true for $t-1$.
        Suppose nodes at level $\ell$ do not wake up in the first $t$ rounds of $\alpha_{\mu,D}$. 
        Then, nodes at level $\ell$ do not wake up in the first $t-1$ rounds of $\alpha_{\mu,D}$.
        By the induction hypothesis, 
        at the end of round $t-1$ of $\alpha_{\mu,D}$, 
        the nodes at levels $0$ through $\ell-1$ are in the same states as they are at the end of round $t-1$ of $\alpha_{\mu,D'}$, 
        and nodes at level $\ell$ do not wake up in the first $t-1$ rounds of $\alpha_{\mu,D'}$.
        Since nodes at level $\ell$ do not wake up in the first $t-1$ rounds of $\alpha_{\mu,D}$ and $\alpha_{\mu,D'}$, the nodes at level $\ell$ listen at round $t$ of $\alpha_{\mu,D}$ and $\alpha_{\mu,D'}$.
        By Observation \ref{observation:same_seq_of_states_and_same_op}, at the end of round $t$ of $\alpha_{\mu,D}$, 
        the nodes at levels $0$ through $\ell-1$ are in the same states as they are at the end of round $t$ of $\alpha_{\mu,D'}$, 
        
        By Observation \ref{observation:must_receive_first_message_from_previous_level}, nodes at level greater than $\ell$ do not wake up in the first $t-1$ rounds of $\alpha_{\mu,D'}$ and, hence, they listen at round $t$ of $\alpha_{\mu,D'}$. 
        By assumption, the nodes at level $\ell$ do not wake up in the first $t$ rounds of $\alpha_{\mu,D}$.
        Thus, the nodes at level $\ell-1$ do not transmit at round $t$ of $\alpha_{\mu,D}$. 
        Since the nodes at level $\ell-1$ are in the same state at the end of round $t-1$ of $\alpha_{\mu,D}$ and $\alpha_{\mu,D'}$, the nodes at level $\ell-1$ do not transmit at round $t$ of $\alpha_{\mu,D'}$. 
        Therefore, nodes at levels $\ell$ do not wake up in the first $t$ rounds of $\alpha_{\mu,D'}$.
    \end{proof}
    
    Now, we show that the nodes at levels $2$ through $D$ wake up at the same round during the executions of $\mathcal{A}$ with the same message on graphs $E_{D}$ and $E_{D'}$.
    \begin{lemma} \label{lemma:wake_up_at_same_round_on_E_D0_and_E_D1}
        Suppose nodes at level $2 \leq \ell \leq D$ wake up at round $t$ of $\alpha_{\mu,D}$. 
        Then, the nodes at level $\ell$ wake up at round $t$ of $\alpha_{\mu,D'}$. 
    \end{lemma}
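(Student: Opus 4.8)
The plan is to reduce the statement to Lemma \ref{lemma:one_can_show_by_induction}, which already transfers both the non-wake-up of level $\ell$ and the state-agreement of levels $0$ through $\ell-1$ from $\alpha_{\mu,D}$ to $\alpha_{\mu,D'}$, and then to use the determinism of $\mathcal{A}$ to force level $\ell$ to wake up at exactly round $t$ in the larger graph as well. Concretely, I would show separately that level $\ell$ cannot wake up before round $t$ in $\alpha_{\mu,D'}$ and that it must wake up at round $t$ there.

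First I would record that, since $\ell \geq 2$, Corollary \ref{corollary:lower_bound_at_least_i_for_level_i} gives $t \geq \ell \geq 2$, so $t-1 \geq 1$ and Lemma \ref{lemma:one_can_show_by_induction} may be applied at round $t-1$. Because level $\ell$ wakes up at round $t$ of $\alpha_{\mu,D}$, it does not wake up in the first $t-1$ rounds of $\alpha_{\mu,D}$. Applying Lemma \ref{lemma:one_can_show_by_induction} with this $t-1$ yields two facts: (i) the nodes at levels $0$ through $\ell-1$ are in the same states at the end of round $t-1$ of $\alpha_{\mu,D}$ as at the end of round $t-1$ of $\alpha_{\mu,D'}$; and (ii) the nodes at level $\ell$ do not wake up in the first $t-1$ rounds of $\alpha_{\mu,D'}$. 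Fact (ii) already shows level $\ell$ cannot wake up before round $t$ in the larger graph, so it remains only to show it does wake up at round $t$ there.

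For the remaining direction, I would invoke Observation \ref{observation:wake_up_by_nodes_at_previous_level}: since level $\ell$ wakes up at round $t$ of $\alpha_{\mu,D}$, the nodes at level $\ell-1$ transmit at round $t$ of $\alpha_{\mu,D}$. By fact (i) the nodes at level $\ell-1$ are in identical states at the end of round $t-1$ in both executions, and since $\mathcal{A}$ is deterministic they perform the same operation at round $t$; hence the nodes at level $\ell-1$ also transmit at round $t$ of $\alpha_{\mu,D'}$. By fact (ii), the nodes at level $\ell$ are still in their initial listening phase at round $t$ of $\alpha_{\mu,D'}$, so they receive this transmission and wake up at round $t$, as required.

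I do not expect a serious obstacle, since Lemma \ref{lemma:one_can_show_by_induction} carries the inductive weight. The one point needing care is confirming that level $\ell-1$ genuinely replicates its transmitting behaviour in $E_{D'}$; this is exactly fact (i) combined with determinism, and it is what couples the two graphs. A minor bookkeeping check is that $\ell \leq D < D'$ guarantees level $\ell$ is an ordinary (non-terminal) level of $E_{D'}$, so its wake-up behaviour is the standard one and no special case for the last level arises.
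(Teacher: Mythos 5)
Your proposal is correct and follows essentially the same route as the paper: both apply Lemma \ref{lemma:one_can_show_by_induction} at round $t-1$ to get state-agreement of level $\ell-1$ and non-wake-up of level $\ell$ in $\alpha_{\mu,D'}$, then use Observation \ref{observation:wake_up_by_nodes_at_previous_level} plus determinism to conclude level $\ell-1$ transmits at round $t$ in both executions, waking level $\ell$. Your extra bookkeeping (checking $t-1\geq 1$ via Corollary \ref{corollary:lower_bound_at_least_i_for_level_i}, and that $\ell < D'$ so no last-level special case arises) is a harmless refinement the paper leaves implicit.
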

    \begin{proof}
        By assumption, the nodes at level $\ell$ do not wake up in the first $t-1$ rounds of $\alpha_{\mu,D}$.
        By Observation \ref{observation:wake_up_by_nodes_at_previous_level}, nodes at level $\ell-1$ transmit at round $t$ of $\alpha_{\mu,D}$.
        By Lemma \ref{lemma:one_can_show_by_induction}, the nodes at level $\ell-1$ are in the same state at the end of round $t-1$ of $\alpha_{\mu,D}$ and $\alpha_{\mu,D'}$, and the nodes at level $\ell$ do not wake up in the first $t-1$ rounds of $\alpha_{\mu,D'}$.
        Since nodes at level $\ell-1$ transmit at round $t$ of $\alpha_{\mu,D}$, nodes at level $\ell-1$ transmit at round $t$ of $\alpha_{\mu,D'}$.
        Thus, the nodes at level $\ell$ wake up at round $t$ of $\alpha_{\mu,D'}$. 
    \end{proof}


\subsection{Construction of $\mathcal{B}'$}  \label{subsection:construction_of_B'} 
    
    
    
    
    In this section, we construct an algorithm $\mathcal{B}'$ from $\mathcal{B}$,  which still terminates within $D+r-1$ rounds, but the nodes at each level wake up as early as possible. 
    This reduces the number of cases that will have to be considered.
    In particular, nodes at level $D$ wake up at round $D$ of 
    $\mathcal{B}'$ and, hence,
    they distinguish between
    all $m= w_{r-1} +1$  messages within $r-1$ rounds after waking up.
    We show that, for all $2\leq \ell \leq D-1$, 
    during $\mathcal{B}'$,
    all nodes at level $\ell$ also distinguish between all source messages within $r-1$ rounds after waking up.
    These properties are formalised in the following definition.
    %
    %
    \begin{definition}
        An algorithm {$\mathcal{A}$ has property $\textit{P}(k)$} if, for all $D \geq 12$, 
        \begin{enumerate}
            \item for all $\mu \in \{1,\ldots, w_{k-1}+1\}$ and for all $1 \leq \ell \leq D$,
                  the nodes at level $\ell$ wake up at round $\ell$ 
                  during the execution of $\mathcal{A}$ with message $\mu$ on $E_D$ and,
            \item for all distinct $\mu,\mu' \in \{1,\ldots,w_{k-1}+1\}$ and for all $2 \leq \ell \leq D-1$, 
                  there exists $0 \leq t_{\ell} \leq k-1$ such that
                  the nodes at level $\ell$ are in different states at the end of round $\ell+t_{\ell}$ during the executions of $\mathcal{A}$ with messages $\mu$ and $\mu'$ on $E_{D}$.
        \end{enumerate} 
    \end{definition}
    
    We will show that algorithm $\mathcal{B}'$ has property $\textit{P}(r)$.
    %
    First, we show that, in $\mathcal{B}$, there are finitely many levels $\ell \geq 2$ in which nodes listen immediately after waking up.
    \begin{lemma} 
        Let $2 \leq \ell_{1} < \ell_2 < \cdots < \ell_{k}$ be an ordered list of levels such that,
        for all $\ell \in \{\ell_{1},\ell_2,\ldots,\ell_{k}\}$, nodes in state $\mathbf{c}_{\mathcal{B}}(\ell,0)$ listen.
        Then, $k < r$
    \end{lemma}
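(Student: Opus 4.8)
The plan is to argue by contradiction: suppose $k \ge r$, and fix any graph $E_D$ with $D \ge \max(12,\ell_k+1)$ together with any message $\mu \in \{1,\ldots,w_{r-1}+1\}$, so that all of the listening levels $\ell_1,\ldots,\ell_k$ are interior levels (strictly below the last level) of $E_D$. Since $\mathcal{B}$ is assumed to terminate within $D+r-1$ rounds, every node wakes up, so the results of Section \ref{subsection:properties_of_general_algorithm} apply with $\mathcal{A}=\mathcal{B}$. For each $1 \le \ell \le D$, let $t_\ell$ denote the round in which the nodes at level $\ell$ wake up during $\alpha_{\mu,D}$. The goal is to show that the extra round of delay forced at each listening level accumulates, pushing $t_D \ge D+r$, i.e. forcing the nodes at level $D$ to wake up only after the claimed termination deadline.

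The heart of the argument is a per-level delay inequality for the wake-up rounds. First I would record the general bound $t_{\ell+1} \ge t_\ell + 1$: by Observation \ref{observation:wake_up_by_nodes_at_previous_level} the nodes at level $\ell$ transmit in round $t_{\ell+1}$, and since a node first executes its algorithm in the round after it wakes up, level $\ell$ cannot transmit before round $t_\ell+1$. For a listening level $\ell \in \{\ell_1,\ldots,\ell_k\}$ the nodes wake up into state $\mathbf{c}_{\mathcal{B}}(\ell,0)$ (well defined because $\ell$ is not the last level) and therefore \emph{listen} in round $t_\ell+1$ rather than transmit; hence their first transmission, and thus $t_{\ell+1}$, occurs at round $\ge t_\ell+2$. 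So each listening level contributes one extra unit to the gap $t_{\ell+1}-t_\ell$.

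Finally, telescoping gives $t_D = t_1 + \sum_{\ell=1}^{D-1}(t_{\ell+1}-t_\ell) \ge 1 + (D-1) + k = D+k$, using $t_1 \ge 1$ and the fact that each of the $k$ listening levels (all at most $D-1$) adds one to the sum. If $k \ge r$ this yields $t_D \ge D+r > D+r-1$, so the nodes at level $D$ have not even woken up by round $D+r-1$ and cannot have terminated by then, contradicting the assumption on $\mathcal{B}$; hence $k<r$. The step requiring the most care is the bookkeeping of the delay: one must ensure that every listening level is genuinely interior (which is why $D$ is taken larger than $\ell_k$, so that its extra round of delay actually propagates forward to level $D$) and that the per-level inequalities combine additively rather than being absorbed along the way. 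Everything else reduces to the wake-up observations already established in Section \ref{subsection:properties_of_general_algorithm}.
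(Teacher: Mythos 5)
Your proposal is correct and takes essentially the same approach as the paper: the paper likewise accumulates one extra round of wake-up delay per listening level (via an induction showing the nodes at level $\ell_i+1$ do not wake up in the first $\ell_i+i$ rounds, pushed out to the last level of $E_{\ell_k+12}$ using Lemma \ref{lemma:lower_bound_t+k}) and then contradicts the assumed $D+r-1$ termination bound, exactly as you do. Your telescoping of the wake-up rounds $t_{\ell+1}-t_\ell \geq 1$, with $t_{\ell+1}-t_\ell \geq 2$ at listening levels, is merely a tidier bookkeeping of the same delay-accumulation argument, resting on the same Observation \ref{observation:wake_up_by_nodes_at_previous_level}.
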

    \begin{proof}
        Let $D' = \ell_{k}+12$.
        We will show by induction on $1 \leq i \leq k$, that nodes at level $\ell_{i}+1$ do not wake up in the first $\ell_{i}+i$ rounds during the execution of $\mathcal{B}$ on $E_{D'}$.
        This implies that the nodes at level $\ell_{k}+1$ do not wake up in the first $\ell_{k}+k$ rounds.
        By Lemma \ref{lemma:lower_bound_t+k}, the nodes at level $D'=(\ell_{k}+1)+11$ do not wake up in the first $(\ell_{k}+k)+11=D'+k-1$ rounds.
        Thus, $k<r$ because we assumed that all nodes terminate within $D'+r-1$ rounds during the execution of $\mathcal{B}$ on $E_{D'}$.
        
        By Corollary \ref{corollary:lower_bound_at_least_i_for_level_i}, the nodes at level $\ell_{1}$ do not wake up in the first $\ell_{1}-1$ rounds.
        Similarly, the nodes at level $\ell_{1}+1$ do not wake up in the first $\ell_{1}$ rounds.
        Since nodes in state $\mathbf{c}_{\mathcal{B}}(\ell_1,0)$ listen, nodes at level $\ell_{1}$ listen after waking up.
        Hence, nodes at level $\ell_{1}$ listen at round $\ell_{1}+1$.
        Thus, the nodes at level $\ell_{1}+1$ do not wake up in the first $\ell_{1}+1$ rounds.
        Therefore, the claim holds for $i=1$.
        
        Let $i\geq 2$ and assume that the nodes at level $\ell_{i-1}+1$ do not wake up in the first $\ell_{i-1}+i-1$ rounds.
        Then, by Lemma \ref{lemma:lower_bound_t+k}, nodes at level $\ell_{i}$ do not wake up in the first $\ell_{i-1}+i-1+(\ell_{i}-\ell_{i-1}-1)=\ell_{i}+i-2$ rounds.
        Similarly, the nodes at level $\ell_{i}+1$ do not wake up in the first $\ell_{i}+i-1$ rounds.
        Since nodes in state $\mathbf{c}_{\mathcal{B}}(\ell_i,0)$ listen, nodes at level $\ell_{i}$ listen after waking up.
        Hence, nodes at level $\ell_{i}$ listen at round $\ell_{i}+i$.
        Thus, the nodes at level $\ell_{i}+1$ do not wake up in the first $\ell_{i}+i$ rounds.
    \end{proof}

    Let $\ell_0 \geq 11$ be such that nodes in state $\mathbf{c}_{\mathcal{B}}(\ell,0)$ transmit 
    for all $\ell \geq \ell_0$. 
    We construct algorithm $\mathcal{B}'$ 
    so that,
    for all $D \geq 12$,
    every
    execution of $\mathcal{B}'$ with a message $\mu \in \{1,\ldots, w_{r-1}+1\}$
    on $E_D$
    simulates 
    the execution of $\mathcal{B}$ with the same message
    on $E_{\ell_0+D}$:
    During the execution of $\mathcal{B}'$ 
    on $E_D$, 
    the source node simulates levels $0$ through $\ell_0$ in $E_{\ell_0+D}$ and
    each non-source node
    simulates a node $\ell_0$ levels higher in $E_{\ell_0+D}$.
    Nodes at level $\ell$ wake up at round $\ell$ and start simulating the nodes at level $\ell_0+\ell$ 
    of $E_{\ell_0+D}$ from the round in which they wake up
    in $\mathcal{B}$.

    Now, we define the set of states $S_{\mathcal{B}}(\ell_0,\mu)$
    for the source node during the execution of algorithm $\mathcal{B}'$ with message $\mu$.
    We will also use this set of states for the algorithm constructed in section \ref{subsection:Construction_of_Algorithm}. Hence, we construct this set of states from an arbitrary algorithm $\mathcal{A}$ instead of from $\mathcal{B}$ and an arbitrary level $\ell' \geq 2$ instead of $\ell_0$.
    Let 
    \begin{equation*}
        S_{\mathcal{A}}(\ell',\mu)=\{ [\mathbf{s}_{\mathcal{A}}(D,\mu,0,t),\ldots, \mathbf{s}_{\mathcal{A}}(D,\mu,\ell',t)] \mid \text{for all $D \geq \max\{12,\ell'+1\}$ and all $t \geq 1$} \}
    \end{equation*}
    consist of all sequences of states of nodes at levels $0$ through $\ell'$ that can occur at the end of some round during the execution of $\mathcal{A}$ with message $\mu$ on $E_D$, where $D$ is sufficiently large.
    Note that $S_{\mathcal{A}}(\ell',\mu)$ is disjoint from $S_{\mathcal{A}}(\ell',\mu')$, for $\mu \neq \mu'$, since the source node always knows the message.

        For any state $[\mathbf{s}_{\mathcal{A}}(D,\mu,0,t), \ldots, \mathbf{s}_{\mathcal{A}}(D,\mu,\ell',t)] \in S_{\mathcal{A}}(\ell',\mu)$, the source node performs the same operation as nodes in state $\mathbf{s}_{\mathcal{A}}(D,\mu,\ell',t)$ of $\mathcal{A}$.  
    
        If the source node transmits in state $[\mathbf{s}_{\mathcal{A}}(D,\mu,0,t), \ldots, \mathbf{s}_{\mathcal{A}}(D,\mu,\ell',t)] \in S_{\mathcal{A}}(\ell',\mu)$, then it transitions to state $[\mathbf{s}_{\mathcal{A}}(D,\mu,0,t+1), \ldots, \mathbf{s}_{\mathcal{A}}(D,\mu,\ell',t+1)]$.
    This transition is well defined by Observation \ref{observation:same_seq_of_states_and_transmit}.
    
    Now, assume the source node in state $[\mathbf{s}_{\mathcal{A}}(D,\mu,0,t), \ldots, \mathbf{s}_{\mathcal{A}}(D,\mu,\ell',t)]$ listens.
    We separately define the transitions from this state
    for the cases when the source node receives a collision signal or nothing. 
        Suppose that the source node receives a collision signal while listening in state $[\mathbf{s}_{\mathcal{A}}(D,\mu,0,t), \ldots, \mathbf{s}_{\mathcal{A}}(D,\mu,\ell',t)]$.
        If nodes at level $\ell'+1$ transmit at round $t+1$ during the execution of $\mathcal{A}$ with message $\mu$ on $E_{D}$, then the source node in state $[\mathbf{s}_{\mathcal{A}}(D,\mu,0,t), \ldots, \mathbf{s}_{\mathcal{A}}(D,\mu,\ell',t)]$ transitions to state $[\mathbf{s}_{\mathcal{A}}(D,\mu,0,t+1), \ldots, \mathbf{s}_{\mathcal{A}}(D,\mu,\ell',t+1)]$.
        If nodes at level $\ell'+1$ listen at round $t+1$ during the execution of $\mathcal{A}$ with message $\mu$ on $E_{D}$, but there exist $t'$ and $D'\geq \max\{\ell'+1,12\}$
        such that
        the nodes at levels $0$ through $\ell'$ are in states $\mathbf{s}_{\mathcal{A}}(D,\mu,0,t), \ldots, \mathbf{s}_{\mathcal{A}}(D,\mu,\ell',t)$, respectively,
        at the end of round $t'$ during the execution of $\mathcal{A}$ with message $\mu$ on $E_{D'}$ and
        the nodes at level $\ell'+1$ transmit at round $t'+1$, then the source node in state $[\mathbf{s}_{\mathcal{A}}(D,\mu,0,t), \ldots, \mathbf{s}_{\mathcal{A}}(D,\mu,\ell',t)]$ transitions to state $[\mathbf{s}_{\mathcal{A}}(D',\mu,0,t'+1), \ldots, \mathbf{s}_{\mathcal{A}}(D',\mu,\ell',t'+1)]$.
        By Observation \ref{observation:same_seq_of_states_and_same_op}, this transition is well defined.
    If no such $t'$ and $D'$ exist, then, during the construction of the algorithm $\mathcal{A}'$, we make sure that the source node does not receive a collision signal while listening in state $[\mathbf{s}_{\mathcal{A}}(D,\mu,0,t), \ldots, \mathbf{s}_{\mathcal{A}}(D,\mu,\ell',t)]$.
    
        The last case is when the source node receives nothing while listening in state  $[\mathbf{s}_{\mathcal{A}}(D,\mu,0,t), \ldots, \break \mathbf{s}_{\mathcal{A}}(D,\mu,\ell',t)]$.
        If nodes at level $\ell'+1$ listen at round $t+1$ during the execution of $\mathcal{A}$ with message $\mu$ on $E_{D}$, then the source node in state $[\mathbf{s}_{\mathcal{A}}(D,\mu,0,t), \ldots, \mathbf{s}_{\mathcal{A}}(D,\mu,\ell',t)]$ transitions to state $[\mathbf{s}_{\mathcal{A}}(D,\mu,0,t+1), \ldots, \mathbf{s}_{\mathcal{A}}(D,\mu,\ell',t+1)]$.
        Note that, in this case, the nodes at level $\ell'$ might receive a collision signal at round $t+1$ during the execution of $\mathcal{A}$ with message $\mu$ on $E_{D}$ from the nodes at level $\ell'-1$.
        If nodes at level $\ell'+1$ transmit at round $t+1$ during the execution of $\mathcal{A}$ with message $\mu$ on $E_{D}$, but there exist $t'$ and $D'\geq \max\{\ell'+1,12\}$
        such that
        the nodes at levels $0$ through $\ell'$ are in states $\mathbf{s}_{\mathcal{A}}(D,\mu,0,t), \ldots, \mathbf{s}_{\mathcal{A}}(D,\mu,\ell',t)$, respectively,
        at the end of round $t'$ during the execution of $\mathcal{A}$ with message $\mu$ on $E_{D'}$ and
        the nodes at level $\ell'+1$ listen at round $t'+1$, then the source node in state $[\mathbf{s}_{\mathcal{A}}(D,\mu,0,t), \ldots, \mathbf{s}_{\mathcal{A}}(D,\mu,\ell',t)]$ transitions to state $[\mathbf{s}_{\mathcal{A}}(D',\mu,0,t'+1), \ldots, \mathbf{s}_{\mathcal{A}}(D',\mu,\ell',t'+1)]$.
        By Observation \ref{observation:same_seq_of_states_and_same_op}, this transition is well defined.
    If no such $t'$ and $D'$ exist, then, during the construction of the algorithm $\mathcal{A}'$, we make sure that the source node receives a collision signal while listening in state $[\mathbf{s}_{\mathcal{A}}(D,\mu,0,t), \ldots, \mathbf{s}_{\mathcal{A}}(D,\mu,\ell',t)]$.

    \medskip

    Now we define the algorithm $\mathcal{B}'$ of $E_D$.
    \begin{itemize}
        \item $S_{\mathcal{B}}(\ell_0,\mu)$ is the set of states of the source node during the execution of $\mathcal{B}'$ with message $\mu$.
        Upon receiving message $\mu$ to broadcast, the source node transitions to state 
        $[\mathbf{s}_{\mathcal{B}}(\ell_0+1,\mu,0,t_{\mathcal{B}}(\ell_0+1,\mu,\ell_0)),\ldots, \mathbf{s}_{\mathcal{B}}(\ell_0+1,\mu,\ell_0,t_{\mathcal{B}}(\ell_0+1,\mu,\ell_0))]$.
        
        \item 
        For all $1 \leq \ell \leq D-1$, 
        then nodes at level $\ell$ transition to state $\mathbf{c}_{\mathcal{B}}(\ell_0+\ell,0)$ after waking up.
        After waking up, the nodes at level $D$ transition to state $\mathbf{s}_{\mathcal{B}}(\ell_0+D,\mu,\ell_0+D,t_{\mathcal{B}}(\ell_0+D,\mu,\ell_0+D))$.
        Recall that, by Lemma \ref{lemma:same_state_upon_waking_up_on_same_graph}, the state to which the nodes at the last level transition after waking up does not depend on $\mu$.
        
        \item At every round in which the nodes at level $1$ receive a message, they act as though they received a collision signal.
    \end{itemize}

    Next, we show that
    every
    execution of $\mathcal{B}'$
    on $E_D$
    simulates 
    the execution of $\mathcal{B}$ with the same message
    on $E_{\ell_0+D}$.
    \begin{lemma} \label{lemma:k_values:q_0_transmit}
        For all $D \geq 12$, all $\mu \in \{1,\ldots, w_{r-1}+1\}$ and all $t\geq 0$, 
        the source node is in state
        $[\mathbf{s}_{\mathcal{B}}(\ell_0+D,\mu,0,t_{\mathcal{B}}(\ell_0+D,\mu,\ell_0)+t+1), \ldots, \mathbf{s}_{\mathcal{B}}(\ell_0+D,\mu,\ell_0,t_{\mathcal{B}}(\ell_0+D,\mu,\ell_0)+t+1)]$
        at the end of round $t+1$ of the execution of $\mathcal{B}'$ with message $\mu$ on $E_{D}$.
        Furthermore, for all $1 \leq \ell \leq D$, 
        during the execution of $\mathcal{B}'$ with message $\mu$ on $E_{D}$,
        the nodes at level $\ell$ wake up at round $\ell$ and,  
        at the end of round $\ell+t$, 
        the nodes at level $\ell$ are in the same state
        as the nodes at level $\ell_0+\ell$
        at the end of round $t_{\mathcal{B}}(\ell_0+D,\mu,\ell_0+\ell)+t$ 
        of the execution of $\mathcal{B}$ with message $\mu$ on $E_{\ell_0+D}$.
    \end{lemma}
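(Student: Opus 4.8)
The plan is to prove both assertions simultaneously by induction on the round number, showing that running $\mathcal{B}'$ on $E_D$ reproduces $\mathcal{B}$ on $E_{\ell_0+D}$ under the level shift $\ell \mapsto \ell_0+\ell$, with the source node of $\mathcal{B}'$ playing the role of the entire block of levels $0$ through $\ell_0$. Before the induction I would record the calibration facts that anchor the two executions. Since $\ell_0 \geq 11$, both $E_{\ell_0+1}$ and $E_{\ell_0+D}$ meet the size hypotheses of Lemma \ref{lemma:wake_up_at_same_round_on_E_D0_and_E_D1} and Lemma \ref{lemma:one_can_show_by_induction}, so level $\ell_0$ wakes up at the same round in both graphs and the nodes at levels $0$ through $\ell_0$ are in identical states at that round; this shows the source's chosen initial state equals $[\mathbf{s}_{\mathcal{B}}(\ell_0+D,\mu,0,t_{\mathcal{B}}(\ell_0+D,\mu,\ell_0)),\ldots,\mathbf{s}_{\mathcal{B}}(\ell_0+D,\mu,\ell_0,t_{\mathcal{B}}(\ell_0+D,\mu,\ell_0))]$. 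Moreover, since every level $\ell \geq \ell_0$ is in a state $\mathbf{c}_{\mathcal{B}}(\ell,0)$ that transmits on waking, Lemma \ref{lemma:local_round_i_for_ell_is_ell+i} gives that level $\ell_0+\ell$ wakes up at round $t_{\mathcal{B}}(\ell_0+D,\mu,\ell_0)+\ell$ in $\mathcal{B}$ on $E_{\ell_0+D}$, which is exactly the offset appearing in the statement.

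For the base case I would check round $1$: the source performs the operation of level $\ell_0$ out of its initial state, level $1$ wakes up exactly when the source first transmits (i.e.\ when level $\ell_0$ first signals level $\ell_0+1$), and levels $\ell \geq 2$ have not yet woken up, matching Corollary \ref{corollary:lower_bound_at_least_i_for_level_i} and Observation \ref{observation:wake_up_by_nodes_at_previous_level}. For the inductive step, assume through the end of round $t$ that the source's state equals the level-$\ell_0$ snapshot of $\mathcal{B}$ on $E_{\ell_0+D}$ and that each non-source level $\ell$ matches level $\ell_0+\ell$. By construction the source and each level $\ell$ perform, at round $t+1$, the same operation as level $\ell_0$ and level $\ell_0+\ell$ respectively; I would then argue that the transmit/listen pattern is preserved, so that the collision-or-nothing signal each meta-node hears in $\mathcal{B}'$ coincides with what the corresponding level hears in $\mathcal{B}$. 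The one place where the two models differ, namely that level $1$ neighbours a single source node rather than a two-node level, is handled by the rule that level $1$ interprets any received message as a collision, which matches the fact that level $\ell_0+1$ only ever receives a collision signal from level $\ell_0$.

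The main obstacle is the source's listen transitions, because the source internally simulates levels $0$ through $\ell_0$ yet depends on an external input, namely the action of the simulated level $\ell_0+1$, which in $\mathcal{B}'$ is produced by the node at level $1$. Here I must invoke the well-definedness of the transition function: depending on whether level $1$ transmits at round $t+1$, the construction selects the transition case that assumed level $\ell_0+1$ transmits or listens, possibly via a representative graph $E_{D'}$ rather than $E_{\ell_0+D}$. Using Observations \ref{observation:same_seq_of_states_and_transmit} and \ref{observation:same_seq_of_states_and_same_op}, together with the inductive hypothesis that level $1$ faithfully tracks level $\ell_0+1$, I would show that the levels-$0$-through-$\ell_0$ trajectory is determined entirely by the external input sequence and is therefore graph-independent; consequently the state reached by the source coincides with the genuine snapshot $[\mathbf{s}_{\mathcal{B}}(\ell_0+D,\mu,0,t_{\mathcal{B}}(\ell_0+D,\mu,\ell_0)+t+1),\ldots,\mathbf{s}_{\mathcal{B}}(\ell_0+D,\mu,\ell_0,t_{\mathcal{B}}(\ell_0+D,\mu,\ell_0)+t+1)]$ on $E_{\ell_0+D}$. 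Closing this coupling, so that the source's dependence on level $1$ and level $1$'s dependence on the source do not conflict, is the delicate bookkeeping the induction must resolve, and the last-level case additionally leans on Lemma \ref{lemma:same_state_upon_waking_up_on_same_graph} so that the terminal state is message-independent as required.
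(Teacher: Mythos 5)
Your proposal is correct and follows essentially the same route as the paper's proof: the same calibration of the initial source state via Lemmas \ref{lemma:wake_up_at_same_round_on_E_D0_and_E_D1} and \ref{lemma:one_can_show_by_induction}, the same use of Lemma \ref{lemma:local_round_i_for_ell_is_ell+i} for the wake-up offsets, the same appeal to Observations \ref{observation:same_seq_of_states_and_transmit} and \ref{observation:same_seq_of_states_and_same_op} to make the source's listen transitions well defined (including the representative-graph $E_{D'}$ cases), and the same treatment of level $1$'s collision rule and of the last level via Lemma \ref{lemma:same_state_upon_waking_up_on_same_graph}. The only difference is organizational: you induct on the global round number with a strong hypothesis, whereas the paper runs a double induction on the offset $t$ (needing both $t-1$ and $t-2$, since the operation of level $\ell+1$ at round $\ell+t$ is fixed by its state at round $(\ell+1)+(t-2)$) with an inner induction on the level $\ell$, which is an equivalent reindexing of the same bookkeeping you describe.
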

    \begin{proof}
        Let $D \geq 12$ and $\mu \in \{1,\ldots, w_{r-1}+1\}$.
        Let $\beta_{\mu,D}$ and $\beta_{\mu,D}'$ denote the executions of $\mathcal{B}$ and $\mathcal{B}'$ with message $\mu$ on $E_{D}$, respectively.
        We prove the claim by induction on $t \geq 0$.
        
        Recall that the nodes at level $\ell_0$ wake up at round $t_{\mathcal{B}}(\ell_0+1,\mu,\ell_0)$ of $\beta_{\mu,\ell_0+1}$.
        By Lemma \ref{lemma:wake_up_at_same_round_on_E_D0_and_E_D1}, the nodes at level $\ell_0$ wake up at round $t_{\mathcal{B}}(\ell_0+1,\mu,\ell_0)$ of $\beta_{\mu,\ell_0+D}$, i.e. $t_{\mathcal{B}}(\ell_0+D,\mu,\ell_0)=t_{\mathcal{B}}(\ell_0+1,\mu,\ell_0)$.
        Thus, by Observation \ref{observation:must_receive_first_message_from_previous_level}, the nodes at level $\ell_0+1$ do not wake up in the first $t_{\mathcal{B}}(\ell_0+1,\mu,\ell_0)$ rounds of $\beta_{\mu,\ell_0+1}$.
        Therefore, by Lemma \ref{lemma:one_can_show_by_induction} 
        (for graphs $E_{\ell_0+1}$ and $E_{\ell_0+D}$), at the end of round $t_{\mathcal{B}}(\ell_0+D,\mu,\ell_0)$ of $\beta_{\mu,\ell_0+1}$,
        the nodes at levels $0$ through $\ell_0$ are in the same states as they are at the end of round $t_{\mathcal{B}}(\ell_0+D,\mu,\ell_0)$ of $\beta_{\mu,\ell_0+D}$.
        %
        The nodes at level $\ell_0$ are in state $\mathbf{c}_{\mathcal{B}}(\ell_0,0)$ at the end of round $t_{\mathcal{B}}(\ell_0+D,\mu,\ell_0)$ of $\beta_{\mu,\ell_0+D}$.
        By assumption, 
        for all $\ell \geq \ell_0$, nodes in state 
        $\mathbf{c}_{\mathcal{B}}(\ell,0)$ transmit.
        Thus, by Lemma \ref{lemma:local_round_i_for_ell_is_ell+i}, 
        for all $1 \leq \ell \leq D$, the nodes at level $\ell_0+\ell$ wake up at round $t_{\mathcal{B}}(\ell_0+D,\mu,\ell_0)+\ell=t_{\mathcal{B}}(\ell_0+D,\mu,\ell_0+\ell)$ of $\beta_{\mu,\ell_0+D}$. 
        For $1 \leq \ell \leq D-1$, the nodes at level $\ell_0+\ell$ are in state $\mathbf{c}_{\mathcal{B}}(\ell_0+\ell,0)$ when they wake up.
        The nodes at level $\ell_0+D$ are in state $\mathbf{s}_{\mathcal{B}}(\ell_0+D,\mu,\ell_0+D,t_{\mathcal{B}}(\ell_0+D,\mu,\ell_0)+D)$ at the end of round $t_{\mathcal{B}}(\ell_0+D,\mu,\ell_0)+D$ of $\beta_{\mu,\ell_0+D}$.
        
        By construction of $\mathcal{B}'$, after receiving message $\mu$ to broadcast, the source node transitions to state
        $[\mathbf{s}_{\mathcal{B}}(\ell_0+1,\mu,0,t_{\mathcal{B}}(\ell_0+1,\mu,\ell_0)),\ldots, \mathbf{s}_{\mathcal{B}}(\ell_0+1,\mu,\ell_0,t_{\mathcal{B}}(\ell_0+1,\mu,\ell_0))]$.
        Since nodes at levels $0$ through $\ell_0$ are in the same state at the end of round $t_{\mathcal{B}}(\ell_0+D,\mu,\ell_0)$ of $\beta_{\mu,\ell_0+1}$ and $\beta_{\mu,\ell_0+D}$,
        this state is equal to state
        $[\mathbf{s}_{\mathcal{B}}(\ell_0+D,\mu,0,t_{\mathcal{B}}(\ell_0+D,\mu,\ell_0)),\ldots, \mathbf{s}_{\mathcal{B}}(\ell_0+D,\mu,\ell_0,t_{\mathcal{B}}(\ell_0+D,\mu,\ell_0))]$.
        Recall that, in any state in $S_{\mathcal{B}}(\ell_0,\mu)$, the source node performs the same operation as the nodes perform during $\mathcal{B}$ when their state is the last component of that state.
        Since $\mathbf{s}_{\mathcal{B}}(\ell_0+D,\mu,\ell_0,t_{\mathcal{B}}(\ell_0+D,\mu,\ell_0))=\mathbf{c}_{\mathcal{B}}(\ell_0,0)$ and nodes in state $\mathbf{c}_{\mathcal{B}}(\ell_0,0)$ transmit, the source node transmits in the next round which is round $1$ of $\beta_{\mu,D}'$. 
        By definition of the transitions of states in $S_{\mathcal{B}}(\ell_0,\mu)$, at the end of round $1$, the source node transitions to state 
        $[\mathbf{s}_{\mathcal{B}}(\ell_0+D,\mu,0,t_{\mathcal{B}}(\ell_0+D,\mu,\ell_0)+1),\ldots, \mathbf{s}_{\mathcal{B}}(\ell_0+D,\mu,\ell_0,t_{\mathcal{B}}(\ell_0+D,\mu,\ell_0)+1)]$.

        By construction of $\mathcal{B}'$,
        for all $1 \leq \ell \leq D-1$, the nodes at level $\ell$ transition to state $\mathbf{c}_{\mathcal{B}}(\ell_0+\ell,0)$ after waking up.
        By assumption, for all $\ell \geq 0$, nodes in state $\mathbf{c}_{\mathcal{B}}(\ell_0+\ell,0)$ transmit.
        Thus, by Corollary \ref{corollary:local_round_i_for_ell_is_ell+i}, for all $1 \leq \ell \leq D$, the nodes at level $\ell$ wake up at round $\ell$ of $\beta_{\mu,D}'$.
        
        By construction of $\mathcal{B}'$, the nodes at level $D$ are in state $\mathbf{s}_{\mathcal{B}}(\ell_0+D,\mu,\ell_0+D,t_{\mathcal{B}}(\ell_0+D,\mu,\ell_0)+D)$ when they wake up $\beta_{\mu,D}'$.
        Thus, the claim holds for all levels when $t=0$.
        
        \medskip
        
        At the end of round $1$ of $\beta_{\mu,D}'$, the source node is in state $[\mathbf{s}_{\mathcal{B}}(\ell_0+D,\mu,0,t_{\mathcal{B}}(\ell_0+D,\mu,\ell_0)+1),\ldots,   \mathbf{s}_{\mathcal{B}}(\ell_0+D,\mu,\ell_0,t_{\mathcal{B}}(\ell_0+D,\mu,\ell_0)+1)]$ and 
        the nodes at level $1$ are in state $\mathbf{c}_{\mathcal{B}}(\ell_0+1,0)$.
        At round $2$ of $\beta_{\mu,D}'$, the source node and the nodes at level $1$ perform the same operation as the nodes at level $\ell_0$ and $\ell_0+1$, respectively, 
        at round $t_{\mathcal{B}}(\ell_0+D,\mu,\ell_0)+2$ of $\beta_{\mu,\ell_0+D}$.
        By definition of the transitions of states in $S_{\mathcal{B}}(\ell_0,\mu)$, 
        at the end of round $2$ of $\beta_{\mu,D}'$, the source node is in state 
        $[\mathbf{s}_{\mathcal{B}}(\ell_0+D,\mu,0,t_{\mathcal{B}}(\ell_0+D,\mu,\ell_0)+2),\ldots, \mathbf{s}_{\mathcal{B}}(\ell_0+D,\mu,\ell_0,t_{\mathcal{B}}(\ell_0+D,\mu,\ell_0)+2)] \in S_{\mathcal{B}}(\ell_0,\mu)$.
        
        For all $1 \leq \ell \leq D-1$, nodes in state $\mathbf{c}_{\mathcal{B}}(\ell_0+\ell,0)$ transmit and, hence, they can transition to only one state.
        Therefore, for all $1 \leq \ell \leq D-1$, the nodes at level $\ell$ are in the same state at the end of
        round $\ell+1$ of $\beta_{\mu,D}'$
        as the nodes at level $\ell_0+\ell$ at the end of round $t_{\mathcal{B}}(\ell_0+D,\mu,\ell_0)+\ell+1$ of $\beta_{\mu,\ell_0+D}$.
        
        Recall that, at the end of round $D$ of $\beta_{\mu,D}'$, the nodes at levels $D-1$ and $D$ are in the same states as the nodes at levels $\ell_0+D-1$ and $\ell_0+D$ at the end of round $t_{\mathcal{B}}(\ell_0+D,\mu,\ell_0)+D$ of $\beta_{\mu,\ell_0+D}$.
        Thus, at the end of round $D+1$ of $\beta_{\mu,D}'$, the nodes at level $D$ are in the same state as the nodes at level $\ell_0+D$ at the end of round $t_{\mathcal{B}}(\ell_0+D,\mu,\ell_0)+D+1$ of $\beta_{\mu,\ell_0+D}$.
        Thus, the claim holds for all levels when $t=1$.
        
        \medskip
        
        Let $t \geq 2$ and assume the claim holds for $t-1$ and $t-2$.
        By the induction hypothesis, at the end of round $(t-1)+1$ of $\beta_{\mu,D}'$, the source node is in state
        $[\mathbf{s}_{\mathcal{B}}(\ell_0+D,\mu,0,t_{\mathcal{B}}(\ell_0+D,\mu,\ell_0)+t),\ldots, \mathbf{s}_{\mathcal{B}}(\ell_0+D,\mu,\ell_0,t_{\mathcal{B}}(\ell_0+D,\mu,\ell_0)+t)]$.
        Furthermore, at the end of round $t=1+(t-1)=2+(t-2)$ of $\beta_{\mu,D}'$, the nodes at levels $1$ and $2$ are in the same states as the nodes at levels $\ell_0+1$ and $\ell_0+2$ at the end of round $t_{\mathcal{B}}(\ell_0+D,\mu,\ell_0)+t$ of $\beta_{\mu,\ell_0+D}$, respectively. 
        At round $t+1$ of $\beta_{\mu,D}'$, the source node, the nodes at level $1$, and the nodes at level $2$ perform the same operation as the nodes at level $\ell_0$, $\ell_0+1$, and $\ell_0+2$, respectively, 
        at round $t_{\mathcal{B}}(\ell_0+D,\mu,\ell_0)+t+1$ of $\beta_{\mu,\ell_0+D}$.
        Hence, the nodes at level $1$ transition to the same state as the nodes at level $\ell_0+1$ at the end of round $t_{\mathcal{B}}(\ell_0+D,\mu,\ell_0)+t+1$ of $\beta_{\mu,\ell_0+D}$. 
        Moreover, 
        by definition of the transitions of states in
        $S_{\mathcal{B}}(\ell_0,\mu)$, 
        at the end of round $t+1$ of $\beta_{\mu,D}'$, 
        the source node transitions to the state 
        $[\mathbf{s}_{\mathcal{B}}(\ell_0+D,\mu,0,t_{\mathcal{B}}(\ell_0+D,\mu,\ell_0)+t+1),\ldots, \mathbf{s}_{\mathcal{B}}(\ell_0+D,\mu,\ell_0,t_{\mathcal{B}}(\ell_0+D,\mu,\ell_0)+t+1)] \in S_{\mathcal{B}}(\ell_0,\mu)$.
        Thus, the claim holds for the nodes at levels $0$ and $1$ at round $t$.
        
        
        Let $1 < \ell \leq D$. 
        Assume that the nodes at level $\ell-1$ are in the same states at the end 
        round $(\ell-1)+t$ of $\beta_{\mu,D}'$
        as the nodes at level $\ell_0+(\ell-1)$ at the end round $t_{\mathcal{B}}(\ell_0+D,\mu,\ell_0)+(\ell-1)+t$ of $\beta_{\mu,\ell_0+D}$.
        Since the claim holds for $t-1$, the nodes at level $\ell$ are in the same states at the end of 
        round $\ell+t-1$ of $\beta_{\mu,D}'$
        as the nodes at level $\ell_0+\ell$ at the end of round $t_{\mathcal{B}}(\ell_0+D,\mu,\ell_0)+\ell+t-1$ of $\beta_{\mu,\ell_0+D}$.
        
        First, suppose that $\ell = D$.
        Since nodes at levels $\ell-1$ and $\ell$
        perform the same operations 
        at round $\ell+t$ 
        of $\beta_{\mu,D}'$ 
        as the nodes at levels $\ell_0+\ell-1$ and $\ell_0+\ell$
        at round $t_{\mathcal{B}}(\ell_0+D,\mu,\ell_0)+\ell+t$ of $\beta_{\mu,\ell_0+D}$,
        the nodes at level $\ell$ are in
        the same state
        at the end of round $\ell+t$ 
        of $\beta_{\mu,D}'$ 
        as the nodes at level $\ell_0+\ell$
        at the end of round $t_{\mathcal{B}}(\ell_0+D,\mu,\ell_0)+\ell+t$ of $\beta_{\mu,\ell_0+D}$.
        
        Otherwise, $\ell \leq D-1$. 
        Since the claim holds for $t-2$, the nodes at level $\ell+1$ are in the same states at the end of
        round $(\ell+1)+(t-2)$ of $\beta_{\mu,D}'$
        as the nodes at level $\ell_0+\ell+1$ at the end of round $t_{\mathcal{B}}(\ell_0+D,\mu,\ell_0)+(\ell+1)+(t-2)$ of $\beta_{\mu,\ell_0+D}$.
        Since nodes at levels $\ell-1$, $\ell$ and $\ell+1$
        perform the same operations
        at round $\ell+t$ 
        of $\beta_{\mu,D}'$ 
        as the nodes at levels $\ell_0+\ell-1$, $\ell_0+\ell$ and $\ell_0+\ell+1$ 
        at round $t_{\mathcal{B}}(\ell_0+D,\mu,\ell_0)+\ell+t$ of $\beta_{\mu,\ell_0+D}$,
        the nodes at level $\ell$ are in
        the same state
        at the end of round $\ell+t$ 
        of $\beta_{\mu,D}'$ 
        as the nodes at level $\ell_0+\ell$
        at the end of round $t_{\mathcal{B}}(\ell_0+D,\mu,\ell_0)+\ell+t$ of $\beta_{\mu,\ell_0+D}$.
        Thus, the claim holds for all levels at round $t$.
    \end{proof}

    By Lemma \ref{lemma:k_values:q_0_transmit}, the nodes at level $\ell$ wake up at round $\ell$ during the execution of $\mathcal{B}'$. 
    Thus, $\mathcal{B}'$ satisfies the first condition of property $\textit{P}(r)$.
              
    Like algorithm $\mathcal{B}$, we show that algorithm $\mathcal{B}'$ also terminates within $D+r-1$ during executions on $E_D$.
    \begin{lemma} \label{lemma:B:ell_wake_up_at_ell}
        Algorithm $\mathcal{B}'$ enables the source node to broadcast any message from $\{1,\ldots, w_{r-1}+1\}$, such that, 
        for all $D \geq 12$, all $1 \leq \ell \leq D$ and all $\mu \in \{1,\ldots, w_{r-1}+1\}$,
        all non-source nodes terminate by the end of round $D+r-1$ during the execution of $\mathcal{B}'$ with message $\mu$ on $E_D$.
    \end{lemma}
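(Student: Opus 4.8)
The plan is to read off everything from the simulation already established in Lemma~\ref{lemma:k_values:q_0_transmit}. That lemma asserts that, during the execution of $\mathcal{B}'$ with message $\mu$ on $E_D$, each node at level $\ell$ (for $1 \leq \ell \leq D$) wakes up at round $\ell$ and, at the end of round $\ell+t$, is in the same state as the nodes at level $\ell_0+\ell$ at the end of round $t_{\mathcal{B}}(\ell_0+D,\mu,\ell_0+\ell)+t$ of the execution of $\mathcal{B}$ with message $\mu$ on $E_{\ell_0+D}$. Since $\ell_0 \geq 11$ and $D \geq 12$, we have $\ell_0+D \geq 12$, so the assumed termination guarantee for $\mathcal{B}$ applies on $E_{\ell_0+D}$: every non-source node there learns $\mu$ and terminates by the end of round $(\ell_0+D)+r-1$. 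Because the level-$\ell$ nodes in $\mathcal{B}'$ pass through exactly the same sequence of states (including the terminal, message-decoding state) as the level-$(\ell_0+\ell)$ nodes in $\mathcal{B}$, correctness of $\mathcal{B}'$ is immediate: each non-source node of $E_D$ learns $\mu$.

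First I would fix the abbreviation $T_0 := t_{\mathcal{B}}(\ell_0+D,\mu,\ell_0)$ for the round at which the level-$\ell_0$ nodes wake up in $\mathcal{B}$ on $E_{\ell_0+D}$, and recall from the proof of Lemma~\ref{lemma:k_values:q_0_transmit} that $t_{\mathcal{B}}(\ell_0+D,\mu,\ell_0+\ell) = T_0+\ell$. The state correspondence then reads cleanly as a pure time shift by $T_0$: the level-$\ell$ nodes in $\mathcal{B}'$ at the end of round $\ell+t$ are in the same state as the level-$(\ell_0+\ell)$ nodes in $\mathcal{B}$ at the end of round $T_0+\ell+t$. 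Consequently, a terminal state reached in $\mathcal{B}$ at round $T_0+\ell+t^{*}$ is reached in $\mathcal{B}'$ at round $\ell+t^{*}$.

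The key step is to bound $T_0$ from below. By Corollary~\ref{corollary:lower_bound_at_least_i_for_level_i} applied to $E_{\ell_0+D}$, the nodes at level $\ell_0$ do not wake up in the first $\ell_0-1$ rounds, so $T_0 \geq \ell_0$. Now suppose the level-$(\ell_0+\ell)$ node of $E_{\ell_0+D}$ terminates at the end of round $T_0+\ell+t^{*}$ in $\mathcal{B}$; the termination guarantee gives $T_0+\ell+t^{*} \leq (\ell_0+D)+r-1$. By the time shift, the level-$\ell$ node of $E_D$ terminates at the end of round $\ell+t^{*} = (T_0+\ell+t^{*})-T_0 \leq (\ell_0+D+r-1)-T_0 \leq (\ell_0+D+r-1)-\ell_0 = D+r-1$, which is exactly the claimed bound for every $1 \leq \ell \leq D$.

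I expect the only genuine subtlety to be recognizing that the $\ell_0$ extra levels introduced by simulating on the larger graph $E_{\ell_0+D}$ are paid for exactly by the $T_0 \geq \ell_0$ rounds the wave needs in order to first reach level $\ell_0$. It is the inequality $T_0 \geq \ell_0$ that makes the $+\ell_0$ coming from the larger graph cancel against the $-T_0$ coming from the time shift. Everything else is a routine substitution of $t_{\mathcal{B}}(\ell_0+D,\mu,\ell_0+\ell)=T_0+\ell$ into the conclusion of Lemma~\ref{lemma:k_values:q_0_transmit}.
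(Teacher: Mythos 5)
Your proposal is correct and follows essentially the same route as the paper's proof: both combine the simulation guarantee of Lemma~\ref{lemma:k_values:q_0_transmit} with the wake-up lower bound from Corollary~\ref{corollary:lower_bound_at_least_i_for_level_i} so that the $\ell_0$ extra levels of $E_{\ell_0+D}$ are cancelled by the at least $\ell_0$ rounds needed for the wave to reach level $\ell_0$. The only difference is bookkeeping: you phrase the cancellation as a global time shift by $T_0 \geq \ell_0$, while the paper counts ``$D-\ell+r-1$ rounds after waking up,'' which is the same arithmetic.
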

    \begin{proof}
        By Corollary \ref{corollary:lower_bound_at_least_i_for_level_i}, nodes at level $\ell_0+\ell$ wake up at round $\ell_0+\ell$ or after during the execution of $\mathcal{B}$ with message $\mu \in \{1,\ldots, w_{r-1}+1\}$ on $E_{\ell_0+D}$.
        Since all non-source nodes terminate within $\ell_0+D+r-1$
        rounds,
        the nodes at level $\ell_0+\ell$ terminate within
        $D-\ell+r-1$ rounds after waking up.
        
        By Lemma \ref{lemma:k_values:q_0_transmit}, for all $1 \leq \ell \leq D$, 
        during the execution of $\mathcal{B}'$ with message $\mu$ on $E_{D}$,
        the nodes at level $\ell$ wake up at round $\ell$.
        Furthermore, for all $t\geq 0$, at the end of round $\ell+t$, 
        the nodes at level $\ell$ are in the same state
        as the nodes at level $\ell_0+\ell$
        at the end of round $t_{\mathcal{B}}(\ell_0+D,\mu,\ell_0+\ell)+t$
        of the execution of $\mathcal{B}$ with message $\mu$ on $E_{\ell_0+D}$.
        Since the nodes at level $\ell_0+\ell$ terminate within $D-\ell+r-1$ rounds after waking up (at round $t_{\mathcal{B}}(\ell_0+D,\mu,\ell_0+\ell)$) during
        the execution of $\mathcal{B}$ with message $\mu$ on $E_{\ell_0+D}$, 
        the nodes at level $\ell$ terminate within $D-\ell+r-1$ rounds after waking up (at round $\ell$)
        during
        the execution of
        $\mathcal{B}'$ with message $\mu$ on $E_D$.
    \end{proof}
    
    Now, we show that $\mathcal{B}'$ satisfies the second condition of property $\textit{P}(r)$. 
    In particular, for any pair of messages in $\{1,\ldots, w_{r-1}+1\}$, the nodes at level $2 \leq \ell \leq D-1$ must be able to distinguish between them within $r-1$ rounds after waking up.
    \begin{lemma} \label{lemma:B:Distinguish} 
        For all distinct $\mu,\mu' \in \{1,\ldots, w_{r-1}+1\}$, all $D \geq 12$ and all $2 \leq \ell \leq D-1$, there exists $0 \leq t_{\ell} \leq r-1$
        such that,
        the nodes at level $\ell$ are in different states at the end of round $\ell+t_{\ell}$ during the executions of $\mathcal{B}'$ with messages $\mu$ and $\mu'$ on $E_{D}$.
    \end{lemma}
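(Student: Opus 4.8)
The plan is to reduce the claim to distinguishability at the last level and then push it down to every intermediate level via Lemma \ref{lemma:distinguishability}. First I would record that, by Lemma \ref{lemma:k_values:q_0_transmit}, during every execution of $\mathcal{B}'$ all nodes wake up and, for all $1 \le \ell \le D$, the nodes at level $\ell$ wake up at round $\ell$; this holds in both the execution with message $\mu$ and the one with message $\mu'$. In particular, $\mathcal{B}'$ is an algorithm of the type considered in Section \ref{subsection:properties_of_general_algorithm}, so its lemmas apply to $\mathcal{B}'$, and the wake-up hypothesis of Lemma \ref{lemma:distinguishability} is satisfied for both messages.

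Next I would show that the nodes at level $D$ distinguish $\mu$ from $\mu'$ within $r-1$ rounds after waking up, i.e. there exists $0 \le t_D \le r-1$ such that the nodes at level $D$ are in different states at the end of round $D + t_D$ in the two executions. Suppose, for contradiction, that for all $0 \le t' \le r-1$ the nodes at level $D$ are in the same state at the end of round $D + t'$ in both executions. By Lemma \ref{lemma:B:ell_wake_up_at_ell}, $\mathcal{B}'$ correctly broadcasts every message from $\{1,\ldots,w_{r-1}+1\}$ and every non-source node terminates by the end of round $D + r - 1$; since the nodes at level $D$ wake up at round $D$, termination occurs at some round $D + t^*$ with $t^* \le r-1$. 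Because the level-$D$ states agree at the end of every round in $\{D, D+1, \ldots, D + r - 1\}$ across the two executions, the nodes at level $D$ terminate in the same round and in the same state, and hence output the same message. This contradicts correctness, because the nodes at level $D$ must learn $\mu$ in the first execution and $\mu' \ne \mu$ in the second. Therefore the required $t_D$ exists.

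Finally, I would deduce the claim for every $2 \le \ell \le D-1$ by contradiction. Suppose some level $2 \le \ell \le D-1$ fails to distinguish $\mu$ and $\mu'$ within $r-1$ rounds, so that for all $0 \le t' \le r-1$ the nodes at level $\ell$ are in the same state at the end of round $\ell + t'$ in both executions. Applying Lemma \ref{lemma:distinguishability} with this $\ell$ and $t = r-1$ (its hypotheses hold by the first step), the same conclusion holds for every level greater than $\ell$; in particular, since $D > \ell$, the nodes at level $D$ are in the same state at the end of round $D + t'$ for all $0 \le t' \le r-1$. This contradicts the second step. Hence each level $2 \le \ell \le D-1$ admits some $0 \le t_\ell \le r-1$ witnessing the required difference in states.

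The step I expect to be the main obstacle is the level-$D$ argument: carefully tying the correctness and the termination bound of $\mathcal{B}'$ (Lemma \ref{lemma:B:ell_wake_up_at_ell}) to the existence of a round \emph{within the window} $\{D, \ldots, D+r-1\}$ at which the level-$D$ states differ, and then invoking Lemma \ref{lemma:distinguishability} in its contrapositive form with the indices ($2 \le \ell \le D-1$, $t = r-1$, levels greater than $\ell$) exactly as stated.
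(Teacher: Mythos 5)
Your proposal is correct and follows essentially the same route as the paper's proof: assume some level $2 \le \ell \le D-1$ fails to distinguish $\mu$ and $\mu'$ within $r-1$ rounds, invoke Lemma \ref{lemma:distinguishability} (whose wake-up hypothesis holds by Lemma \ref{lemma:k_values:q_0_transmit}) to propagate the agreement of states to level $D$, and derive a contradiction from the fact that, by Lemma \ref{lemma:B:ell_wake_up_at_ell}, the nodes at level $D$ terminate within $D+r-1$ rounds and must output different messages. Your separate, explicit treatment of the level-$D$ case merely spells out what the paper compresses into its final two sentences, so the two arguments coincide in substance.
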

    \begin{proof}
        To obtain a contradiction, suppose there exist distinct $\mu ,\mu' \in \{1,\ldots, w_{r-1}+1\}$, $D \geq 12$ and
        a level $2 \leq \ell \leq D-1$ 
        such that, 
        for all $0 \leq t \leq r-1$,
        the nodes at level $\ell$ are in the same states at the end of round $\ell+t$ of the execution with message $\mu$ on $E_D$
        as they are at the end of round $\ell+t$ of the execution with message $\mu'$ on $E_D$.
        By Lemma \ref{lemma:distinguishability},
        for all $0 \leq t \leq r-1$, 
        the nodes at level $D$ are in the same states 
        at the end of round $D+t$ of the execution of $\mathcal{B}'$ with message $\mu$ on $E_D$
        as they are 
        at the end of round $D+t$ of the execution of $\mathcal{B}'$ with message $\mu'$ on $E_D$.
        Hence, the nodes at level $D$ cannot distinguish between $\mu$ and $\mu'$ at the end of first $D+r-1$ rounds of the execution of $\mathcal{B}'$ with message $\mu$ on $E_D$.
        This is a contradiction because by Lemma \ref{lemma:B:ell_wake_up_at_ell} the nodes at level $D$ terminate within $D+r-1$ rounds.
    \end{proof}

\subsection{Constructing an Algorithm with Property $\textit{P}(k')$ From an Algorithm with Property $\textit{P}(k)$} \label{subsection:Construction_of_Algorithm}
    Assume algorithm $\mathcal{A}$ has property $\textit{P}(k)$, for $k \geq 3$.
    Furthermore, 
    assume that the nodes at some level $\ell_0$ between $2$ and $4$ cannot distinguish between any of the messages in a set $S \subsetneq \{1,\ldots, w_{k-1}+1\}$ of size $w_{k-t_0+1}+1$ within $t_0$ rounds after waking up during the execution of $\mathcal{A}$ on $E_{12}$, for some $t_0$ between $3$ and $5$.
    %
    In this section, we will create algorithm $\mathcal{A}'$ from $\mathcal{A}$, with property $\textit{P}(k-t_0+2)$.
    
    We first present useful properties of algorithm $\mathcal{A}$, one for the first $\ell_0+2$ levels, one for the last $7$ levels, and then one for the levels in between.
    %
    By Corollary \ref{corollary:lower_bound_at_least_i_for_level_i}, for all $\mu \in S$, 
    the nodes at level $12$ do not wake up in the first $\ell_0+t_0+2 < 12$ rounds during the execution of $\mathcal{A}$ with message $\mu$ on $E_{12}$.
    Therefore, for all $D \geq 12$ and all $\mu \in S$, 
    it follows from Lemma \ref{lemma:one_can_show_by_induction} 
    that
    the nodes at levels $0$ through $\ell_0+1$ are in the same states
    at the end of each of the first $\ell_0+t_0+2$ rounds during the executions of $\mathcal{A}$ with message $\mu$ on graphs $E_{12}$ and $E_{\ell_0+D+6}$.
    \begin{observation} \label{observation:state_of_nodes_levels_0_to_ell_1_is_same}
        For all $D \geq 12$, all $\mu \in S$, all $0 \leq \ell \leq \ell_0+1$, and all $0 \leq t \leq \ell_0+t_0+2$,
        $\mathbf{s}_{\mathcal{A}}(\ell_0+D+6,\mu,\ell,t)=
        \mathbf{s}_{\mathcal{A}}(12,\mu,\ell,t)$
    \end{observation}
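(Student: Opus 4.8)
The plan is to obtain this by a direct application of Lemma~\ref{lemma:one_can_show_by_induction}, taking the smaller graph to be $E_{12}$ and the larger graph to be $E_{\ell_0+D+6}$, and using level $12$ as the level that has not yet woken up. Concretely, I would fix $\mu \in S$ and $D \geq 12$, set $D' = \ell_0 + D + 6$, and note that $D' > 12$ so that the hypothesis $D' > D \geq 12$ of that lemma is met, with $\ell = 12 = D$ playing the role of the barrier level.

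First I would verify the lemma's wake-up hypothesis. Since $\ell_0$ lies between $2$ and $4$ and $t_0$ lies between $3$ and $5$, we have $\ell_0 + t_0 + 2 \leq 11 < 12$. By Corollary~\ref{corollary:lower_bound_at_least_i_for_level_i}, the nodes at level $12$ do not wake up in the first $11$ rounds of $\alpha_{\mu,12}$, and therefore do not wake up in the first $t$ rounds for any $0 \leq t \leq \ell_0 + t_0 + 2$.

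Next, for each $1 \leq t \leq \ell_0 + t_0 + 2$ I would apply Lemma~\ref{lemma:one_can_show_by_induction} with $\ell = 12$ to the graphs $E_{12}$ and $E_{D'}$. Since level $12$ has not woken up in the first $t$ rounds of $\alpha_{\mu,12}$, the lemma gives that the nodes at levels $0$ through $11$ are in the same states at the end of round $t$ of $\alpha_{\mu,12}$ as they are at the end of round $t$ of $\alpha_{\mu,D'}$. Because $\ell_0 + 1 \leq 5 \leq 11$, this covers exactly the levels $0$ through $\ell_0 + 1$ named in the statement, yielding $\mathbf{s}_{\mathcal{A}}(\ell_0+D+6,\mu,\ell,t) = \mathbf{s}_{\mathcal{A}}(12,\mu,\ell,t)$. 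The remaining case $t = 0$ I would handle separately: at the end of round $0$ every node is in its initial state, and by the model all nodes except possibly those at the last level share the same initial state; since $\ell_0 + 1 \leq 5$ is strictly below the last level of both $E_{12}$ and $E_{D'}$, the equality holds at $t = 0$ as well.

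The only points requiring care---rather than a genuine obstacle---are bookkeeping: confirming that the arithmetic bound $\ell_0 + t_0 + 2 < 12$ really follows from the given ranges of $\ell_0$ and $t_0$, aligning the graph indices with the $D' > D$ convention of Lemma~\ref{lemma:one_can_show_by_induction} (so that $E_{12}$ is the smaller graph), and checking that levels $0$ through $\ell_0+1$ never coincide with the last level of either graph, so that the ``except possibly the last level'' clause of the model does not interfere with the conclusion.
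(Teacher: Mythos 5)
Your proposal is correct and matches the paper's own justification, which likewise derives the observation by combining Corollary \ref{corollary:lower_bound_at_least_i_for_level_i} (nodes at level $12$ do not wake up in the first $\ell_0+t_0+2 \leq 11 < 12$ rounds of the execution on $E_{12}$) with Lemma \ref{lemma:one_can_show_by_induction} applied to the graphs $E_{12}$ and $E_{\ell_0+D+6}$ with barrier level $\ell = 12$. Your extra bookkeeping (the $t=0$ case via identical initial states below the last level, and verifying $D' = \ell_0+D+6 > 12$) is sound and only makes explicit what the paper leaves implicit.
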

    
    By assumption,
    for all $0 \leq t \leq t_0$,
    the nodes at level $\ell_0$ are in the same state
    at the end of round $\ell_0+t$ during the execution of $\mathcal{A}$
    with all messages $\mu \in S$ on $E_{12}$.
    By Observation \ref{observation:state_of_nodes_levels_0_to_ell_1_is_same}, the same is true for nodes at level $\ell_0$ during the execution of $\mathcal{A}$
    on $E_{\ell_0+D+6}$, for all $D \geq 12$.
    Therefore, by Lemma \ref{lemma:distinguishability}, the same is true for nodes at greater levels.
    \begin{lemma} \label{lemma:state_of_nodes_levels_geater_than_ell_0_same}
        For all $D \geq 12$, all $0 \leq \ell \leq D$, and all $0 \leq t \leq t_0$,
        the nodes at level $\ell_0+\ell$ are in the same state
        at the end of round $\ell_0+\ell+t$ during the execution of $\mathcal{A}$
        with all messages $\mu \in S$ on $E_{\ell_0+D+6}$.
    \end{lemma}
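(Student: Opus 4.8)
The plan is to obtain this as a single application of Lemma \ref{lemma:distinguishability}, feeding it the level-$\ell_0$ indistinguishability that the paragraph preceding the statement has already transferred onto $E_{\ell_0+D+6}$. Since Lemma \ref{lemma:distinguishability} compares two executions with two fixed messages, whereas the claim quantifies over all of $S$, I would first fix an arbitrary pair of distinct messages $\mu_0,\mu_1 \in S$, prove that all nodes at levels $\ell_0$ through $\ell_0+D$ are in the same state in the two executions (at the relevant rounds), and then observe that, since the pair was arbitrary, equality of states holds simultaneously across the whole set $S$ by transitivity.

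First I would verify the wake-up hypothesis of Lemma \ref{lemma:distinguishability} on the graph $E_{\ell_0+D+6}$. Because $\mathcal{A}$ has property $\textit{P}(k)$ and $S \subsetneq \{1,\ldots,w_{k-1}+1\}$, both $\mu_0$ and $\mu_1$ lie in $\{1,\ldots,w_{k-1}+1\}$, so condition~1 of $\textit{P}(k)$ guarantees that, for every level $1 \leq \ell' \leq \ell_0+D+6$, the nodes at level $\ell'$ wake up at round $\ell'$ in both $\alpha_{\mu_0,\ell_0+D+6}$ and $\alpha_{\mu_1,\ell_0+D+6}$. This discharges the requirement that nodes at level $\ell$ wake up at round $\ell$.

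Next I would instantiate Lemma \ref{lemma:distinguishability} with level $\ell = \ell_0$ and parameter $t = t_0$. The range constraint $2 \leq \ell_0 \leq (\ell_0+D+6)-1$ holds because $\ell_0 \geq 2$ and $D \geq 12$. The remaining hypothesis---that for all $0 \leq t' \leq t_0$ the nodes at level $\ell_0$ are in the same state at the end of round $\ell_0+t'$ in both executions---is exactly what the text established just before the statement: by assumption it holds on $E_{12}$, and by Observation \ref{observation:state_of_nodes_levels_0_to_ell_1_is_same} (applicable since $\ell_0+t' \leq \ell_0+t_0 \leq \ell_0+t_0+2$) the identical states reappear on $E_{\ell_0+D+6}$. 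Lemma \ref{lemma:distinguishability} then yields that, for every level $\ell' > \ell_0$ and every $0 \leq t' \leq t_0$, the nodes at level $\ell'$ are in the same state at the end of round $\ell'+t'$ of the two executions. Writing $\ell' = \ell_0+\ell$ for $1 \leq \ell \leq D$ (all inside the range $\ell_0 < \ell' \leq \ell_0+D+6$) and adding back the $\ell=0$ case supplied by the assumption gives precisely the statement for the pair $\mu_0,\mu_1$, and hence, by arbitrariness of the pair, for all of $S$.

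I do not expect a genuine obstacle, as the argument reduces to matching already-proved facts against the hypotheses of an earlier lemma. The only point needing care is the mismatch between the two-message form of Lemma \ref{lemma:distinguishability} and the set-wide quantifier in the statement: one must make the chosen pair $\mu_0,\mu_1$ genuinely arbitrary so that the per-pair conclusion upgrades to a single common state for the entire set $S$, and one must confirm that the level range delivered by the lemma ($\ell_0 < \ell' \leq \ell_0+D+6$) indeed covers the desired window $\ell_0+1 \leq \ell_0+\ell \leq \ell_0+D$.
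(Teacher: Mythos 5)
Your proposal is correct and follows exactly the paper's route: the paper likewise transfers the level-$\ell_0$ indistinguishability from $E_{12}$ to $E_{\ell_0+D+6}$ via Observation \ref{observation:state_of_nodes_levels_0_to_ell_1_is_same} and then invokes Lemma \ref{lemma:distinguishability} to propagate it to all greater levels. You merely make explicit two steps the paper leaves implicit---verifying the wake-up hypothesis from condition~1 of property $\textit{P}(k)$ and upgrading the two-message form of Lemma \ref{lemma:distinguishability} to the set-wide claim by fixing an arbitrary pair in $S$---both of which are sound.
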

    
    Since nodes at level $\ell_0+D$ are in the same state 
    at the end of 
    the first $\ell_0+D+t_0$ rounds
    during the execution of $\mathcal{A}$ with all messages $\mu \in S$ on $E_{\ell_0+D+6}$, Lemma \ref{lemma:distinguishability} implies that the nodes at levels
    $\ell_0+D+1$ through $\ell_0+D+6$
    are also
    in the same states 
    at the end of the first $\ell_0+D+t_0$ rounds during the execution of $\mathcal{A}$ with all messages $\mu \in S$ on $E_{\ell_0+D+6}$.
    Thus, the nodes at the last $7$ levels of $E_{\ell_0+D+6}$ cannot distinguish between different messages in $S$ at the end of round $\ell_0+D+t_0$.
    
    \medskip
    
    By Corollary \ref{corollary:lower_bound_at_least_i_for_level_i}, for all $D\geq 12$ and all $1 \leq \ell' \leq D$, 
    the nodes at level $\ell = \ell_0+\ell'+t_0+1 \leq \ell_0+D+6$ do not wake up in the first $\ell_0+\ell'+t_0$ rounds during the execution of $\mathcal{A}$ on $E_{\ell_0+D+6}$.
    Thus, for all $D' > D$, 
    it follows from Lemma \ref{lemma:one_can_show_by_induction} 
    (for $t=\ell_0+\ell'+t_0-1$ and $\ell_0+\ell'+t_0 \leq \ell_0+D+t_0$)
    that,
    at the end of rounds $\ell_0+\ell'+t_0-1$ and $\ell_0+\ell'+t_0$ of the execution of $\mathcal{A}$ with message $\mu$ on $E_{\ell_0+D+6}$,
    the nodes at level $\ell_0+\ell' \leq \ell-1$ are in the same states as they are at the end of rounds $\ell_0+\ell'+t_0-1$ and $\ell_0+\ell'+t_0$ 
    of the execution of $\mathcal{A}$ with message $\mu$ on $E_{\ell_0+D'+6}$.
    In other words, for all $\ell' \geq 1$,
    there exist states 
    $\mathbf{c}_{\mathcal{A}}(\ell_0+\ell',t_0-1)$ and $\mathbf{c}_{\mathcal{A}}(\ell_0+\ell',t_0)$ 
    such that,
    the nodes at level $\ell_0+\ell'$ are in states $\mathbf{c}_{\mathcal{A}}(\ell_0+\ell',t_0-1)$ and $\mathbf{c}_{\mathcal{A}}(\ell_0+\ell',t_0)$ at the end of rounds $\ell_0+\ell'+t_0-1$ and $\ell_0+\ell'+t_0$, respectively, during the execution of $\mathcal{A}$ with message $\mu$ on $E_{\ell_0+D+6}$, for every $\mu \in S$ and every $D \geq 12$, provided $\ell_0+\ell'$ is not the last level of $E_{\ell_0+D+6}$.

\medskip

    We construct algorithm $\mathcal{A}'$ 
    so that,
    for all $D \geq 12$,
    every
    execution of $\mathcal{A}'$ with one of these messages
    on $E_D$
    simulates 
    the execution of $\mathcal{A}$ with the same message
    on $E_{\ell_0+D+6}$:
    During the execution of $\mathcal{A}'$ 
    on $E_D$, 
    the source node simulates levels $0$ through $\ell_0$ in $E_{\ell_0+D+6}$ and
    each non-source node that is not in level $D$
    simulates a node $\ell_0$ levels higher in $E_{\ell_0+D+6}$.
    The nodes at level $D$ simulate the nodes at levels $\ell_0+D$ through $\ell_0+D+6$ in $E_{\ell_0+D+6}$.

    The set of states $S_{\mathcal{A}}(\ell_0,\mu)$ will be used by the source node 
    during the execution of algorithm $\mathcal{A}'$ with message $\mu$.
    Similarly, we define a set of states that will be used by the nodes at the last level in $\mathcal{A}'$.
    Recall that, the nodes know their level and know whether that level is the last level of the graph.
    For all $D \geq 12$, let
    \begin{equation*}
    \begin{split}
        L_{\mathcal{A}}(D)=\{[\mathbf{s}_{\mathcal{A}}(\ell_0+D+6,\mu,\ell_0+D,t), \ldots, \mathbf{s}_{\mathcal{A}}(&\ell_0+D+6,\mu,\ell_0+D+6,t)] \mid \\  &\text{for all $\mu \in \{1,\ldots, w_{k-1}+1\}$ and all $t \geq 1$} \}
    \end{split}
    \end{equation*}
    be the set of all sequences of states of nodes at the last $7$ levels that can occur at the end of some round during the execution of $\mathcal{A}$ with some message on $E_{\ell_0+D+6}$.
    
    The transitions from  states in $L_{\mathcal{A}}(D)$ are defined similarly to the transitions from the states in $S_{\mathcal{A}}(\ell_0,\mu)$.
        The nodes in state $[\mathbf{s}_{\mathcal{A}}(\ell_0+D+6,\mu,\ell_0+D,t), \ldots, \mathbf{s}_{\mathcal{A}}(\ell_0+D+6,\mu,\ell_0+D+6,t)] \in L_{\mathcal{A}}(D)$
        perform the same operation as nodes in state $\mathbf{s}_{\mathcal{A}}(\ell_0+D+6,\mu,\ell_0+D,t)$ in $\mathcal{A}$.
    
        If nodes in state $[\mathbf{s}_{\mathcal{A}}(\ell_0+D+6,\mu,\ell_0+D,t), \ldots, \mathbf{s}_{\mathcal{A}}(\ell_0+D+6,\mu,\ell_0+D+6,t)]$ transmit, then they transition to state $[\mathbf{s}_{\mathcal{A}}(\ell_0+D+6,\mu,\ell_0+D,t+1), \ldots, \mathbf{s}_{\mathcal{A}}(\ell_0+D+6,\mu,\ell_0+D+6,t+1)]$ immediately after transmitting.
    This transition is well defined by Observation \ref{observation:D-6_to_D_same_seq_of_states_and_D-6_transmit}
    
    Now, assume nodes in state $[\mathbf{s}_{\mathcal{A}}(\ell_0+D+6,\mu,\ell_0+D,t), \ldots, \mathbf{s}_{\mathcal{A}}(\ell_0+D+6,\mu,\ell_0+D+6,t)]$ listen.
    We separately define the transitions from this state 
    for the cases when the nodes receive a collision signal or nothing. 
    Suppose these nodes receive a collision signal while listening in state $[\mathbf{s}_{\mathcal{A}}(\ell_0+D+6,\mu,\ell_0+D,t), \ldots, \mathbf{s}_{\mathcal{A}}(\ell_0+D+6,\mu,\ell_0+D+6,t)]$.
    If nodes at level $\ell_0+D-1$ transmit at round $t+1$ during the execution of $\mathcal{A}$ with message $\mu$ on $E_{\ell_0+D+6}$, then nodes in state $[\mathbf{s}_{\mathcal{A}}(\ell_0+D+6,\mu,\ell_0+D,t), \ldots, \mathbf{s}_{\mathcal{A}}(\ell_0+D+6,\mu,\ell_0+D+6,t)]$ transition to state $[\mathbf{s}_{\mathcal{A}}(\ell_0+D+6,\mu,\ell_0+D,t+1), \ldots, \mathbf{s}_{\mathcal{A}}(\ell_0+D+6,\mu,\ell_0+D+6,t+1)]$.
    If nodes at level $\ell_0+D-1$ listen at round $t+1$ during the execution of $\mathcal{A}$ with message $\mu$ on $E_{\ell_0+D+6}$, but there exist $t'$ and $\mu'$
    such that
    the nodes at levels $\ell_0+D$ through $\ell_0+D+6$ are in states $\mathbf{s}_{\mathcal{A}}(\ell_0+D+6,\mu,\ell_0+D,t), \ldots, \mathbf{s}_{\mathcal{A}}(\ell_0+D+6,\mu,\ell_0+D+6,t)$, respectively,
    at the end of round $t'$ during the execution of $\mathcal{A}$ with message $\mu'$ on $E_{\ell_0+D+6}$ and
    the nodes at level $\ell_0+D-1$ transmit at round $t'+1$, then nodes in state $[\mathbf{s}_{\mathcal{A}}(\ell_0+D+6,\mu,\ell_0+D,t), \ldots, \mathbf{s}_{\mathcal{A}}(\ell_0+D+6,\mu,\ell_0+D+6,t)]$ transition to state $[\mathbf{s}_{\mathcal{A}}(\ell_0+D+6,\mu',\ell_0+D,t'+1), \ldots, \mathbf{s}_{\mathcal{A}}(\ell_0+D+6,\mu',\ell_0+D+6,t'+1)]$.
    By Observation \ref{observation:D-6_to_D_same_seq_of_states_and_D-7_same_op}, this transition is well defined.
    If no such $t'$ and $\mu'$ exist, then, during the construction of algorithm $\mathcal{A}'$, we make sure that the nodes at level $D$ do not receive a collision signal while listening in state $[\mathbf{s}_{\mathcal{A}}(\ell_0+D+6,\mu,\ell_0+D,t), \ldots, \mathbf{s}_{\mathcal{A}}(\ell_0+D+6,\mu,\ell_0+D+6,t)]$.

    The last case is when these nodes receive nothing while listening in state $[\mathbf{s}_{\mathcal{A}}(\ell_0+D+6,\mu,\ell_0+D,t), \ldots, \mathbf{s}_{\mathcal{A}}(\ell_0+D+6,\mu,\ell_0+D+6,t)]$.
    If nodes at level $\ell_0+D-1$ listen at round $t+1$ during the execution of $\mathcal{A}$ with message $\mu$ on $E_{\ell_0+D+6}$, then nodes in state $[\mathbf{s}_{\mathcal{A}}(\ell_0+D+6,\mu,\ell_0+D,t), \ldots, \mathbf{s}_{\mathcal{A}}(\ell_0+D+6,\mu,\ell_0+D+6,t)]$ transition to state $[\mathbf{s}_{\mathcal{A}}(\ell_0+D+6,\mu,\ell_0+D,t+1), \ldots, \mathbf{s}_{\mathcal{A}}(\ell_0+D+6,\mu,\ell_0+D+6,t+1)]$.
    Note that the nodes at level $\ell_0+D$ might receive a collision signal at round $t+1$ during the execution of $\mathcal{A}$ with message $\mu$ on $E_{\ell_0+D+6}$ from the nodes at level $\ell_0+D+1$.
    If nodes at level $\ell_0+D-1$ transmit at round $t+1$ during the execution of $\mathcal{A}$ with message $\mu$ on $E_{\ell_0+D+6}$, but there exist $t'$ and $\mu'$
    such that
    the nodes at levels $\ell_0+D$ through $\ell_0+D+6$ are in states $\mathbf{s}_{\mathcal{A}}(\ell_0+D+6,\mu,\ell_0+D,t), \ldots, \mathbf{s}_{\mathcal{A}}(\ell_0+D+6,\mu,\ell_0+D+6,t)$, respectively,
    at the end of round $t'$ during the execution of $\mathcal{A}$ with message $\mu'$ on $E_{\ell_0+D+6}$ and
    the nodes at level $\ell_0+D-1$ listen at round $t'+1$, then nodes in state $[\mathbf{s}_{\mathcal{A}}(\ell_0+D+6,\mu,\ell_0+D,t), \ldots, \mathbf{s}_{\mathcal{A}}(\ell_0+D+6,\mu,\ell_0+D+6,t)]$ transition to state $[\mathbf{s}_{\mathcal{A}}(\ell_0+D+6,\mu',\ell_0+D,t'+1), \ldots, \mathbf{s}_{\mathcal{A}}(\ell_0+D+6,\mu',\ell_0+D+6,t'+1)]$.
    By Observation \ref{observation:D-6_to_D_same_seq_of_states_and_D-7_same_op}, this transition is well defined.
    If no such $t'$ and $\mu'$ exist, then, during the construction of algorithm $\mathcal{A}'$, we make sure that the nodes at level $D$ receive a collision signal while listening in state $[\mathbf{s}_{\mathcal{A}}(\ell_0+D+6,\mu,\ell_0+D,t), \ldots, \mathbf{s}_{\mathcal{A}}(\ell_0+D+6,\mu,\ell_0+D+6,t)]$.

        \medskip
        
        The execution of algorithm $\mathcal{A}'$ on $E_D$ simulates the execution of $\mathcal{A}$ with the same message on $E_{\ell_0+D+6}$ as follows.
        The source node, starting from round $5$, simulates the nodes at levels $0$ through $\ell_0$ of $E_{\ell_0+D+6}$ starting at round $\ell_0+t_0+3$.
        Nodes at level $1$ wake up at round $1$ and, starting at round $5$, they simulate the nodes at level $\ell_0+1$ of $E_{\ell_0+D+6}$ starting at round $\ell_0+t_0+3$.
        Nodes at level $2\leq \ell \leq D-1$ wake up at round $\ell$ and, starting at round $\ell+2$, they simulate the nodes at level $\ell_0+\ell$ of $E_{\ell_0+D+6}$
        starting at round $\ell_0+\ell+t_0$.
        Nodes at level $D$ wake up at round $D$ and, starting at round $D+2$, they simulate the nodes at levels $\ell_0+D$ through $\ell_0+D+6$ of $E_{\ell_0+D+6}$
        starting at round $\ell_0+D+t_0$.
        
        
        We now present a detailed description of algorithm $\mathcal{A}'$ on $E_D$, for $D \geq 12$. 
        \begin{itemize}
            \item
            The source node transmits $\mu$ at round $1$.
            At rounds $2$, $3$ and $4$, the source node listens.
            At the end of round $4$, the source node transitions to state
            $[\mathbf{s}_{\mathcal{A}}(12,\mu,0,\ell_0+t_0+2),\ldots, \mathbf{s}_{\mathcal{A}}(12,\mu,\ell_0,\ell_0+t_0+2)] \in S_{\mathcal{A}}(\ell_0,\mu)$.
            By Observation \ref{observation:state_of_nodes_levels_0_to_ell_1_is_same}, this is the same as state $[\mathbf{s}_{\mathcal{A}}(\ell_0+D+6,\mu,0,\ell_0+t_0+2),\ldots, \mathbf{s}_{\mathcal{A}}(\ell_0+D+6,\mu,\ell_0,\ell_0+t_0+2)]$, which is the sequence of states of the nodes at levels $0$ through $\ell_0$ at the end of round $\ell_0+t_0+2$ during the execution of $\mathcal{A}$ with message $\mu$ on $E_{\ell_0+D+6}$.
            At subsequent rounds, the state of the source node remains in $S_{\mathcal{A}}(\ell_0,\mu)$. 

            \item
            At round $1$, the nodes at level $1$ receive $\mu$.  
            At round $2$, the nodes at level $1$ transmit.
            At rounds $3$ and $4$, the nodes at level $1$ listen.
            At the end of round $4$, the nodes at level $1$ transition to state $\mathbf{s}_{\mathcal{A}}(12,\mu,\ell_0+1,\ell_0+t_0+2)$. 
            By Observation \ref{observation:state_of_nodes_levels_0_to_ell_1_is_same}, this is the same as state $\mathbf{s}_{\mathcal{A}}(\ell_0+D+6,\mu,\ell_0+1,\ell_0+t_0+2)$,
            which is the state of the nodes at level $\ell_0+1$ at the end of round $\ell_0+t_0+2$ during the execution of $\mathcal{A}$ with message $\mu$ on $E_{\ell_0+D+6}$.
            At subsequent rounds, the nodes at level $1$ behave like the nodes at level $\ell_0+1$ in $\mathcal{A}$.
            In particular, if the nodes at level $1$ receive a message, they act as though they receive a collision signal.

            \item 
            For  $2 \leq \ell \leq D$, the nodes at level $\ell$ transmit the round after they wake up. 
            Therefore, by Corollary \ref{corollary:local_round_i_for_ell_is_ell+i}, every node at level $\ell$ wakes up at round $\ell$.
            At round $\ell+2$,
            the nodes at level $\ell$ perform the same operation as nodes in state 
            $\mathbf{c}_{\mathcal{A}}(\ell_0+\ell,t_0-1)$,
            which is the operation that the nodes at level $\ell_0+\ell$ perform at round $\ell_0+\ell+t_0$ during the execution of $\mathcal{A}$ with message $\mu$ on $E_{\ell_0+D+6}$.
            
            If $\ell < D$, 
            then the nodes at level $\ell$ transition to state 
            $\mathbf{c}_{\mathcal{A}}(\ell_0+\ell,t_0)$ at the end of round $\ell+2$, which is the state of the nodes at level $\ell_0+\ell$ at the end of round $\ell_0+\ell+t_0$ during the execution of $\mathcal{A}$ with message $\mu$ on $E_{\ell_0+D+6}$. 
            At subsequent rounds, the nodes at level $\ell$ behave same as the nodes at level $\ell_0+\ell$ in $\mathcal{A}$.
            
            At the end of round $D+2$, the nodes at level $D$ transition to state 
            $[\mathbf{s}_{\mathcal{A}}(\ell_0+D+6,\mu,\ell_0+D,\ell_0+D+t_0), \ldots,   \mathbf{s}_{\mathcal{A}}(\ell_0+D+6,\mu,\ell_0+D+6,\ell_0+D+t_0)] \in L_{\mathcal{A}}(D)$, which is the sequence of states of the nodes at levels $\ell_0+D$ through $\ell_0+D+6$ at the end of round $\ell_0+D+t_0$ during the execution of $\mathcal{A}$ on $E_{\ell_0+D+6}$.
            This state does not depend on the message $\mu$ because
            the nodes at the last $7$ levels of $E_{\ell_0+D+6}$ cannot distinguish between different
            messages in $S$ at the end of round $\ell_0+D+t_0$ during the execution of $\mathcal{A}$ on $E_{\ell_0+D+6}$.
            At subsequent rounds, the states of the nodes at level $D$ remain in $L_{\mathcal{A}}(D)$. 
        \end{itemize}

        Let $\alpha_{\mu,D}$ be the execution of $\mathcal{A}$ with message $\mu$ on $E_{D}$ and $\alpha_{\mu,D}'$ be the execution $\mathcal{A}'$ with message $\mu$ on $E_{D}$. 
        We say that
        $\alpha_{\mu,D}'$ {\em simulates} $\alpha_{\mu,\ell_0+D+6}$ {\em for j steps}
        if $j = 0$ or $j > 0$,
        $\alpha_{\mu,D}'$ simulates $\alpha_{\mu,\ell_0+D+6}$ for $j-1$ steps, and the following relationships
        between the states of nodes in $\alpha_{\mu,D}'$ and $\alpha_{\mu,\ell_0+D+6}$
        hold:
        \begin{itemize}
            \item At the end of round $j+4$ of $\alpha_{\mu,D}'$, the source node is in state\\ $[\mathbf{s}_{\mathcal{A}}(\ell_0+D+6,\mu,0,\ell_0+t_0+j+2),\ldots, \mathbf{s}_{\mathcal{A}}(\ell_0+D+6,\mu,\ell_0,\ell_0+t_0+j+2)] \in S_{\mathcal{A}}(\ell_0,\mu)$.
            \item At the end of round $j+4$ of $\alpha_{\mu,D}'$, the nodes at level $1$ are in the same state as the nodes at level $\ell_0+1$ at the end of round $\ell_0+t_0+j+2$ of $\alpha_{\mu,\ell_0+D+6}$, i.e. they are in state $\mathbf{s}_{\mathcal{A}}(\ell_0+D+6,\mu,\ell_0+1,\ell_0+t_0+j+2)$.
            \item For all $2 \leq \ell \leq D-1$, at the end of round $\ell+j+2$ of $\alpha_{\mu,D}'$, the nodes at level $\ell$ are in the same state as the nodes at level $\ell_0+\ell$ at the end of round $\ell_0+\ell+t_0+j$ of $\alpha_{\mu,\ell_0+D+6}$.
            \item At the end of round $D+j+2$ of $\alpha_{\mu,D}'$, the nodes at level $D$ are in state 
            $[\mathbf{s}_{\mathcal{A}}(\ell_0+D+6,\mu,\ell_0+D,\ell_0+D+t_0+j),\ldots, \mathbf{s}_{\mathcal{A}}(\ell_0+D+6,\mu,\ell_0+D+6,\ell_0+D+t_0+j)] \in  L_{\mathcal{A}}(D)$.
        \end{itemize}

        We show that, for all $j \geq 0$,
        $\alpha_{\mu,D}'$ simulates $\alpha_{\mu,\ell_0+D+6}$ for $j$ steps.
        \begin{lemma} \label{lemma:reductions}
            For all $D \geq 12$, all $\mu \in S$, and all $j \geq 0$, 
            the execution of $\mathcal{A}'$ with message $\mu$ on $E_D$
            simulates
            the execution of $\mathcal{A}$ with message $\mu$ on $E_{\ell_0+D+6}$ for $j$ steps.
        \end{lemma}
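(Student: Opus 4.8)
The plan is to prove this by induction on $j$, mirroring the analogous simulation argument of Lemma~\ref{lemma:k_values:q_0_transmit}. The base cases $j=0$ and $j=1$ are read off directly from the explicit construction of $\mathcal{A}'$: the construction pins down the state of the source node and of the level-$1$ nodes at the end of round $4$, of each node at level $2 \le \ell \le D-1$ at the end of round $\ell+2$, and of the level-$D$ nodes at the end of round $D+2$, and these are precisely the four relationships demanded at $j=0$ (using Observation~\ref{observation:state_of_nodes_levels_0_to_ell_1_is_same} to replace $E_{12}$ by $E_{\ell_0+D+6}$ for the bottom levels, Corollary~\ref{corollary:local_round_i_for_ell_is_ell+i} to fix that each level $\ell$ wakes at round $\ell$, and the states $\mathbf{c}_{\mathcal{A}}(\ell_0+\ell,t_0)$ and $L_{\mathcal{A}}(D)$ for the middle and top). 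The case $j=1$ must be treated separately as well, since the general step below reaches back to step $j-2$, which is undefined at $j=1$; it follows from one further round of the construction, which governs rounds $5$, $\ell+3$, and $D+3$ via the transition functions of $S_{\mathcal{A}}(\ell_0,\mu)$, $L_{\mathcal{A}}(D)$, and the $\mathbf{c}_{\mathcal{A}}$-states.

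The conceptual engine of the proof is that the offset in round number between $\alpha_{\mu,D}'$ and $\alpha_{\mu,\ell_0+D+6}$ is the \emph{same constant} $\ell_0+t_0-2$ at every level: once the simulation has started for level $\ell$, a node there performs at round $r$ of $\alpha_{\mu,D}'$ exactly the operation that level $\ell_0+\ell$ performs at round $r+\ell_0+t_0-2$ of $\alpha_{\mu,\ell_0+D+6}$. Because the level shift $\ell \mapsto \ell_0+\ell$ preserves adjacency and the time shift is uniform, a listening node at level $\ell$ of $\alpha_{\mu,D}'$ receives a collision signal exactly when level $\ell_0+\ell$ does in $\alpha_{\mu,\ell_0+D+6}$, so its feedback, and hence its next state, agrees. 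For the interior levels this is the well-definedness already secured through Observations~\ref{observation:same_seq_of_states_and_transmit} and~\ref{observation:same_seq_of_states_and_same_op}; at the bottom seam the source reproduces the behaviour of level $\ell_0$ via the states of $S_{\mathcal{A}}(\ell_0,\mu)$, and at the top seam the last $7$ levels are collapsed into one simulated node governed by $L_{\mathcal{A}}(D)$ and Observations~\ref{observation:D-6_to_D_same_seq_of_states_and_D-6_transmit} and~\ref{observation:D-6_to_D_same_seq_of_states_and_D-7_same_op}.

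For the inductive step I would fix $j \ge 2$, assume the four relationships for steps $j-1$ and $j-2$, and then argue level by level in increasing order of $\ell$. Establishing level $\ell$ at step $j$ (its state at the end of round $\ell+j+2$) requires the states of levels $\ell-1$, $\ell$, $\ell+1$ at the end of round $\ell+j+1$, which are, respectively, level $\ell-1$ at step $j$, level $\ell$ at step $j-1$, and level $\ell+1$ at step $j-2$. The first comes from the nested induction on $\ell$ within the same step, the second from the hypothesis for $j-1$, and the third from the hypothesis for $j-2$; this three-way dependence is exactly what forces both the double-step outer induction and the ordering on $\ell$, and it reflects the pipelining whereby lower levels run one simulated step ahead and higher levels one step behind at any fixed round. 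The source and level-$1$ nodes are handled first, then the interior levels $2 \le \ell \le D-1$, and finally level $D$, where only the lower neighbour exists and the transition stays inside $L_{\mathcal{A}}(D)$.

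The main obstacle I anticipate is the bookkeeping at the two seams under this pipelined schedule: verifying that at every round the source node's single simulated operation (that of level $\ell_0$) delivers to level $1$ precisely the signal level $\ell_0$ delivers to level $\ell_0+1$ in $\mathcal{A}$, and symmetrically that the collapsed level-$D$ node exchanges the correct signals with level $D-1$ while its internal dynamics across levels $\ell_0+D$ through $\ell_0+D+6$ evolve as in $\mathcal{A}$. These are exactly the places where the non-uniform startup delays (rounds $4$, $\ell+2$, and $D+2$ for the three regions) must be reconciled with the uniform steady-state offset $\ell_0+t_0-2$, and where the case analysis built into the transitions of $S_{\mathcal{A}}(\ell_0,\mu)$ and $L_{\mathcal{A}}(D)$ must be shown to select the branch agreeing with $\alpha_{\mu,\ell_0+D+6}$.
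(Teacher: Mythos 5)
Your proposal is correct and follows essentially the same route as the paper's proof: an induction on $j$ (with the cumulative hypothesis covering steps $j-1$ and $j-2$, and $j=1$ resolved by one extra round of the construction, exactly as the paper does inside its inductive step) combined with an inner induction on $\ell$, using the three-way dependence of level $\ell$ at step $j$ on level $\ell-1$ at step $j$, level $\ell$ at step $j-1$, and level $\ell+1$ at step $j-2$, with the seams handled by the transition rules of $S_{\mathcal{A}}(\ell_0,\mu)$ and $L_{\mathcal{A}}(D)$ and Observations \ref{observation:same_seq_of_states_and_transmit}--\ref{observation:same_seq_of_states_and_same_op} and \ref{observation:D-6_to_D_same_seq_of_states_and_D-6_transmit}--\ref{observation:D-6_to_D_same_seq_of_states_and_D-7_same_op}. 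Your identification of the uniform round offset $\ell_0+t_0-2$ is consistent with all four clauses of the simulation definition and accurately captures the pipelining the paper exploits.
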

        \begin{proof}
            Let $D \geq 12$ and $\mu \in S$.
            We prove, by induction on $j \geq 0$, that $\alpha_{\mu,D}'$ simulates $\alpha_{\mu,\ell_0+D+6}$ for $j$ steps.
            Let $j \geq 1$ and assume that $\alpha_{\mu,D}'$ simulates $\alpha_{\mu,\ell_0+D+6}$ for $j-1$ steps.
            
            At the end of round $(j-1)+4$ of $\alpha_{\mu,D}'$,
            the source node is in state 
            $[\mathbf{s}_{\mathcal{A}}(\ell_0+D+6,\mu,0,\ell_0+t_0+(j-1)+2),\ldots, \mathbf{s}_{\mathcal{A}}(\ell_0+D+6,\mu,\ell_0,\ell_0+t_0+(j-1)+2)]$ and
            the nodes at level $1$ are in the same state as the nodes at level $\ell_0+1$ at the end of round $\ell_0+t_0+(j-1)+2$ of $\alpha_{\mu,\ell_0+D+6}$.
            Furthermore, at the end of round $2+(j-1)+2$ of $\alpha_{\mu,D}'$, the nodes at level $2$ are in the same states as the nodes at level $\ell_0+2$ at the end of round $(\ell_0+2)+t_0+(j-1)$ of $\alpha_{\mu,\ell_0+D+6}$.
            
            At round $j+4$ of $\alpha_{\mu,D}'$, the source node, the nodes at level $1$, and the nodes at level $2$ perform the same operation as the nodes at level $\ell_0$, $\ell_0+1$, and $\ell_0+2$, respectively, 
            at round $\ell_0+t_0+j+2$ of $\alpha_{\mu,\ell_0+D+6}$.
            Hence, the nodes at level $1$ transition to the same state as the nodes at level $\ell_0+1$ at the end of round $\ell_0+t_0+j+2$ of $\alpha_{\mu,\ell_0+D+6}$. 
            Moreover, by definition of the transitions of states in $S_{\mathcal{A}}(\ell_0,\mu)$,
            at the end of round $j+4$ of $\alpha_{\mu,D}'$, 
            the source node transitions to the state 
            $[\mathbf{s}_{\mathcal{A}}(\ell_0+D+6,\mu,0,\ell_0+t_0+j+2),\ldots, \mathbf{s}_{\mathcal{A}}(\ell_0+D+6,\mu,\ell_0,\ell_0+t_0+j+2)] \in S_{\mathcal{A}}(\ell_0,\mu)$.
        
            
            Let $2 \leq \ell \leq D-1$ and assume that, at the end of round $(\ell-1)+j+2$ of $\alpha_{\mu,D}'$, the nodes at level $\ell-1$ are in the same state as the nodes at level $\ell_0+\ell-1$ at the end of round $\ell_0+(\ell-1)+t_0+j$ of $\alpha_{\mu,\ell_0+D+6}$. 
            At the end of round $\ell+(j-1)+2$ of $\alpha_{\mu,D}'$, the nodes at level $\ell$ are in the same state as the nodes at level $\ell_0+\ell$ at the end of round $\ell_0+\ell+t_0+(j-1)$ of $\alpha_{\mu,\ell_0+D+6}$.
            Thus, the nodes at levels $\ell-1$ and $\ell$
            perform the same operations
            at round $\ell+j+2$ 
            of $\alpha_{\mu,D}'$ 
            as the nodes at levels $\ell_0+\ell-1$ and $\ell_0+\ell$
            perform
            at round $\ell_0+\ell+t_0+j$ of $\alpha_{\mu,\ell_0+D+6}$.
            If $j=1$, then, by construction of $\mathcal{A}'$, the nodes at level $\ell+1$ perform the same operation
            at round $(\ell+1) +2 = \ell + j +2$ of $\alpha_{\mu,D}'$ as nodes at level $\ell_0+(\ell+1)$ perform at round $\ell_0+(\ell+1)+t_0$ of $\alpha_{\mu,\ell_0+D+6}$.
            If $j \geq 2$, 
            the nodes at level $\ell+1$ are in the same state at the end of round $(\ell+1)+(j-2)+2$ of $\alpha_{\mu,D}'$ as the state of nodes at level $\ell_0+\ell+1$ at round $\ell_0+(\ell+1)+t_0+(j-2)$ of $\alpha_{\mu,\ell_0+D+6}$. Hence, the nodes at level $\ell+1$ perform the same operation
            at round $(\ell+1)+(j-2)+3$ of $\alpha_{\mu,D}'$ as nodes at level $\ell_0+\ell+1$ perform at round $\ell_0+(\ell+1)+t_0+(j-2)+1$ of $\alpha_{\mu,\ell_0+D+6}$.
            %
            Therefore, 
            the nodes at level $\ell$ transition to 
            the same state
            at the end of round $\ell+j+2$ 
            of $\alpha_{\mu,D}'$ 
            as the nodes at level $\ell_0+\ell$
            at the end of round $\ell_0+\ell+t_0+j$ of $\alpha_{\mu,\ell_0+D+6}$.
            
            At the end of round $(D-1)+j+2$ of $\alpha_{\mu,D}'$, the nodes at level $D-1$ are in the same state as the nodes at level $\ell_0+D-1$ at the end of round $\ell_0+(D-1)+t_0+j$ of $\alpha_{\mu,\ell_0+D+6}$. 
            Thus, at round $(D-1)+j+3$ of $\alpha_{\mu,D}'$, the nodes at level $D-1$ 
            perform the same operation as the nodes at level $\ell_0+D-1$ at round 
            $\ell_0+(D-1)+t_0+j+1$ of $\alpha_{\mu,\ell_0+D+6}$.
            At the end of round $D+(j-1)+2$ of $\alpha_{\mu,D}'$, the nodes at level $D$ are in state 
            $[\mathbf{s}_{\mathcal{A}}(\ell_0+D+6,\mu,\ell_0+D,\ell_0+D+t_0+(j-1)),\ldots, \mathbf{s}_{\mathcal{A}}(\ell_0+D+6,\mu,\ell_0+D+6,\ell_0+D+t_0+(j-1))]$.
            By definition of the transitions of states in $L_{\mathcal{A}}(D)$,
            at the end of round $D+j+2$ of $\alpha_{\mu,D}'$, the nodes at level $D$ transition to the state 
            $ [\mathbf{s}_{\mathcal{A}}(\ell_0+D+6,\mu,\ell_0+D,\ell_0+D+t_0+j),\ldots, \mathbf{s}_{\mathcal{A}}(\ell_0+D+6,\mu,\ell_0+D+6,\ell_0+D+t_0+j)] \in L_{\mathcal{A}}(D)$.
            
            Therefore, $\alpha_{\mu,D}'$ simulates $\alpha_{\mu,\ell_0+D+6}$ for $j$ steps.
    \end{proof}
    
    We now show that algorithm $\mathcal{A}'$ has property $\textit{P}(k')$, for some $k'< k$.
    \begin{lemma} \label{lemma:construction_of_X''}
        Suppose that there exist a level $2 \leq \ell_0 \leq 4$, a round $3 \leq t_0 \leq 5$, and a set $S \subsetneq \{1,\ldots, w_{k-1}+1\}$ of size $w_{k-t_0+1}+1$
        such that,
        for all $0 \leq t \leq t_0$,
        the nodes at level $\ell_0$ are in the same state
        at the end of round $\ell_0+t$ during the execution of $\mathcal{A}$
        with all messages $\mu \in S$ on $E_{12}$.
        Then, there exists an algorithm that has property $\textit{P}(k-t_0+2)$.
    \end{lemma}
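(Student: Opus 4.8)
The plan is to show that the algorithm $\mathcal{A}'$ constructed above is precisely the algorithm with property $\textit{P}(k')$, where $k' = k - t_0 + 2$. Since $k'-1 = k-t_0+1$, the message set $\{1,\ldots, w_{k'-1}+1\}$ has size $w_{k-t_0+1}+1$, which is exactly $|S|$; I would therefore fix a bijection between $\{1,\ldots, w_{k'-1}+1\}$ and $S$ and let each execution of $\mathcal{A}'$ with a message simulate the execution of $\mathcal{A}$ with the corresponding message of $S$ on $E_{\ell_0+D+6}$. It remains to verify the two conditions of $\textit{P}(k')$ for $\mathcal{A}'$. The first condition is immediate: by the construction of $\mathcal{A}'$ the source node transmits at round $1$, the nodes at level $1$ transmit at round $2$, and, for $2 \leq \ell \leq D$, the nodes at level $\ell$ transmit the round after they wake up, so by Corollary \ref{corollary:local_round_i_for_ell_is_ell+i} the nodes at level $\ell$ wake up at round $\ell$ for all $1 \leq \ell \leq D$, as already recorded during the construction.

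For the second condition I would fix distinct messages $\mu,\mu' \in \{1,\ldots,w_{k'-1}+1\}$, a graph $E_D$ with $D \geq 12$, and a level $2 \leq \ell \leq D-1$, and let $\nu,\nu' \in S$ be the corresponding messages. Because $\mathcal{A}$ has property $\textit{P}(k)$ and $\nu,\nu' \in S \subseteq \{1,\ldots,w_{k-1}+1\}$ are distinct, applying its second condition at level $\ell_0+\ell$ (which lies in the admissible range $2 \leq \ell_0+\ell \leq (\ell_0+D+6)-1$) of $E_{\ell_0+D+6}$ yields a round $0 \leq t^{\ast} \leq k-1$ at which the nodes at level $\ell_0+\ell$ are in different states at the end of round $(\ell_0+\ell)+t^{\ast}$ in the executions of $\mathcal{A}$ with $\nu$ and $\nu'$. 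On the other hand, Lemma \ref{lemma:state_of_nodes_levels_geater_than_ell_0_same} shows that for all $0 \leq t \leq t_0$ the nodes at level $\ell_0+\ell$ are in the \emph{same} state at the end of round $(\ell_0+\ell)+t$ for all messages in $S$; hence $t^{\ast} \geq t_0+1$. Writing $t^{\ast} = t_0 + j$, this gives $1 \leq j \leq k-1-t_0$.

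The last step is to transport this distinguishing round back to $\mathcal{A}'$ via Lemma \ref{lemma:reductions}. Since $\alpha_{\nu,D}'$ simulates $\alpha_{\nu,\ell_0+D+6}$ (and likewise for $\nu'$) for every number of steps, the nodes at level $\ell$ in $\mathcal{A}'$ are in the same state at the end of round $\ell+j+2$ as the nodes at level $\ell_0+\ell$ in $\mathcal{A}$ at the end of round $(\ell_0+\ell)+t_0+j = (\ell_0+\ell)+t^{\ast}$. As these latter states differ between $\nu$ and $\nu'$, the former states differ between $\mu$ and $\mu'$. Setting $t_\ell = j+2$, we obtain $3 \leq t_\ell \leq k-t_0+1 = k'-1$, so $0 \leq t_\ell \leq k'-1$, which is exactly the second condition of $\textit{P}(k')$. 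Thus $\mathcal{A}'$ has property $\textit{P}(k')$.

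The crux is the index bookkeeping in the last two paragraphs. Lemma \ref{lemma:reductions} only relates the states of $\mathcal{A}'$ and $\mathcal{A}$ at the corresponding rounds $\ell+j+2$ and $(\ell_0+\ell)+t_0+j$ for $j \geq 0$, so it is essential that the distinguishing round $t^{\ast}$ of $\mathcal{A}$ satisfy $t^{\ast} \geq t_0$, placing the distinguished pair of states inside the simulated regime. This is precisely what Lemma \ref{lemma:state_of_nodes_levels_geater_than_ell_0_same} guarantees (and it even gives $t^{\ast} \geq t_0+1$), which is in turn where the hypothesis that the messages of $S$ are indistinguishable at level $\ell_0$ within $t_0$ rounds is consumed. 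The same bound simultaneously forces $t_\ell = t^{\ast}-t_0+2 \leq k'-1$, so the $\mathcal{A}$-window of admissible distinguishing rounds after waking up, which effectively begins at $t_0+1$, is shifted up by $2$ and compressed from below by $t_0$ into an $\mathcal{A}'$-window ending at $k'-1 = k-t_0+1$. The strict decrease $k' < k$ (since $t_0 \geq 3$) is what makes the construction useful for the downward induction that follows.
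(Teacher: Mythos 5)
Your proposal is correct and follows essentially the same route as the paper's proof: the first condition of $\textit{P}(k-t_0+2)$ via Corollary \ref{corollary:local_round_i_for_ell_is_ell+i}, and the second condition by taking a distinguishing round $t_{\ell'}$ from property $\textit{P}(k)$ of $\mathcal{A}$, using Lemma \ref{lemma:state_of_nodes_levels_geater_than_ell_0_same} to force $t_{\ell'} > t_0$, and transporting the distinguishing states to $\mathcal{A}'$ via Lemma \ref{lemma:reductions} with $j = t_{\ell'}-t_0$, yielding $t_\ell = t_{\ell'}-t_0+2 \leq k'-1$. Your explicit bijection between $\{1,\ldots,w_{k'-1}+1\}$ and $S$ just makes precise a relabeling the paper leaves implicit in its final sentence.
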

    \begin{proof} 
        In $\mathcal{A}'$, each node transmits the round after it wakes up.
        By Corollary \ref{corollary:local_round_i_for_ell_is_ell+i}, for all $1 \leq \ell \leq D$, the nodes at level $\ell$ wake up at round $\ell$.
        Thus, $\mathcal{A}'$ satisfies the first condition of property $\textit{P}(k-t_0+2)$.
        
        Since $\mathcal{A}$ has property $\textit{P}(k)$, it follows that, for all distinct $\mu,\mu' \in S$, all $D \geq 12$ and all $2 \leq \ell' \leq (\ell_0+D+6)-1$, there exists $0 \leq t_{\ell'} \leq k-1$
        such that, 
        the nodes at level $\ell'$ are in different states at the end of round $\ell'+t_{\ell'}$ during the executions of $\mathcal{A}$ with messages $\mu$ and $\mu'$ on $E_{\ell_0+D+6}$.
        By Observation \ref{lemma:state_of_nodes_levels_geater_than_ell_0_same},
        if $\ell' \geq \ell_0$, 
        the nodes at level $\ell'$ are in the same states
        at the end of the first $\ell'+t_0$ rounds during the execution of $\mathcal{A}$
        with all messages $\mu \in S$ on $E_{\ell_0+D+6}$.
        Hence $t_{\ell'} > t_0$.
        By Lemma \ref{lemma:reductions} (for $j=t_{\ell'}-t_0$ and $\ell=\ell'-\ell_0$), 
        for all $2 \leq \ell'-\ell_0 \leq D-1$,
        at the end of round $\ell'-\ell_0+t_{\ell'}-t_0+2 > \ell'-\ell_0+2$ during the execution of $\mathcal{A}'$ on $E_D$,
        the nodes at level $\ell'-\ell_0$ 
        are in the same state
        as the nodes at level $\ell'$
        at the end of round 
        $\ell'+t_{\ell'}$
        during the execution of $\mathcal{A}$ with the same message on $E_{\ell_0+D+6}$.
        Thus, the nodes at level $\ell'-\ell_0$ are in different states at the end of round $\ell'-\ell_0+(t_{\ell'}-t_0+2)$ during the executions of $\mathcal{A}'$ with messages $\mu$ and $\mu'$ on $E_{D}$.
        Let $k'=k-t_0+2$.
        Since $0 \leq t_{\ell'}-t_0+2 \leq k'-1$ and $|S|=w_{k'-1}+1$, it follows that $\mathcal{A}'$ satisfies the second condition of property $\textit{P}(k')$.
        %
        %
    \end{proof}

\subsection{No Algorithm Has Property $\textit{P}(k)$} \label{subsection:P(K)_is_impossible}
    To obtain a contradiction, let $k\geq 3$ be the smallest value such that there exists an algorithm $\mathcal{X}$ with property $\textit{P}(k)$. 
    
     
   The first condition of property $P(k)$ implies that, for all $\ell \geq 2$, nodes in state $\mathbf{c}_{\mathcal{X}}(\ell,0)$ transmit and transition to the same state. 
   We call this state $\mathbf{c}_{\mathcal{X}}(\ell,1)$.
    %
    %
    If nodes in state $\mathbf{c}_{\mathcal{X}}(\ell,1)$ transmit, let $\mathbf{c}_{\mathcal{X}}(\ell,2)$ be the state to which they transition after transmitting. 
    Otherwise, let $\mathbf{c}_{\mathcal{X}}(\ell,2)$ be the state to which they transition after receiving a collision signal. 
    If $\ell+1$ is not the last level in the graph,
    the nodes at level $\ell+1$ transmit at round $\ell+2$,
    so
    the nodes at level $\ell$ receive a collision signal at round $\ell+2$ if they are listening.
    Thus, if $\ell \leq D-2$, the nodes at level $\ell$ are in state $\mathbf{c}_{\mathcal{X}}(\ell,2)$ at the end of round $\ell+2$ during the execution of $\mathcal{X}$ on $E_{D}$.

    \begin{lemma} \label{lemma:k_values:r_at_least_4_helper}
        For all $D \geq 12$, all $2 \leq \ell \leq D-2$ and all $0 \leq t \leq 2$,
        the nodes at level $\ell$ are in the same states
        at the end of round $\ell+t$ during the execution of $\mathcal{X}$ with all messages $\mu \in \{1,\ldots, w_{k-1}+1\}$ on $E_{D}$.
    \end{lemma}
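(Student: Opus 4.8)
The plan is to prove the three cases $t=0$, $t=1$, and $t=2$ separately, in each case identifying the end-of-round state of the nodes at level $\ell$ with one of the three message-independent states $\mathbf{c}_{\mathcal{X}}(\ell,0)$, $\mathbf{c}_{\mathcal{X}}(\ell,1)$, $\mathbf{c}_{\mathcal{X}}(\ell,2)$ introduced immediately before the statement. Since each of these states is defined without any reference to the message, establishing these identifications yields the claim directly. Throughout, I would fix $D \geq 12$, a level $2 \leq \ell \leq D-2$, and an arbitrary message $\mu \in \{1,\ldots,w_{k-1}+1\}$.

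For $t=0$, I would invoke the first condition of property $\textit{P}(k)$: the nodes at level $\ell$ wake up at round $\ell$, and since $\ell \leq D-2$ means $\ell$ is not the last level, upon waking up they transition to $\mathbf{c}_{\mathcal{X}}(\ell,0)$, which by its definition in Section \ref{subsection:properties_of_general_algorithm} does not depend on $\mu$. For $t=1$, I would use that the nodes are in $\mathbf{c}_{\mathcal{X}}(\ell,0)$ at the end of round $\ell$ and that, as argued just before the lemma, nodes in this state transmit; a transmit step has a unique successor, namely $\mathbf{c}_{\mathcal{X}}(\ell,1)$, so this is the end-of-round-$(\ell+1)$ state for every $\mu$.

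The case $t=2$ is the only one that genuinely uses the hypothesis $\ell \leq D-2$, and it is where the argument is most delicate. Because $\ell \leq D-2$ forces $\ell+1 \leq D-1$, level $\ell+1$ is not the last level, so both of its nodes are in $\mathbf{c}_{\mathcal{X}}(\ell+1,0)$ at the end of round $\ell+1$ and hence transmit at round $\ell+2$. I would then split on whether the nodes at level $\ell$ transmit or listen in state $\mathbf{c}_{\mathcal{X}}(\ell,1)$ at round $\ell+2$: if they transmit, the unique successor is $\mathbf{c}_{\mathcal{X}}(\ell,2)$ by definition; if they listen, then—since both level-$(\ell+1)$ nodes transmit and each is a neighbour of every level-$\ell$ node—they receive a collision signal (never silence, and, as established earlier, never a single clean message), whose unique successor is again $\mathbf{c}_{\mathcal{X}}(\ell,2)$. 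Either way the end-of-round-$(\ell+2)$ state is $\mathbf{c}_{\mathcal{X}}(\ell,2)$, independent of $\mu$.

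The main obstacle, such as it is, lies entirely in the $t=2$ case: one must be certain that a listening level-$\ell$ node obtains precisely a collision signal rather than silence or a distinct message, so that its successor state is forced and hence $\mu$-independent. This is exactly what the bound $\ell \leq D-2$ guarantees, via the transmission of the two (hence colliding) nodes at level $\ell+1$. Once this is in place, the lemma is merely a collation of the message-independence of the three states $\mathbf{c}_{\mathcal{X}}(\ell,0)$, $\mathbf{c}_{\mathcal{X}}(\ell,1)$, $\mathbf{c}_{\mathcal{X}}(\ell,2)$.
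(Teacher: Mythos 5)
Your proposal is correct and follows exactly the paper's own reasoning: the paper proves this lemma implicitly in the paragraphs preceding its statement, by the same three-step identification of the end-of-round states with the message-independent states $\mathbf{c}_{\mathcal{X}}(\ell,0)$, $\mathbf{c}_{\mathcal{X}}(\ell,1)$, and $\mathbf{c}_{\mathcal{X}}(\ell,2)$, including the same use of $\ell \leq D-2$ to guarantee that the two nodes at level $\ell+1$ transmit at round $\ell+2$ and hence force a collision signal (or a transmit) with a unique successor state. Your treatment of the $t=2$ case, splitting on whether nodes in $\mathbf{c}_{\mathcal{X}}(\ell,1)$ transmit or listen, mirrors the paper's definition of $\mathbf{c}_{\mathcal{X}}(\ell,2)$ precisely.
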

    \noindent
    Hence, $\mathcal{X}$ does not satisfy the second condition of property $\textit{P}(3)$.
    Since $\mathcal{X}$ has property $\textit{P}(k)$, it follows 
    that $k \geq 4$.
    
    \medskip
    
    We focus on nodes at levels $2$, $3$, and $4$ and get an upper bound on the number of messages that they cannot distinguish from one another three rounds after they wake up.
    \begin{lemma}  \label{lemma:reduction_1}
        For $2 \leq \ell \leq 4$,
        there are at most $w_{k-2}$ different messages $\mu \in \{1,\ldots, w_{k-1}+1\}$ such that, during the execution of $\mathcal{X}$ with message $\mu$ on $E_{12}$, the nodes at level $\ell$ are in the same state at the end of round $\ell+3$. 
    \end{lemma}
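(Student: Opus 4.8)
The plan is to prove the contrapositive by contradiction: if more than $w_{k-2}$ messages were merged at level $\ell$ at the end of round $\ell+3$, I would extract from them a set of messages satisfying the hypotheses of Lemma \ref{lemma:construction_of_X''}, build an algorithm with property $\textit{P}(k-1)$, and contradict the minimality of $k$. Concretely, I read the statement as bounding, for each state $q$, the number of messages for which the level-$\ell$ nodes are in state $q$ at the end of round $\ell+3$; so I would fix $2 \le \ell \le 4$ and assume, toward a contradiction, that some state is reached by at least $w_{k-2}+1$ of the messages in $\{1,\ldots,w_{k-1}+1\}$ during the executions of $\mathcal{X}$ on $E_{12}$.

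Next I would set $t_0 := 3$ and $\ell_0 := \ell$, and let $S$ consist of exactly $w_{k-2}+1$ of those merged messages. The point is that this is precisely the configuration that Lemma \ref{lemma:construction_of_X''} accepts as input: we have $2 \le \ell_0 \le 4$, $3 \le t_0 \le 5$, and $|S| = w_{k-2}+1 = w_{k-t_0+1}+1$. To verify the remaining hypotheses, I would check that the level-$\ell$ nodes are in the same state at the end of round $\ell+t$ for every message of $S$ and every $0 \le t \le t_0$: for $t \in \{0,1,2\}$ this is exactly Lemma \ref{lemma:k_values:r_at_least_4_helper} (which applies since $\ell \le 4 \le 12-2$), and for $t=3$ it holds by the choice of $S$. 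I would also note that $S \subsetneq \{1,\ldots,w_{k-1}+1\}$ is a \emph{proper} subset, because $w_{k-1} = w_{k-2}+w_{k-4} \ge w_{k-2}+1$ for $k \ge 4$ (here $k-4 \ge 0$ and $w_{k-4}\ge 1$), so $w_{k-2}+1 \le w_{k-1} < w_{k-1}+1$.

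With all hypotheses in place, Lemma \ref{lemma:construction_of_X''} produces an algorithm with property $\textit{P}(k-t_0+2) = \textit{P}(k-1)$. Since $k \ge 4$ was established just before this lemma, we have $3 \le k-1 < k$, so this contradicts the assumption that $k$ is the smallest value for which an algorithm with property $\textit{P}(\cdot)$ exists. Hence no state can be reached by more than $w_{k-2}$ messages at the end of round $\ell+3$, which is the claim. Because the choice of $\ell \in \{2,3,4\}$ was arbitrary, the bound holds for all three levels simultaneously.

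I do not expect a serious obstacle here, since the heavy lifting is done by Lemma \ref{lemma:construction_of_X''}; the only care needed is bookkeeping when matching parameters. The main points to get exactly right are: (i) that $S$ is a proper subset, which forces the inequality $w_{k-4}\ge 1$ and thus uses $k\ge 4$; (ii) that the common-state condition must be stitched together from Lemma \ref{lemma:k_values:r_at_least_4_helper} for $t\le 2$ and from the contradiction hypothesis for $t=3$; and (iii) that $t_0=3$ is exactly the value making $w_{k-t_0+1}=w_{k-2}$ and $k-t_0+2=k-1\ge 3$, so that the constructed algorithm genuinely violates minimality.
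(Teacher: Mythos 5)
Your proposal matches the paper's proof essentially step for step: assume some state at level $\ell$ is reached at the end of round $\ell+3$ by at least $w_{k-2}+1$ messages, observe that the level-$\ell$ states at rounds $\ell$, $\ell+1$, $\ell+2$ are message-independent (Lemma \ref{lemma:k_values:r_at_least_4_helper}), invoke Lemma \ref{lemma:construction_of_X''} with $\ell_0=\ell$ and $t_0=3$ to obtain an algorithm with property $\textit{P}(k-1)$, and contradict the minimality of $k$ (using $k\geq 4$ so that $k-1\geq 3$). Your extra bookkeeping — taking $S$ of size exactly $w_{k-2}+1$ and checking $w_{k-2}+1\leq w_{k-1}$ so that $S$ is a proper subset — is correct and in fact slightly more careful than the paper's own write-up.
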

    \begin{proof}
        To get a contradiction, 
        suppose there exist a level $2 \leq \ell' \leq 4$,
        a set $S \subseteq \{1,\ldots, w_{k-1}+1\}$ with $|S| \geq w_{k-2}+1$,
        and a state $q$ such that, for all $\mu \in S$,
        the nodes at level $\ell'$ are in state $q$ at the end of round $\ell'+3$ during the execution of $\mathcal{X}$ with message $\mu$ on $E_{12}$.
        
        During the execution of $\mathcal{X}$ with all messages $\mu \in S$ on $E_{12}$,
        the nodes at level $\ell'$ are in the same state at the end of round $\ell'+t$ for $0 \leq t \leq 2$ and 
        they are in state $q$ at the end of round $\ell'+3$.
        Thus, the precondition of Lemma \ref{lemma:construction_of_X''} holds for $\ell_0=\ell'$ and $t_0=3$.
        Hence, there is an algorithm  with property $\textit{P}(k-1)$.
        By definition of $k$, no such algorithm exists.
        Thus, $|S| \leq w_{k-2}$.
    \end{proof}

    Recall that, for  $2 \leq \ell \leq 4$, the nodes at level $\ell$  are in state $\mathbf{c}_{\mathcal{X}}(\ell,2)$ at the end of round $\ell+2$ during the execution of $\mathcal{X}$ for every message $\mu \in \{1,\ldots, w_{k-1}+1\}$ on $E_{12}$.
    If nodes in state $\mathbf{c}_{\mathcal{X}}(\ell,2)$ transmit, 
    nodes at level $\ell$ transition to the same state at the end of round $\ell+3$.
    By Lemma \ref{lemma:reduction_1}, there are at most $w_{k-2}<w_{k-1}+1$ different messages such that the nodes at level $\ell$ are in the 
    same state at the end of round $\ell+3$.
    Thus, nodes in state $\mathbf{c}_{\mathcal{X}}(\ell,2)$ listen.
    Nodes at levels $2$ and greater can receive only a collision signal or nothing while listening during any execution on $E_{12}$.
    Thus, there are two states that are reachable by nodes at level $\ell$ from state $\mathbf{c}_{\mathcal{X}}(\ell,2)$, we call these states
    $\mathbf{c}_{\mathcal{X}}(\ell,3)$ and $\mathbf{c}_{\mathcal{X}}'(\ell,3)$.
    Next, we show that nodes in state $\mathbf{c}_{\mathcal{X}}(\ell,3)$ perform different operation than nodes in state $\mathbf{c}_{\mathcal{X}}'(\ell,3)$, for $\ell=2,3$.
    \begin{lemma} \label{lemma:k_Values:different_operations_at_q_3_and_q'_3}
        For $\ell=2,3$, nodes in exactly one of the states $\mathbf{c}_{\mathcal{X}}(\ell,3)$ and $\mathbf{c}_{\mathcal{X}}'(\ell,3)$ transmit.
    \end{lemma}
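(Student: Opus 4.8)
The plan is to prove this dichotomy by exploiting what the nodes at the \emph{next} level, $\ell+1$, are forced to do. Since $\ell \in \{2,3\}$, we have $\ell+1 \in \{3,4\}$, so Lemma~\ref{lemma:reduction_1} still applies at level $\ell+1$: at most $w_{k-2}$ of the $w_{k-1}+1$ messages can leave the nodes at level $\ell+1$ in a common state at the end of round $(\ell+1)+3=\ell+4$. Because $w_{k-1}+1 > w_{k-2}$ and the only two states reachable from $\mathbf{c}_{\mathcal{X}}(\ell+1,2)$ by listening are $\mathbf{c}_{\mathcal{X}}(\ell+1,3)$ and $\mathbf{c}_{\mathcal{X}}'(\ell+1,3)$, I would first conclude that both of these states must actually be reached, by different messages, at the end of round $\ell+4$ during executions on $E_{12}$.

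Next I would trace round $\ell+4$ precisely. At the end of round $\ell+3$ the nodes at level $\ell+1$ sit in the message-independent state $\mathbf{c}_{\mathcal{X}}(\ell+1,2)$, and in that state they listen. Hence their state at the end of round $\ell+4$ is a function only of the signal received that round, so the signal (collision versus nothing) must itself vary with the message. A node at level $\ell+1$ hears a collision iff a neighbour at level $\ell$ or level $\ell+2$ transmits at round $\ell+4$. Level $\ell+2$ wakes at round $\ell+2$ and transmits at round $\ell+3$, so at round $\ell+4$ it is in $\mathbf{c}_{\mathcal{X}}(\ell+2,1)$ and performs a message-independent operation; thus it cannot be the source of the variation. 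Therefore the operation of level $\ell$ at round $\ell+4$ must depend on the message.

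Finally, the nodes at level $\ell$ are in state $\mathbf{c}_{\mathcal{X}}(\ell,3)$ or $\mathbf{c}_{\mathcal{X}}'(\ell,3)$ at the end of round $\ell+3$, and these are the only two possibilities, so for their round $\ell+4$ operation to vary with the message the two states must perform different operations. Since in the strengthened model every node either listens or transmits, exactly one of $\mathbf{c}_{\mathcal{X}}(\ell,3)$, $\mathbf{c}_{\mathcal{X}}'(\ell,3)$ transmits, as claimed. The delicate point, and the reason the statement is restricted to $\ell=2,3$, is the first step: I must ensure $\ell+1$ still lies in the range $\{2,3,4\}$ where Lemma~\ref{lemma:reduction_1} guarantees that both successor states are genuinely reached, which is exactly what fails for $\ell=4$. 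I also need to confirm that every level other than $\ell$ that could affect level $\ell+1$'s round $\ell+4$ signal, namely level $\ell+2$, acts message-independently; this follows from the first condition of property $\textit{P}(k)$ together with $\ell+2$ not being the last level of $E_{12}$.
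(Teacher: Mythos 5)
Your proposal is correct and is essentially the paper's own argument in contrapositive form: the paper assumes both of $\mathbf{c}_{\mathcal{X}}(\ell,3)$ and $\mathbf{c}_{\mathcal{X}}'(\ell,3)$ act identically and derives, via the message-independent states $\mathbf{c}_{\mathcal{X}}(\ell+1,2)$ and $\mathbf{c}_{\mathcal{X}}(\ell+2,1)$, that all $w_{k-1}+1$ messages leave level $\ell+1$ in a common state at the end of round $(\ell+1)+3$, contradicting Lemma~\ref{lemma:reduction_1} since $w_{k-1}+1>w_{k-2}$. Your direct version uses exactly the same ingredients (Lemma~\ref{lemma:reduction_1} applied at level $\ell+1\in\{3,4\}$, message-independence of the neighbouring levels' operations, and the two-state dichotomy), merely making explicit the signal-propagation step that the paper compresses into ``cannot distinguish.''
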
 
    \begin{proof}   
        To obtain a contradiction, assume that for $\ell=2$ or $\ell=3$, nodes in  both $\mathbf{c}_{\mathcal{X}}(\ell,3)$ and $\mathbf{c}_{\mathcal{X}}'(\ell,3)$ transmit or  nodes in both these states listen.
        Let $\mu,\mu' \in \{1,\ldots, w_{k-1}+1\}$.
        Let $\alpha_{\mu}$ and $\alpha_{\mu'}$ be the executions of $\mathcal{X}$ with messages $\mu$ and $\mu'$ on $E_{12}$, respectively.
        At the end of round $\ell+3$ of $\alpha_{\mu}$ and $\alpha_{\mu'}$,
        the nodes at level $\ell$ are in one of the states $\mathbf{c}_{\mathcal{X}}(\ell,3)$ and $\mathbf{c}_{\mathcal{X}}'(\ell,3)$, 
        the nodes at level $\ell+1$ are in state $\mathbf{c}_{\mathcal{X}}(\ell+1,2)$, and
        the nodes at level $\ell+2$ are in state $\mathbf{c}_{\mathcal{X}}(\ell+2,1)$.
        Since nodes in both of the states $\mathbf{c}_{\mathcal{X}}(\ell,3)$ and $\mathbf{c}_{\mathcal{X}}'(\ell,3)$ transmit or nodes in both of these states listen,
        the nodes at level $\ell+1$ cannot distinguish between executions $\alpha_{\mu}$ and $\alpha_{\mu'}$ at the end of round $\ell+4$.
        Hence, the nodes at level $\ell+1$ are in the same state at the end of round $\ell+4$ of $\alpha_{\mu}$ and $\alpha_{\mu'}$.
        Therefore, for all $\mu \in \{1,\ldots, w_{k-1}+1\}$, during the execution of $\mathcal{X}$ with message $\mu$ on $E_{12}$, the nodes at level $\ell+1$ are in the same state at the end of round $(\ell+1)+3$.
        However, this contradicts Lemma \ref{lemma:reduction_1} because $w_{k-1}+1>w_{k-2}$.
    \end{proof} 

    For $\ell=2,3$, suppose nodes in state $\mathbf{c}_{\mathcal{X}}(\ell,3)$ transmit and nodes in state $\mathbf{c}_{\mathcal{X}}'(\ell,3)$ listen.
    Let $U \subseteq \{1,\ldots, w_{k-1}+1\}$ be the set of messages such that nodes at level $2$ are in state $\mathbf{c}_{\mathcal{X}}(2,3)$ at the end of round $5$ during the executions of $\mathcal{X}$ with message $\mu \in U$ on $E_{12}$ and let $U'=\{1,\ldots, w_{k-1}+1\} \setminus U$.
    By Lemma \ref{lemma:reduction_1}, $|U| \leq w_{k-2}$.
    Since $|U|+|U'|=w_{k-1}+1$,
    it follows that $|U'| \geq w_{k-1}+1 - w_{k-2} = w_{k-4}+1$.
    Similarly, $|U'| \leq w_{k-2}$ and $|U| \geq w_{k-4}+1$.
    
    By assumption, nodes in state $\mathbf{c}_{\mathcal{X}}(2,3)$ transmit and, hence, there can be only one state to which nodes in state $\mathbf{c}_{\mathcal{X}}(2,3)$ transition.
    Thus, the nodes at level $2$ cannot distinguish between the values in $U$ 
    at the end of the first $4$ rounds after they wake up on $E_{12}$.
    \begin{lemma} \label{lemma:k_values:r_at_least_6_helper}
        For all $0 \leq t \leq 4$,
        the nodes at level $2$ are in the same states
        at the end of round $2+t$ during the execution of $\mathcal{X}$ with all messages $\mu \in U$ on $E_{12}$.
    \end{lemma}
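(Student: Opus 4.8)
The plan is to treat the three ranges $t\in\{0,1,2\}$, $t=3$, and $t=4$ separately, since each follows from something already in hand. Level $2$ lies in the range $2\le\ell\le D-2$ of $E_{12}$ (as $D-2=10$), and by property $\textit{P}(k)$ the level-$2$ nodes wake up at round $2$, so all the rounds in question are ``after waking up.''

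For $0\le t\le 2$ I would invoke Lemma \ref{lemma:k_values:r_at_least_4_helper} directly: it already asserts that the level-$2$ nodes are in the same state at the end of round $2+t$ for \emph{all} $\mu\in\{1,\ldots,w_{k-1}+1\}$, and since $U\subseteq\{1,\ldots,w_{k-1}+1\}$ this holds in particular for every $\mu\in U$. Concretely these common states are $\mathbf{c}_{\mathcal{X}}(2,0)$, $\mathbf{c}_{\mathcal{X}}(2,1)$, and $\mathbf{c}_{\mathcal{X}}(2,2)$ at the ends of rounds $2$, $3$, $4$. For $t=3$ the claim is immediate from the definition of $U$: $U$ is precisely the set of messages $\mu$ for which the level-$2$ nodes are in state $\mathbf{c}_{\mathcal{X}}(2,3)$ at the end of round $5$ during the execution with message $\mu$ on $E_{12}$, so all messages in $U$ leave these nodes in the common state $\mathbf{c}_{\mathcal{X}}(2,3)$.

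The case $t=4$ is the only one needing a genuine (though short) argument, and it is where I would be most careful. By the standing assumption nodes in state $\mathbf{c}_{\mathcal{X}}(2,3)$ transmit, and a transmitting node gets no feedback from the environment; hence its transition is determined by its current state alone, independent of the message and of the operations of its neighbours. Thus $\mathbf{c}_{\mathcal{X}}(2,3)$ has a unique successor state, which the level-$2$ nodes enter at the end of round $6=2+4$ for every $\mu\in U$. Combining the three cases gives the claim for all $0\le t\le 4$. There is no real obstacle here: the lemma is a consolidation step, and the single point that must be made explicit is the deterministic transition out of the transmitting state $\mathbf{c}_{\mathcal{X}}(2,3)$, which is exactly what stops the level-$2$ states from diverging at round $6$ despite the messages being different.
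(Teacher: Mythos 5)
Your proposal is correct and follows essentially the same argument as the paper, which establishes this lemma in the text immediately preceding it: rounds $2$ through $4$ come from Lemma \ref{lemma:k_values:r_at_least_4_helper}, round $5$ from the definition of $U$, and round $6$ from the fact that nodes in the transmitting state $\mathbf{c}_{\mathcal{X}}(2,3)$ receive no feedback and hence have a unique successor state. Your write-up simply makes the three-case decomposition explicit, which matches the paper's reasoning exactly.
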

    \noindent
    Hence, $\mathcal{X}$ does not satisfy the second condition of property $\textit{P}(5)$.
    Since $\mathcal{X}$ has property $\textit{P}(k)$, it follows 
    that $k \geq 6$.
    
    \medskip
    
   Now, we show that, during the execution of $\mathcal{X}$ on $E_{12}$ with message $\mu \in U$, 
    the nodes at level $3$ listen at round $7$.
    \begin{lemma}   \label{lemma:reduction_2}
        Nodes in state $\mathbf{c}_{\mathcal{X}}(3,2)$ 
        transition to state 
        $\mathbf{c}_{\mathcal{X}}'(3,3)$ after receiving a collision signal.
    \end{lemma}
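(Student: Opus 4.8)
The plan is to argue by contradiction, assuming that nodes in state $\mathbf{c}_{\mathcal{X}}(3,2)$ transition to the \emph{transmitting} state $\mathbf{c}_{\mathcal{X}}(3,3)$ (rather than the listening state $\mathbf{c}_{\mathcal{X}}'(3,3)$) upon receiving a collision signal, and then to manufacture a set of $w_{k-4}+1 = w_{k-5+1}+1$ messages that the nodes at level $2$ cannot distinguish within $5$ rounds of waking up, to which Lemma~\ref{lemma:construction_of_X''} can be applied with $\ell_0=2$ and $t_0=5$ to contradict the minimality of $k$.

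First I would pin down what the level-$3$ nodes receive at round $6$ when $\mu \in U$. Since $\mu \in U$, the level-$2$ nodes are in the transmitting state $\mathbf{c}_{\mathcal{X}}(2,3)$ at the end of round $5$, so both of them transmit at round $6$; because every level-$3$ node is adjacent to both level-$2$ nodes, each level-$3$ node receives a collision signal at round $6$. As the level-$3$ nodes are in state $\mathbf{c}_{\mathcal{X}}(3,2)$ and listen at round $6$, the state they reach at the end of round $6$ is exactly the collision-successor of $\mathbf{c}_{\mathcal{X}}(3,2)$; under the contradiction hypothesis this is $\mathbf{c}_{\mathcal{X}}(3,3)$, so the level-$3$ nodes transmit at round $7$ in every execution with $\mu \in U$ on $E_{12}$.

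Next I would track the level-$2$ nodes through round $7$. By Lemma~\ref{lemma:k_values:r_at_least_6_helper}, the level-$2$ nodes are in a common state at the end of round $6$ for all $\mu \in U$, and therefore perform the same operation at round $7$. If that operation is to transmit, they move to a single successor state, identical for all $\mu \in U$. If that operation is to listen, then both level-$3$ nodes transmit at round $7$ (by the previous paragraph), so each level-$2$ node has two transmitting neighbours and receives a collision signal, again moving to a single successor state, identical for all $\mu \in U$. In either case the level-$2$ nodes are in the same state at the end of round $7$ for all $\mu \in U$, so together with Lemma~\ref{lemma:k_values:r_at_least_6_helper} they are in the same states at the end of each of rounds $2$ through $7$, i.e.\ for all $0 \le t \le 5$.

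Finally I would invoke the reduction. Since $|U| \ge w_{k-4}+1$, choose $S \subseteq U$ with $|S| = w_{k-4}+1 = w_{k-5+1}+1$; as $w_{k-4}+1 < w_{k-1}+1$, this $S$ is a proper subset of $\{1,\ldots,w_{k-1}+1\}$. The level-$2$ nodes cannot distinguish the messages of $S$ within $5$ rounds after waking up on $E_{12}$, so Lemma~\ref{lemma:construction_of_X''} with $\ell_0=2$ and $t_0=5$ yields an algorithm with property $\textit{P}(k-5+2)=\textit{P}(k-3)$. Because $k \ge 6$, we have $3 \le k-3 < k$, contradicting the choice of $k$ as the smallest value admitting an algorithm with property $\textit{P}(k)$. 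Hence the collision-successor of $\mathbf{c}_{\mathcal{X}}(3,2)$ must be $\mathbf{c}_{\mathcal{X}}'(3,3)$. The main obstacle is the case analysis at round $7$: it is essential that \emph{both} possible operations of the level-$2$ nodes collapse to a single common successor, and in the listening case this relies crucially on the contradiction hypothesis forcing the level-$3$ nodes to transmit and thereby generate the collision that removes any message dependence.
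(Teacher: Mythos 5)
Your proof is correct and follows essentially the same route as the paper: assume the collision-successor of $\mathbf{c}_{\mathcal{X}}(3,2)$ is the transmitting state $\mathbf{c}_{\mathcal{X}}(3,3)$, deduce that level-$3$ nodes transmit at round $7$ for every $\mu \in U$, observe that the level-$2$ nodes then reach a common state at the end of round $7$ whether they transmit or listen (the forced collision being the key point), and apply Lemma~\ref{lemma:construction_of_X''} with $\ell_0=2$, $t_0=5$ to obtain property $\textit{P}(k-3)$ and contradict the minimality of $k$. You are in fact slightly more explicit than the paper in verifying the side conditions (choosing $S \subseteq U$ of size exactly $w_{k-4}+1$, noting $S \subsetneq \{1,\ldots,w_{k-1}+1\}$, and using $k \geq 6$ so that $k-3 \geq 3$), which the paper leaves implicit.
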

    \begin{proof}
        To obtain a contradiction, assume that nodes in state $\mathbf{c}_{\mathcal{X}}(3,2)$ transition to state $\mathbf{c}_{\mathcal{X}}(3,3)$ after receiving a collision signal. 
        We will construct an algorithm that has property $\textit{P}(k-3)$.
        By definition of $k$, no such algorithm exists.
        Thus, we obtain a contradiction.
        
        For all $\mu \in U$,
        during the execution of $\mathcal{X}$ with message $\mu$ on $E_{12}$,
        at the end of round $5$,
        the nodes at level $2$ are in state $\mathbf{c}_{\mathcal{X}}(2,3)$ 
        and 
        the nodes at level $3$ are in state $\mathbf{c}_{\mathcal{X}}(3,2)$.
        Nodes in state $\mathbf{c}_{\mathcal{X}}(2,3)$ transmit 
        and nodes in state $\mathbf{c}_{\mathcal{X}}(3,2)$ listen.
        Thus, by assumption, the nodes at level $3$ are in state $\mathbf{c}_{\mathcal{X}}(3,3)$ at the end of round $6$ and, hence, they transmit at round $7$.
        
        Let $\mathbf{c}_{\mathcal{X}}(2,4)$ be the state to which nodes in state $\mathbf{c}_{\mathcal{X}}(2,3)$ transition after transmitting.
        If nodes in state $\mathbf{c}_{\mathcal{X}}(2,4)$ transmit, then let $\mathbf{c}_{\mathcal{X}}(2,5)$ be the state to which they transition after transmitting.
        If nodes in state $\mathbf{c}_{\mathcal{X}}(2,4)$ listen, then let $\mathbf{c}_{\mathcal{X}}(2,5)$ be the state to which they transition after receiving a collision signal. 
        For all $\mu \in U$,
        during the execution of $\mathcal{X}$ with message $\mu$ on $E_{12}$,
        since the nodes at level $3$ transmit at round $7$, the nodes at level $2$ receive a collision signal at round $7$ if they are listening and, hence, they are in
        state $\mathbf{c}_{\mathcal{X}}(2,5)$ at the end of round 7, whether they listen or transmit at round 7.
        Since nodes at level $2$ are in state $\mathbf{c}_{\mathcal{X}}(2,t)$ at the end of round $2+t$ for $0 \leq t \leq 2$, during the execution of $\mathcal{X}$ with any message $\mu \in \{1,\ldots, w_{k-1}+1\}$ on $E_{12}$, it follows that, 
        for all $0 \leq t \leq 5$,
        the nodes at level $2$ are in state $\mathbf{c}_{\mathcal{X}}(2,t)$ at the end of round $2+t$ during the execution of $\mathcal{X}$ with
        any message $\mu \in U$ on $E_{12}$.
        Thus, the precondition of Lemma \ref{lemma:construction_of_X''} holds for $\ell_0=2$ and $t_0=5$.
        Hence, there is an algorithm with property $\textit{P}(k-3)$.
    \end{proof}

    We know that at the end of round $5$ nodes at level $4$ are in state $\mathbf{c}_{\mathcal{X}}(4,1)$ and at the end of round $6$ they are in state $\mathbf{c}_{\mathcal{X}}(4,2)$. 
    At round $6$, either they listen and receive a collision signal or they transmit.
    Now we show that they never transmit at round $6$.
    \begin{lemma} \label{lemma:q_4_listen}
        Nodes in state $\mathbf{c}_{\mathcal{X}}(4,1)$ listen.
    \end{lemma}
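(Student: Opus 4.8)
The plan is to argue by contradiction: suppose the nodes in state $\mathbf{c}_{\mathcal{X}}(4,1)$ transmit, so that on $E_{12}$ the two nodes at level $4$ both transmit at round $6$ during the execution of $\mathcal{X}$ with every message $\mu \in \{1,\ldots, w_{k-1}+1\}$ (recall that both level-$4$ nodes are in state $\mathbf{c}_{\mathcal{X}}(4,1)$ at the end of round $5$, regardless of the message). I would then track the level-$3$ nodes at round $6$ and show that they are forced into the \emph{same} state for \emph{all} messages, which is far more messages than Lemma \ref{lemma:reduction_1} permits.

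First I would pin down what level $3$ does at round $6$. The level-$3$ nodes are in state $\mathbf{c}_{\mathcal{X}}(3,2)$ at the end of round $5$, and nodes in state $\mathbf{c}_{\mathcal{X}}(\ell,2)$ listen, so they listen at round $6$. Each level-$3$ node is adjacent to both level-$4$ nodes, so under the contradiction hypothesis it has at least two transmitting neighbours and therefore receives a collision signal at round $6$. The key point is that this conclusion is independent of the message and of what level $2$ does: even when the level-$2$ nodes are silent (for $\mu \in U'$, where they are in the listening state $\mathbf{c}_{\mathcal{X}}'(2,3)$), the two transmissions from level $4$ already produce a collision.

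Next, by Lemma \ref{lemma:reduction_2}, nodes in state $\mathbf{c}_{\mathcal{X}}(3,2)$ transition to state $\mathbf{c}_{\mathcal{X}}'(3,3)$ after receiving a collision signal. Hence the level-$3$ nodes are in state $\mathbf{c}_{\mathcal{X}}'(3,3)$ at the end of round $6 = 3+3$ during the execution of $\mathcal{X}$ with every message $\mu \in \{1,\ldots, w_{k-1}+1\}$ on $E_{12}$. So all $w_{k-1}+1$ messages leave level $3$ in a common state at the end of round $3+3$. This contradicts Lemma \ref{lemma:reduction_1} (applied with $\ell = 3$), which permits at most $w_{k-2}$ messages to share the level-$3$ state at the end of round $3+3$, since $w_{k-1}+1 > w_{k-2}$. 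This contradiction shows that the nodes in state $\mathbf{c}_{\mathcal{X}}(4,1)$ cannot transmit, and therefore they listen.

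The main thing to get right is the collision bookkeeping at level $3$: I must verify that the collision at round $6$ arises purely from the two level-$4$ transmissions, so that it is forced for \emph{every} message (including those in $U'$, where level $2$ stays silent and would otherwise leave level $3$ with nothing), and that the level-$3$ nodes are genuinely listening at round $6$ rather than transmitting. Once those two facts are established, the message-independence of level $3$'s round-$6$ state is immediate and Lemma \ref{lemma:reduction_1} delivers the contradiction. An alternative to invoking Lemma \ref{lemma:reduction_1} directly would be to feed $\ell_0 = 3$, $t_0 = 3$, and a subset of size $w_{k-2}+1$ into Lemma \ref{lemma:construction_of_X''} to build an algorithm with property $\textit{P}(k-1)$, contradicting the minimality of $k$; but the route through Lemma \ref{lemma:reduction_1} is shorter.
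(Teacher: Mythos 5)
Your proposal is correct and follows essentially the same route as the paper: assume the level-$4$ nodes transmit at round $6$, observe that the level-$3$ nodes (listening in state $\mathbf{c}_{\mathcal{X}}(3,2)$) then receive a collision signal regardless of the message, conclude via Lemma \ref{lemma:reduction_2} that they land in the common state $\mathbf{c}_{\mathcal{X}}'(3,3)$ at the end of round $6$, and contradict Lemma \ref{lemma:reduction_1} since $w_{k-1}+1>w_{k-2}$. Your explicit bookkeeping (that the collision is forced by the two level-$4$ transmissions alone, independent of whether level $2$ is silent) is exactly the point the paper states more tersely.
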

    \begin{proof}
        To obtain a contradiction, suppose nodes in state $\mathbf{c}_{\mathcal{X}}(4,1)$ transmit.
        For all $\mu \in \{1,\ldots, w_{k-1}+1\}$, at the end of round $5$ during the execution of $\mathcal{X}$ with message $\mu$ on $E_{12}$, 
        the nodes at level $3$ are in state $\mathbf{c}_{\mathcal{X}}(3,2)$ and the nodes at level $4$ are in state $\mathbf{c}_{\mathcal{X}}(4,1)$.
        Since nodes in state $\mathbf{c}_{\mathcal{X}}(4,1)$ transmit, the nodes at level $4$ transmit at round $6$ and the nodes at level $3$ receive a collision signal.
        Therefore, for all $\mu \in \{1,\ldots, w_{k-1}+1\}$, during the execution of $\mathcal{X}$ with message $\mu$ on $E_{12}$, the nodes at level $3$ are in state $\mathbf{c}_{\mathcal{X}}'(3,3)$ at the end of round $6$.
        However, this contradicts Lemma \ref{lemma:reduction_1} because $w_{k-1}+1>w_{k-2}$.
    \end{proof}
    
    By definition of $U'$, for all $\mu \in U'$, at the end of round $5$ during the execution of $\mathcal{X}$ with message $\mu$ on $E_{12}$, 
    the nodes at level $2$ are in state $\mathbf{c}_{\mathcal{X}}'(2,3)$,
    the nodes at level $3$ are in state $\mathbf{c}_{\mathcal{X}}(3,2)$, 
    and the nodes at level $4$ are in state $\mathbf{c}_{\mathcal{X}}(4,1)$.
    Since nodes in state $\mathbf{c}_{\mathcal{X}}'(2,3)$ listen, the nodes at level $2$ listen at round $6$.
    By Lemma \ref{lemma:q_4_listen}, the nodes at level $4$ listen at round $6$.
    Thus, the nodes at level $3$ listen in state $\mathbf{c}_{\mathcal{X}}(3,2)$ at round $6$ and receive nothing.
    By Lemma \ref{lemma:reduction_2}, nodes in state $\mathbf{c}_{\mathcal{X}}(3,2)$ transition to state $\mathbf{c}_{\mathcal{X}}'(3,3)$ after receiving a collision signal and transition to state $\mathbf{c}_{\mathcal{X}}(3,3)$ after receiving nothing.
    Therefore, we have the following observation.
    
    \begin{observation} \label{observation:k_values:V'_c(3,3)}
        For all $\mu \in U'$, the nodes at level $3$ are in state $\mathbf{c}_{\mathcal{X}}(3,3)$ at the end of round $6$ during the execution of $\mathcal{X}$ with message $\mu$ on $E_{12}$.
    \end{observation}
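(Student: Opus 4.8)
The plan is to fix an arbitrary message $\mu \in U'$ and track the states of the nodes at levels $2$, $3$, and $4$ through round $6$ of the execution of $\mathcal{X}$ with message $\mu$ on $E_{12}$. First I would record the states at the end of round $5$. The nodes at level $4$ wake up at round $4$ and transmit, so they are in state $\mathbf{c}_{\mathcal{X}}(4,1)$ at the end of round $4+1=5$. The nodes at level $3$ are in state $\mathbf{c}_{\mathcal{X}}(3,2)$ at the end of round $3+2=5$, using the established fact that the nodes at any level $\ell \le D-2$ are in state $\mathbf{c}_{\mathcal{X}}(\ell,2)$ at the end of round $\ell+2$ (valid here since $3 \le 12-2$). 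Finally, since $\mu \notin U$ and only the two states $\mathbf{c}_{\mathcal{X}}(2,3)$ and $\mathbf{c}_{\mathcal{X}}'(2,3)$ are reachable by the level-$2$ nodes from $\mathbf{c}_{\mathcal{X}}(2,2)$, the nodes at level $2$ must be in the listening state $\mathbf{c}_{\mathcal{X}}'(2,3)$ at the end of round $5$.

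Next I would determine the operations at round $6$. By our convention $\mathbf{c}_{\mathcal{X}}'(2,3)$ is the listening state, so the nodes at level $2$ listen at round $6$; and by Lemma \ref{lemma:q_4_listen}, the nodes at level $4$, being in state $\mathbf{c}_{\mathcal{X}}(4,1)$, also listen at round $6$. The only neighbours of the level-$3$ nodes are at levels $2$ and $4$, and neither of these sets transmits at round $6$. Hence the level-$3$ nodes, which listen in state $\mathbf{c}_{\mathcal{X}}(3,2)$, receive nothing at round $6$.

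To finish I would invoke Lemma \ref{lemma:reduction_2}, which states that the nodes in state $\mathbf{c}_{\mathcal{X}}(3,2)$ transition to $\mathbf{c}_{\mathcal{X}}'(3,3)$ upon receiving a collision signal. Since listening from $\mathbf{c}_{\mathcal{X}}(3,2)$ can lead only to the two states $\mathbf{c}_{\mathcal{X}}(3,3)$ and $\mathbf{c}_{\mathcal{X}}'(3,3)$, receiving nothing must send the level-$3$ nodes to the complementary state $\mathbf{c}_{\mathcal{X}}(3,3)$. Thus the nodes at level $3$ are in state $\mathbf{c}_{\mathcal{X}}(3,3)$ at the end of round $6$, as claimed. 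This is a routine state-tracking deduction rather than a genuinely hard argument; its only substantive inputs are Lemma \ref{lemma:q_4_listen} (which is what forces level $4$ to listen, so that level $3$ sees silence rather than a collision) and Lemma \ref{lemma:reduction_2} (which resolves the collision branch, leaving the silence branch to $\mathbf{c}_{\mathcal{X}}(3,3)$). The only point requiring care is keeping the naming conventions straight, namely that $\mathbf{c}_{\mathcal{X}}'(2,3)$ is the listening state while $\mathbf{c}_{\mathcal{X}}'(3,3)$ is the collision-successor of $\mathbf{c}_{\mathcal{X}}(3,2)$, so that the two complementary cases line up correctly.
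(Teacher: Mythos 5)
Your proof is correct and follows essentially the same route as the paper: it establishes the round-5 states (level 2 in $\mathbf{c}_{\mathcal{X}}'(2,3)$ by definition of $U'$, level 3 in $\mathbf{c}_{\mathcal{X}}(3,2)$, level 4 in $\mathbf{c}_{\mathcal{X}}(4,1)$), uses the listening convention and Lemma \ref{lemma:q_4_listen} to conclude that level 3 hears silence at round 6, and then uses Lemma \ref{lemma:reduction_2} plus the two-state dichotomy to place level 3 in $\mathbf{c}_{\mathcal{X}}(3,3)$. This matches the paper's argument step for step, including the final inference that the silence-successor must be the complementary state.
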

    

    Now we improve the upper bound on the cardinality of $U$ and $U'$. 
    \begin{lemma}  \label{lemma:reduction_3}
        $|U|,|U'| \leq  w_{k-3}$.
    \end{lemma}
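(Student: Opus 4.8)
The plan is to sharpen the bounds $|U|,|U'|\le w_{k-2}$ to $w_{k-3}$ by invoking Lemma \ref{lemma:construction_of_X''} with the larger parameter $t_0=4$ instead of $t_0=3$. This yields an algorithm with property $\textit{P}(k-t_0+2)=\textit{P}(k-2)$, which contradicts the minimality of $k$, since $k-2\ge 4\ge 3$ (recall $k\ge 6$). For $t_0=4$ the precondition of Lemma \ref{lemma:construction_of_X''} requires a set $S$ of exactly $w_{k-t_0+1}+1=w_{k-3}+1$ messages that the nodes at some level $\ell_0\in\{2,3,4\}$ cannot distinguish over the rounds $\ell_0,\ell_0+1,\ldots,\ell_0+4$. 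So for each of $U$ and $U'$ I would assume, for contradiction, that it has more than $w_{k-3}$ elements, extract a subset of size exactly $w_{k-3}+1$ (a proper subset of $\{1,\ldots,w_{k-1}+1\}$ since $w_{k-3}<w_{k-1}$), and verify these five rounds of indistinguishability at a suitable level.

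For $U$ the needed indistinguishability is already available: Lemma \ref{lemma:k_values:r_at_least_6_helper} states that for all $0\le t\le 4$ the nodes at level $2$ are in the same state at the end of round $2+t$ during the execution of $\mathcal{X}$ with any message in $U$ on $E_{12}$. Hence, taking $\ell_0=2$, $t_0=4$, and any $S\subseteq U$ with $|S|=w_{k-3}+1$, Lemma \ref{lemma:construction_of_X''} produces an algorithm with property $\textit{P}(k-2)$, contradicting the choice of $k$. Therefore $|U|\le w_{k-3}$.

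For $U'$ the analogous argument uses level $3$ and needs one extra round of indistinguishability beyond what is immediate. By Lemma \ref{lemma:k_values:r_at_least_4_helper} the nodes at level $3$ occupy the message-independent states $\mathbf{c}_{\mathcal{X}}(3,0),\mathbf{c}_{\mathcal{X}}(3,1),\mathbf{c}_{\mathcal{X}}(3,2)$ at the ends of rounds $3,4,5$, and by Observation \ref{observation:k_values:V'_c(3,3)} they are all in state $\mathbf{c}_{\mathcal{X}}(3,3)$ at the end of round $6$ for every $\mu\in U'$; this gives indistinguishability for $0\le t\le 3$. The key step is to push it to $t=4$: since nodes in state $\mathbf{c}_{\mathcal{X}}(3,3)$ transmit at round $7$ (the standing assumption for $\ell=3$), and a transmitting node receives no feedback, there is a unique successor state, so the nodes at level $3$ are again in a single common state at the end of round $7$ for every $\mu\in U'$. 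With five rounds ($0\le t\le 4$) of indistinguishability at $\ell_0=3$ secured, I would take any $S\subseteq U'$ with $|S|=w_{k-3}+1$ and apply Lemma \ref{lemma:construction_of_X''} with $t_0=4$ to obtain property $\textit{P}(k-2)$, again a contradiction; hence $|U'|\le w_{k-3}$.

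The only genuine content is the extension from round $6$ to round $7$ for $U'$, namely recognising that the transmit property of $\mathbf{c}_{\mathcal{X}}(3,3)$ forces a single message-independent successor state. Everything else is bookkeeping against the already-established indistinguishability facts (Lemma \ref{lemma:k_values:r_at_least_6_helper} and Observation \ref{observation:k_values:V'_c(3,3)}) and against the parameter constraints $2\le\ell_0\le4$, $3\le t_0\le5$, and $|S|=w_{k-t_0+1}+1$ of Lemma \ref{lemma:construction_of_X''}.
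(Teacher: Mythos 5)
Your proof is correct and follows essentially the same route as the paper: both establish indistinguishability at level $2$ (for $U$) and level $3$ (for $U'$) over the five rounds $0 \leq t \leq 4$ and then invoke Lemma \ref{lemma:construction_of_X''} with $t_0=4$ to produce an algorithm with property $\textit{P}(k-2)$, contradicting the minimality of $k$. The only cosmetic difference is that the paper re-derives the round-$\ell'+4$ state for both $U$ and $U'$ uniformly via the unique-successor-after-transmitting argument applied to $\mathbf{c}_{\mathcal{X}}(\ell',3)$, whereas you cite Lemma \ref{lemma:k_values:r_at_least_6_helper} directly for $U$ and perform that argument only for $U'$.
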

    \begin{proof}
        To obtain a contradiction, suppose $|U| \geq w_{k-3} + 1$ or $|U'| \geq w_{k-3} + 1$.
        By definition of $U$, for all $\mu \in U$, the nodes at level $2$ are in state $\mathbf{c}_{\mathcal{X}}(2,3)$ at the end of round $5$ during the execution of $\mathcal{X}$ with message $\mu$ on $E_{12}$.
        By Observation \ref{observation:k_values:V'_c(3,3)}, for all $\mu \in U'$, the nodes at level $3$ are in state $\mathbf{c}_{\mathcal{X}}(3,3)$ at the end of round $6$ during the execution of $\mathcal{X}$ with message $\mu$ on $E_{12}$.
        
        Let $S$ be one of the sets $U$ and $U'$.
        If $S=U$, then let $\ell'=2$. Otherwise, let $\ell'=3$.
        Since nodes in state $\mathbf{c}_{\mathcal{X}}(\ell',3)$ transmit, there can be only one state to which nodes in state $\mathbf{c}_{\mathcal{X}}(\ell',3)$ transition after transmitting. 
        We call this state $\mathbf{c}_{\mathcal{X}}(\ell',4)$.
        Since nodes at level $\ell'$ are in state $\mathbf{c}_{\mathcal{X}}(\ell',t)$ at the end of round $\ell'+t$ for $0 \leq t \leq 2$, during the execution of $\mathcal{X}$ with any message $\mu \in \{1,\ldots, w_{k-1}+1\}$ on $E_{12}$, it follows that,
        for all $0 \leq t \leq 4$,
        the nodes at level $\ell'$ are in state $\mathbf{c}_{\mathcal{X}}(\ell',t)$ at the end of round $\ell'+t$ during the execution of $\mathcal{X}$ with 
        any message $\mu \in S$ on $E_{12}$.
        Thus, the precondition of Lemma \ref{lemma:construction_of_X''} holds for $\ell_0=\ell'$ and $t_0=4$.
        Hence, there is an algorithm with property $\textit{P}(k-2)$.
        By definition of $k$, no such algorithm exists.
        Thus, $|S| \leq  w_{k-3}$.
     \end{proof}

    Recall that $|U|+|U'|=w_{k-1}+1$.
    Since $|U|,|U'| \leq  w_{k-3}$, it follows that $|U|+|U'| \leq w_{k-3}+w_{k-3} = w_{k-3}+w_{k-4}+w_{k-6}$.
    Note that, $w_0,w_1,\ldots,$ is a non-decreasing sequence and, hence, $w_{k-6} \leq  w_{k-5}$.
    Thus, $|U|+|U'| \leq w_{k-3}+w_{k-4}+w_{k-5} = w_{k-2}+w_{k-4} = w_{k-1}$.
    This is a contradiction.
    Thus, we conclude the proof of Theorem \ref{theorem:main_lower_bound}.

\section{Future Work}
    In this paper, we presented an improved algorithm to broadcast a message from a finite set of values that works on the beeping model.
    We proved an exactly matching lower bound in the radio broadcast model with collision detection.
    This shows that the ability to send
    arbitrarily long messages instead of just a collision signal does not improve the round complexity.
    
    Our algorithm relies on an encoding mechanism that requires the set of possible messages to be finite and known in advance.
    One way to extend our algorithm to handle an infinite set of possible messages
    is to have the source node first broadcast $r$ using another algorithm 
    (for example, using {\em Beep Waves} to broadcast a self-delimiting encoding of $r$
    \cite{CZUMAJ2019}). 
    A natural question is whether this approach
    or a recursive version of this approach
    is optimal.

    A variant of the broadcast problem is the acknowledged broadcast problem in which the source node needs to eventually be informed that all nodes have learned the message.
    Chlebus, G{\k{a}}sieniec, Gibbons, Pelc, and Rytter \cite{Bogdan2002} provided a deterministic acknowledged broadcast algorithm in the radio broadcast model with collision detection, assuming that each node has a distinct identifier.
    A possible approach to solve the acknowledged broadcast problem in the anonymous beeping model
    is to extend our approach by modifying the algorithm as follows.
    The extension relies on the fact that, in our algorithm, all nodes at each level learn the last bit of the message at the same round, which is one round later than the nodes in the previous level.
    Each node $v$ that is not connected to a node at a higher level terminates immediately after learning the last bit of the message.
    Now consider any node $v$ at level $\ell \geq 1$ that is connected to a node at level $\ell+1$.
    Node $v$ starts executing the acknowledgment process one round after relaying the last bit of the message to its neighbours at level $\ell+1$.
    It repeatedly executes phases of three rounds until it receives an acknowledgement from nodes at level $\ell+1$, which causes it to terminate.
    In the first round, $v$ beeps to inform the nodes at level $\ell-1$ that it has not yet received an acknowledgement. 
    Once $v$ has received an acknowledgement, 
    it terminates, so it does not beep in the first round of this phase. 
    This serves as an acknowledgement for the nodes at level $\ell-1$.
    In the second round, $v$ listens to detect whether the nodes at level $\ell+1$ have received an acknowledgement.
    Specifically, if $v$ receives nothing, it means that the nodes at level $\ell+1$ have received the acknowledgement.
    Otherwise, $v$ receives a beep, in which case it idles for one round and, then, starts executing the next phase.
    An interesting extension of our paper would be to see if this modified algorithm has optimal round complexity for the acknowledged broadcast problem in anonymous radio networks with collision detection. 
    
    In addition to round complexity, energy complexity has also received considerable attention in radio broadcast models.
    Energy complexity is generally defined as
    the maximum, over all nodes, of the number of rounds in which the node transmits or listens.
    Chang, Dani, Hayes, He, Li, Pettie \cite{chang2018energy} provided broadcast algorithms in the radio broadcast model with collision detection.
    In their model, they assumed that nodes have distinct identifiers and are given the number of nodes in the network, the maximum number of neighbours of any node in the network, and the maximum distance between any two nodes in the network.
    By modifying our encoding mechanism, we can improve the energy complexity of our algorithm. 
    In particular, instead of using set $W_i$ we can use the set $W_{i}'$, which is defined as follows.
    Let $W_{1}' = \{0\}$, $W_{2}' = \{1\}$, $W_{3}' = \{0,1\}$, $W_{4}' = \{00,01,1\}$ and, 
    for all $i \geq 5$, $W_{i}'= 0 \cdot W_{i-1}' \cup 1 \cdot W_{i-2}'$. 
    For all $i\geq 1$, let $w_{i}'=|W_i|$.
    Then, $w_{1}'=w_{2}'=1$ and, for all $i\geq 3$, $w_{i}'=w_{i-1}'+ w_{i-2}'$.
    Note that, $w_{1}', w_{2}', w_{3}', \ldots$ is the Fibonacci sequence.
    The optimal energy complexity of the broadcast problem remains open.
    

\section*{Acknowledgments}
I would like thank my supervisor, Faith Ellen, for her patience and support throughout this project. 
Her insightful feedback and guidance brought my work to a much higher level.
I would also like to thank the anonymous reviewers for their time and helpful comments.

\printbibliography

@article{Ellen2019,
  title={Constant-length labeling schemes for deterministic radio broadcast},
  author={Ellen, Faith and Gorain, Barun and Miller, Avery and Pelc, Andrzej},
  journal={ACM Transactions on Parallel Computing},
  volume={8},
  number={3},
  pages={1--17},
  year={2021},
  publisher={ACM New York, NY}
}

@article{pelc2003,
  title={Broadcasting in undirected ad hoc radio networks},
  author={Kowalski, Dariusz and Pelc, Andrzej},
  journal={Distributed Computing},
  volume={18},
  number={1},
  pages={43--57},
  year={2005},
  publisher={Springer}
}

@BOOK{narayanaBook,
author = {Narayana Pandita},
title = {Ganita Kaumudi},
year = {1356},
}

@article{Czumaj2003,
  title={Broadcasting algorithms in radio networks with unknown topology},
  author={Czumaj, Artur and Rytter, Wojciech},
  journal={Journal of Algorithms},
  volume={60},
  number={2},
  pages={115--143},
  year={2006},
  publisher={Elsevier}
}

@article{BARYEHUDA1992104,
  title={On the time-complexity of broadcast in multi-hop radio networks: An exponential gap between determinism and randomization},
  author={Bar-Yehuda, Reuven and Goldreich, Oded and Itai, Alon},
  journal={Journal of Computer and System Sciences},
  volume={45},
  number={1},
  pages={104--126},
  year={1992},
  publisher={Elsevier}
}

@article{Alon1991,
  title={A lower bound for radio broadcast},
  author={Alon, Noga and Bar-Noy, Amotz and Linial, Nathan and Peleg, David},
  journal={Journal of Computer and System Sciences},
  volume={43},
  number={2},
  pages={290--298},
  year={1991},
  publisher={Elsevier}
}

@article{wave_expansion_approach,
  title={The wave expansion approach to broadcasting in multihop radio networks},
  author={Chlamtac, Imrich},
  journal={IEEE Transactions on Communications},
  volume={39},
  number={3},
  pages={426--433},
  year={1991},
  publisher={IEEE}
}

@article{Mansour1993,
  title = {An $\Omega(D \cdot \log{\frac{N}{D}})$ Lower Bound for Broadcast in Radio Networks},
  author={Kushilevitz, Eyal and Mansour, Yishay},
  journal={SIAM journal on Computing},
  volume={27},
  number={3},
  pages={702--712},
  year={1998},
  publisher={SIAM}
}

@inproceedings{Bogdan2000,
  title={Deterministic radio broadcasting},
  author={Chlebus, Bogdan and G{\k{a}}sieniec, Leszek and {\"O}stlin, Anna and Robson, John Michael},
  booktitle={International Colloquium on Automata, Languages, and Programming},
  pages={717--729},
  year={2000},
  organization={Springer}
}

@article{Bogdan2002,
  title={Deterministic broadcasting in ad hoc radio networks},
  author={Chlebus, Bogdan and G{\k{a}}sieniec, Leszek and Gibbons, Alan and Pelc, Andrzej and Rytter, Wojciech},
  journal={Distributed computing},
  volume={15},
  number={1},
  pages={27--38},
  year={2002},
  publisher={Springer}
}

@article{Chrobak2002,
  title={Fast broadcasting and gossiping in radio networks},
  author={Chrobak, Marek and Gasieniec, Leszek and Rytter, Wojciech},
  journal={Journal of Algorithms},
  volume={43},
  number={2},
  pages={177--189},
  year={2002},
  publisher={Elsevier}
}

@article{CLEMENTI2003,
  title={Distributed broadcast in radio networks of unknown topology},
  author={Clementi, Andrea and Monti, Angelo and Silvestri, Riccardo},
  journal={Theoretical Computer Science},
  volume={302},
  number={1-3},
  pages={337--364},
  year={2003},
  publisher={Elsevier}
}

@inproceedings{Bu2021,
  title={Wireless Broadcast with short labels},
  author={Bu, Gewu and Potop-Butucaru, Maria and Rabie, Mika{\"e}l},
  booktitle={International Conference on Networked Systems},
  pages={146--169},
  year={2020},
  organization={Springer}
}

@inproceedings{Ellen2020,
  title={Constant-length labelling schemes for faster deterministic radio broadcast},
  author={Ellen, Faith and Gilbert, Seth},
  booktitle={Proceedings of the 32nd ACM Symposium on Parallelism in Algorithms and Architectures},
  pages={213--222},
  year={2020}
}

@article{CZUMAJ2019,
  title={Communicating with beeps},
  author={Czumaj, Artur and Davies, Peter},
  journal={Journal of Parallel and Distributed Computing},
  volume={130},
  pages={98--109},
  year={2019},
  publisher={Elsevier}
}

@inproceedings{ghaffari2014near,
  title={Near optimal leader election in multi-hop radio networks},
  author={Ghaffari, Mohsen and Haeupler, Bernhard},
  booktitle={Proceedings of the twenty-fourth annual ACM-SIAM symposium on Discrete algorithms},
  pages={748--766},
  year={2013},
  organization={SIAM}
}

@article{Czumaj2018,
  title={Deterministic communication in radio networks},
  author={Czumaj, Artur and Davies, Peter},
  journal={SIAM Journal on Computing},
  volume={47},
  number={1},
  pages={218--240},
  year={2018},
  publisher={SIAM}
}

@article{Optimal_Deterministic_Broadcasting_in_Known_Topology_Radio_Networks,
  title={Optimal deterministic broadcasting in known topology radio networks},
  author={Kowalski, Dariusz and Pelc, Andrzej},
  journal={Distributed Computing},
  volume={19},
  number={3},
  pages={185--195},
  year={2007},
  publisher={Springer}
}

@article{Pelc2015,
  title={Deterministic broadcasting and gossiping with beeps},
  author={Hounkanli, Kokouvi and Pelc, Andrzej},
  journal={arXiv preprint arXiv:1508.06460},
  year={2015}
}

@misc{oeis,
  title={The online encyclopedia of integer sequences},
  author={Cloitre, Benoit},
  editor={Sloane, Neil},
  year={2002},
  url={http://oeis.org}
}

@inproceedings{chang2018energy,
  title={The energy complexity of broadcast},
  author={Chang, Yi-Jun and Dani, Varsha and Hayes, Thomas P and He, Qizheng and Li, Wenzheng and Pettie, Seth},
  booktitle={Proceedings of the 2018 ACM Symposium on Principles of Distributed Computing},
  pages={95--104},
  year={2018}
}

@article{ghaffari2015randomized,
  title={Randomized broadcast in radio networks with collision detection},
  author={Ghaffari, Mohsen and Haeupler, Bernhard and Khabbazian, Majid},
  journal={Distributed Computing},
  volume={28},
  number={6},
  pages={407--422},
  year={2015},
  publisher={Springer}
}

\newpage

\end{document}